\definecolor{mygreen}{RGB}{68,206,27}
\definecolor{myyellow}{RGB}{247,227,121}
\definecolor{myorange}{RGB}{242,161,52}
\definecolor{myred}{RGB}{229,31,31}
\newtheorem{definition}{Definition}
\numberwithin{equation}{section}
\newtheorem{theorem}{Theorem}[section]
\newtheorem{lemma}[theorem]{Lemma}
\newtheorem{proposition}[theorem]{Proposition}
\newtheorem{corollary}[theorem]{Corollary}
\DeclareMathOperator*{\argmin}{arg\,min}
\newcommand{\gen}[1][]{s_{#1}}
\newcommand{\dem}[1][]{t_{#1}}
\newcommand{\flow}[1][]{f_{#1}}
\newcommand{\flowTot}{\bm{f}}
\newcommand{\flowLim}[1]{\bar{f}_{#1}}
\newcommand{\flowLimVec}{\bm{\bar{f}}}
\newcommand{\PR}[1]{\mathbb{P}\left(#1\right)}
\newcommand{\R}[0]{\mathbb{R}}
\newcommand{\N}[0]{\mathbb{N}}
\newcommand{\E}[0]{\mathbb{E}}
\newcommand{\EE}[0]{\mathcal{E}}
\newcommand{\V}[0]{\mathcal{V}}
\newcommand{\G}[0]{\mathcal{G}}
\newcommand{\K}[0]{\mathcal{K}}
\newcommand{\nv}[0]{n_{_\V}}
\newcommand{\nee}[0]{n_{_\EE}}
\newcommand{\nk}[0]{n_{_\K}}
\newcommand{\demM}[0]{\bm{T}}
\newcommand{\genM}[0]{\bm{S}}
\newcommand{\flowM}[0]{\bm{F}}
\newcommand{\incM}[0]{\bm{B}}
\newcommand{\paretoI}[1]{X_{#1}}
\newcommand{\pareto}[0]{\bm{X}}
\newcommand{\flowCost}[0]{c_f}
\newcommand{\flowMOpt}[1]{\bm{F}^*(#1)}
\newcommand{\flowMStar}[0]{\bm{F}^*}
\newcommand{\ind}{\perp\!\!\!\!\perp}
\newcommand{\OO}[1]{\mathcal{O}\left(#1\right)}
\newcommand{\exc}[1]{\psi_{#1}}
\newcommand{\SPM}[0]{\bm{G}}
\newcommand{\SPTot}[0]{\bm{g}}
\newcommand{\End}[0]{^{(end)}}
\newcommand{\unitV}[1]{\bm{e}_{#1}}
\newcommand{\mytextcolor}[2]{%
  \ifx\relax#1\relax
    #2
  \else
    \textcolor{#1}{#2}%
  \fi
}
\newcommand{\hideRed}{%
  \renewcommand{\mytextcolor}[2]{%
    \ifx red##1
      \relax
    \else
      \textcolor{##1}{##2}%
    \fi
  }
}
\newif\ifhideRed
\let\oldtextcolor\textcolor
\renewcommand{\textcolor}[2]{%
  \ifhideRed
    \ifstrequal{#1}{red}{}{\oldtextcolor{#1}{#2}}%
  \else
    \oldtextcolor{#1}{#2}%
  \fi
}
\tikzstyle{startstop} = [rectangle, rounded corners, 
\tikzstyle{io} = [trapezium, 
\tikzstyle{process} = [rectangle, 
\tikzstyle{decision} = [diamond, 
\tikzstyle{arrow} = [thick,->,>=stealth]
\title{Emergence of Scale-Free Traffic Jams in Highway Networks:\\
\Large A Probabilistic Approach}
\author{
  \textbf{Agnieszka Janicka$^1$, Fiona Sloothaak$^{1}$, Maria Vlasiou$^{2}$, and Bert Zwart$^{1,3}$} \\\\
  $^1$\textit{Eindhoven Univeristy of Technology, 5612 AZ Eindhoven, the Netherlands}\\
  $^2$\textit{University of Twente, 7522 NB Enschede, the Netherlands}\\
  $^3$\textit{Centrum Wiskunde and Informatica, 1098 XG Amsterdam, the Netherlands}\\\\
}
\begin{document}
\maketitle

\begin{abstract}
\noindent Traffic congestion continues to escalate with urbanization and socioeconomic development, necessitating advanced modeling to understand and mitigate its impacts. In large-scale networks, traffic congestion can be studied using cascade models, where congestion not only impacts isolated segments, but also propagates through the network in a domino-like fashion. One metric for understanding these impacts is congestion cost, which is typically defined as the additional travel time caused by traffic jams. Recent data suggests that congestion cost exhibits a universal scale-free-tailed behavior. However, the mechanism driving this phenomenon is not yet well understood. To address this gap, we propose a stochastic cascade model of traffic congestion. We show that traffic congestion cost is driven by the scale-free distribution of traffic intensities. This arises from the catastrophe principle, implying that severe congestion is likely caused by disproportionately large traffic originating from a single location. We also show that the scale-free nature of congestion cost is robust to various congestion propagation rules, explaining the universal scaling observed in empirical data. These findings provide a new perspective in understanding the fundamental drivers of traffic congestion and offer a unifying framework for studying congestion phenomena across diverse traffic networks.
\end{abstract}

\flushbottom
\maketitle
\thispagestyle{empty}

\section{Introduction}
Traffic congestion is a persistent problem due to continued socioeconomic development and urbanization. Congestion is studied at multiple levels, from local analysis of intersections \cite{LIU2009412} and road segments \cite{roadSegmentCongestion}, through city-wide urban congestion \cite{Long_Gao_Ren_Lian_2008,
Ranjan2020}, to traffic jams in highway networks \cite{
salter1996highway, Avila_Mezić_2020}. These varying perspectives have led to many modeling approaches for congestion. Microscopic models focus on the dynamics and decision-making of individual drivers~\cite{mardiati2014review}, macroscopic models view the traffic flow from the perspective of fluid dynamics~\cite{lee2001macroscopic}, and mesoscopic models balance the micro and macro perspectives, aiming to retain the effects of individual behavior at a reduced modeling complexity \cite{Nagatani_2002}. 

Recently, network theory concepts, such as cascade models, have also been applied to study the global effects of congestion \cite{1,5,6,9}. Cascades (failures) appear in various complex systems, such as energy or water distribution systems \cite{Guo2017}, financial markets \cite{Huang_Vodenska_Havlin_Stanley_2013}, transportation networks \cite{Daqing_2014}, and more \cite{Watts_2002, Crucitti2004, Mahdi2017}. In addition, over the last 25 years, scale-free phenomena have been discovered in many real-life networks such as social networks, the internet, power networks, or airport networks \cite{barabasi1999emergence, int03, chen2001analysis, Guimer2004}. Recently, scale-free behavior has also been observed in congestion data from traffic networks \cite{ Chen_Lin_Yan_Liu_Liu_Li_2024, Zhang_Zeng_Li_Huang_Stanley_Havlin_2019, Serok_Havlin_Blumenfeld_Lieberthal_2022}. However, few insights exist on why this phenomenon emerges in traffic. In some studies, the scale-free phenomenon is attributed to self-organized criticality, although little supporting analysis is provided \cite{LAVAL2023104056, Nagatani_1995}. This paper offers the first and simple explanation on the emergence of scale-free traffic jams. 
A novel element of our approach lies in the application of a cascade model and probabilistic techniques to derive mathematically rigorous findings on the likelihood of large congestion.

 In this paper, we discover scale-free behavior in Dutch traffic jam data, as shown in Figure~\ref{fig:dataAnalysis}. Specifically, we use open access data on the length of traffic jams on highways from the National Road Traffic Data portal and we obtain that the empirical distribution of traffic jam lengths follows a scale-free distribution with scaling parameter $\alpha = 6.72$. This finding is alarming, as it indicates that large-scale traffic jams, while still rare, occur more often than conventional statistical laws would incur. Our study suggests that the scale-free nature of traffic jams could be inherited from the scale-free distribution of traffic intensities in the network. 
 \begin{figure}[t!] 
\begin{subfigure}{0.45\textwidth}
\includegraphics[width=\linewidth]{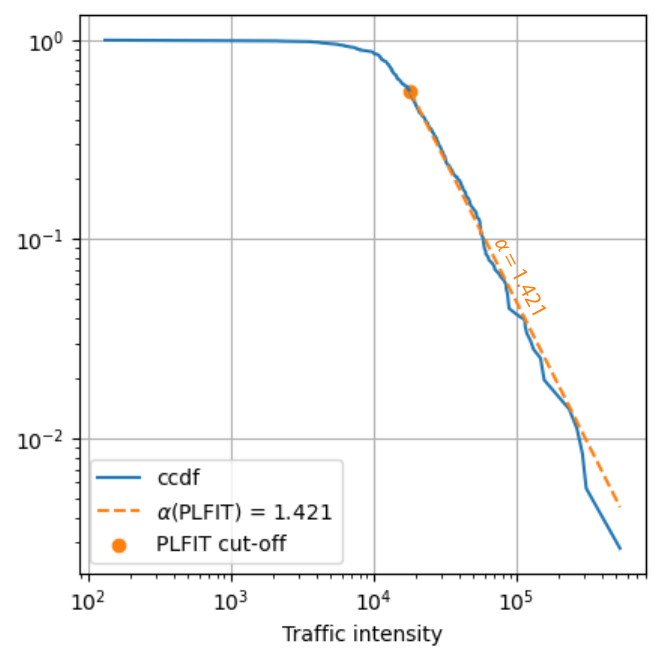}
\caption{CCDF plot of daily traffic intensity in Dutch municipalities measured in the number of car trips.} \label{fig:trafficVolumes}
\end{subfigure}\hspace*{\fill}
\begin{subfigure}{0.45\textwidth}
\includegraphics[width=\linewidth]{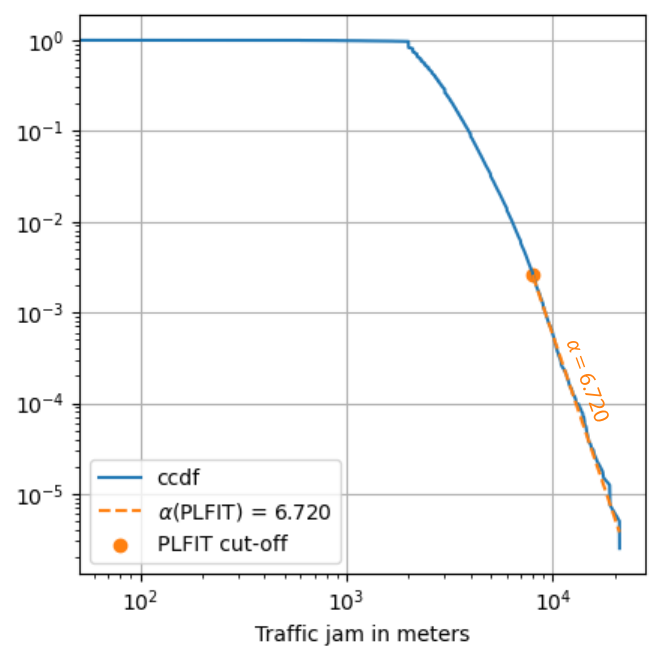}
\caption{CCDF plot of average length of traffic jams on Dutch highways measured in meters.} \label{fig:JamsDist}
\end{subfigure}
\caption{The traffic intensity and the average length of traffic jams on Dutch highways exhibit scale-free behavior. The full description of the data analysis is provided in Appendix~\ref{DataAnalysis}.}
\label{fig:dataAnalysis}
\end{figure}Notably, the distribution of city sizes, measured by the number of inhabitants, is often scale-free \cite{ROSEN1980165} and traffic intensity from a given location can be directly related to its population size. This correlation suggests that traffic intensity may inherit the scale-free property of city sizes. This is also consistent with West's scaling theory of cities \cite{Bettencourt_West_2010}, which postulates that the scale of urban processes, such as traffic intensity, is determined by the city's population size. In addition, our analysis of traffic intensities in the Netherlands reveals that they indeed have a scale-free tail (see Figure~\ref{fig:dataAnalysis}), which supports our novel hypothesis. Without a proper understanding of the formation of large traffic jams, it is difficult to anticipate their occurrences and devise strategies to prevent severe disruptions. Hence, understanding the underlying causes of large congestions becomes imperative. However, to the best of our knowledge, no study exists that explains the emergence of scale-free traffic jams in a systematic, methodological manner, which we address in this paper.

Cascade models are fundamental in studying congestion in traffic networks, particularly to facilitate the understanding and prediction of systemic failures \cite{Chen_Lin_Yan_Liu_Liu_Li_2024}, assess robustness \cite{1,9, 2}, measure vulnerability of network infrastructures \cite{7,8}, and develop congestion mitigation strategies \cite{5, 11,4}. These studies generally model traffic dynamics by means of travel intensities between origin-destination pairs and treat congestion as a complete failure of road segments \cite{1,6,2,7,8,11}. In contrast, our approach recognizes that congestion does not necessarily render a component unusable but may only lower its capacity, allowing for different levels of congestion. This approach, also followed in \cite{9}, reflects real-world behavior more accurately. Specifically, we assume that, in an initially stable network, a congestion cascade can occur due to an unexpected decrease in traffic capacity along a route -- e.g., in the form of a closed lane due to an accident. This may cause drivers to seek alternative routes, potentially causing additional congestion on those routes. This cycle of rerouting and additional congestion keeps on repeating, causing a cascade of road congestion, which results in profound implications on the system's behavior, frequently culminating in the disruption of its normal functioning \cite{DESHKAR2016}.

Flow distribution mechanisms model origin-destination traffic by describing how drivers traverse the network. Popular flow distribution mechanisms include the betweenness centrality-based mechanism \cite{6, 8, 3,Wang_Xu_Wu_2015}, the System Optimum (SO) \cite{9}, and the Wardrop User Equilibrium  (UE) \cite{1,9,2,7}. The UE approach ensures no driver can shorten their travel time by changing routes alone \cite{Kelly_Yudovina_2014}. This accurately reflects drivers behavior in highway networks, which is why we adopt this mechanism in our model. 

The impact of traffic jams is commonly defined as the additional travel time resulting from congestion \cite{7,4}. The travel time can be modeled using some flow cost function associated with the Wardrop UE flow problem, representing the cumulative travel time on all roads. This cost increases rapidly when the traffic flow exceeds the road capacity, effectively emulating the effects of congestion on travel times. Here, we define the congestion cost as the difference between the flow cost before and after the congestion cascade.

 This paper aims to explain the origin of the scale-free nature of traffic jams, which we also observe in Dutch traffic data. To this end, we propose a stochastic cascade model for highway congestion, detailed in Section~\ref{modDesc}. The model considers a graph $\G = (\V,\EE)$ representing the highway network, with vertices (cities) and edges (highways), as input. We assign a weight to each vertex, representing the amount of traffic leaving it, which determines the initial Wardrop UE flow. This flow is destabilized by a random disruption event that results in a reduced capacity of an edge, causing a cascade of congestion events. The resulting congestion cost ($\Delta c_f\End$) represents the cumulative added travel time caused by the congestion cascade.

Our main result, given in Section \ref{results}, shows that if the vertex weight distribution is scale free, then the congestion cost is also scale free with the same scaling parameter. More precisely, for a vector of vertex weights $\pareto = (\paretoI{v})_{v\in \V}$ where all $\paretoI{v}$'s are mutually independent and satisfy
\begin{equation}\PR{X_v>x} \approx K\cdot x^{-\alpha}, \quad~\forall ~v\in \V, \label{eq:scale-freeIntro}\end{equation} for $\alpha$, $K>0$ and $x$ large enough, we show that \[\PR{\Delta c_f\End>x}\approx C\End\cdot x^{-\alpha},\] for some constant $C\End>0$. This shows a direct relationship between the traffic intensity and congestion. From our analysis, it is also evident that the cascade mechanism has only a secondary effect on the tail distribution of the congestion cost; it can only affect the constant $C\End$, but it does not change the underlying power-law parameter. This finding emphasizes the prevalence of power-law scaling in traffic networks. 

In Section \ref{roadmap}, we discuss the key ingredients used to prove our results: first, the scale-invariance property, showing that scaling the initial vertex weights by a constant results in scaling of the congestion cost by the same constant, then
the continuity of the congestion cost function, implying that a small perturbation of the vertex weights results in a small change in congestion cost,
and finally the catastrophe principle for the congestion cost function, showing that with high probability, large traffic jams occur when a single vertex has much larger weight than all other vertices \cite{nair_wierman_zwart_2022}.  

Our model employs a parametric cascading congestion mechanism. This allows us to study different congestion propagation rules and to vary the impact of congestion on the capacity of roads. Importantly, our results depend on the choice of these parameters only through the constant $C^{(end)}$, suggesting that the scale of the congestion phenomenon is independent of the specifics of the congestion mechanism. This finding offers a plausible explanation for the universal scaling observed in the data across different locations \cite{Zhang_Zeng_Li_Huang_Stanley_Havlin_2019}.
\section{Results}\label{results}
We begin this section with a high-level description of our model. For a complete description, we refer the reader to Section~\ref{modDesc}. Then, we state and discuss the main result, Theorem~\ref{mainTheorem}, and show the catastrophe principle for our model in Proposition~\ref{sbj}. These results, along with supplementary lemmas, are proven in Appendix~\ref{analysis}. We conclude this section with simulation results for a small network, which helps visualize the cascade dynamics and validates our theoretical results.

\begin{figure}[!b]
    \centering
        \resizebox{0.75\textwidth}{!}{%
\begin{tikzpicture}[node distance=1.7cm]

\node (start) [startstop] {Initial edge disruption\\
$(r = 1)$};
\node (pro1) [process, right of=start, xshift = 2.5cm] {\textbf{Step 1}\\Determine the set of disrupted edges}
;
\node (dec1) [decision, right of=pro1, xshift=2.5cm] {Is the set empty?}
;
\node (pro3) [process, below of=pro1, yshift=-2cm] {\textbf{Step 3} \\Compute the new Wardrop UE flows};

\node (pro2) [process, below of=dec1, yshift=-2cm] {\textbf{Step 2} \\Adjust edge capacities and disruption states};

\node (stop) [startstop, right of=dec1, xshift = 2.5cm] {Cascade terminates};

\draw [arrow] (start) -- node[anchor = south] {$r = 2$}(pro1);
\draw [arrow] (pro1) -- (dec1);
\draw [arrow] (pro2) -- (pro3);
\draw [arrow] (pro3) -- node[anchor=east] {$r\mapsto r+1$}(pro1);
\draw [arrow] (dec1) -- node[anchor=south] {yes} (stop);
\draw [arrow] (dec1) -- node[anchor=east] {no} (pro2);

\end{tikzpicture}}
    \caption{Flow diagram of disruption cascade. One loop in the process corresponds to a single cascade stage, denoted by $r$.}
    \label{flowDiagram}
\end{figure}
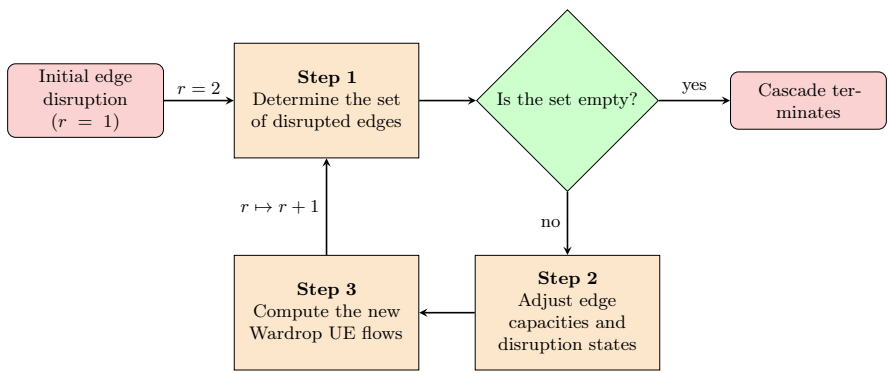We proceed with a brief description of the model. We consider a graph $\G = (\V,\EE)$, where each vertex $v\in \V$ has an associated scale-free weight $X_v$, as given in \eqref{eq:scale-freeIntro}, which defines the intensity of traffic leaving $v$. A cascade operates in stages indexed by $r$, according to the diagram in Figure~\ref{flowDiagram}. Each stage has associated edge flow capacities $\flowLimVec^{(r)}$  and Wardrop UE flow $\flowM^{(r)}$, which yield flow costs $c_f(\flowM^{(r)}, \flowLimVec^{(r)})$. Stage $r = 0$ corresponds to a non-congested network flow. At stage $r = 1$, the initial disruption occurs at a randomly chosen edge and the flow capacity of this edge is reduced by a factor $\phi_{init} = u$ with probability $p_{init}(u)$ for $u\in [l_m, l_M]\subset (0,1)$. This yields the updated edge flow capacity $\flowLimVec^{(1)}$, and by employing the Wardrop UE, we obtain the updated network flow $\flowM^{(1)}$. At stage $r>1$, any edge $e\in \EE$ whose flow exceeds the current flow capacity of the edge may experience a disruption with probability $p_e^{(r)}$, which is an increasing, continuous function of the edge flow. This results in an additional reduction of the capacity of edge $e$ by a factor $\phi_e^{(r)}$. This yields the updated edge flow capacity $\flowLimVec^{(r)}$ and the Wardrop UE flow $\flowM^{(r)}$. The cascade ends when no edge disruptions occurs at a certain stage. The congestion cost at stage $r$ is given by $\Delta c_f^{(r)} := c_f(\flowM^{(r)}, \flowLimVec^{(r)}) - c_f(\flowM^{(0)}, \flowLimVec^{(0)})$ and $\Delta c_f\End$ denotes the congestion cost at the final stage of the cascade. For a precise definition of the function $c_f$ see Equation~\eqref{costEquation}.

Our main result quantifies the probability that congestion cost exceeds a high level. In particular, we show that at every stage of the cascade, the probability that the congestion cost exceeds $y$ behaves approximately as $C\cdot y^{-\alpha}$ for some cascade stage-specific constant $C$ and a large value of $y$. Moreover, the scale parameter $\alpha$ is the same as the scale parameter governing the distribution of vertex weights. This implies that the scale-free behavior of the congestion cost arises as a consequence of the vertex weight distribution. The following theorem makes the result explicit.

\begin{theorem} \label{mainTheorem}
Let $\G = (\V, \EE)$ be a graph, and assume that the distribution of vertex weights is Pareto-tailed with parameter $\alpha$ and normalization constant $K$, as given in Equation~\eqref{eq:scale-freeIntro}. Then, 
\begin{enumerate}
    \item At every stage of the cascade, the congestion cost has a scale-free tail with parameter $\alpha$. Specifically, for every $r\in \N$,
    \begin{equation}
        \label{eq:mainThm}
    \PR{\Delta c_f^{(r)}>y} \sim C^{(r)} y^{-\alpha},\quad \text{as }y\rightarrow \infty, \quad ~i.e., ~ \lim_{y\rightarrow \infty }\PR{\Delta c_f^{(r)}>y}y^{\alpha} = C^{(r)}\end{equation}
    with 
    \begin{equation}C^{(r)} := K\sum_{v\in \V} \E\left[\left(\Delta \flowCost^{(r)}\right)^\alpha\,\Big|\,X_v = 1, X_w = 0,\quad \forall \,w\in \V\setminus\{v\}\right]>0.\label{eq:Cr}\end{equation}

    \item The congestion cost caused by the entire congestion cascade also exhibits scale-free tail behavior with parameter $\alpha$, i.e.,
\begin{equation}
\PR{\Delta c_f^{(end)}>y} \sim C^{(end)} y^{-\alpha},\quad \text{as }y\rightarrow \infty,   \label{eq:DeltaEnd}
\end{equation}
    with $C^{(end)} = \lim_{r\rightarrow \infty} C^{(r)}>0$.
\end{enumerate}
\end{theorem}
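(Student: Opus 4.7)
The plan is to prove this theorem by combining the three ingredients announced in Section~\ref{roadmap}: scale-invariance of the congestion cost in the vertex weights, its continuity in the same argument, and a catastrophe (one-big-jump) principle tailored to scale-free inputs. I would first establish part 1 for a fixed cascade stage $r$ and then derive part 2 by passing to the limit $r\to\infty$.

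For part 1, fix $r\in\N$ and regard $\Delta c_f^{(r)}$ as a function of the random weight vector $\pareto=(X_v)_{v\in\V}$ together with the independent cascade randomness $\omega$ (the initially disrupted edge, the capacity reduction factors $\phi_{init}$ and $\phi^{(r)}_e$, and the Bernoulli triggers with probabilities $p_e^{(r)}$). By the scale-invariance lemma, $\Delta c_f^{(r)}(\lambda\pareto;\omega)=\lambda\,\Delta c_f^{(r)}(\pareto;\omega)$ for every $\lambda>0$. For each $v\in\V$, let $g_v^{(r)}(\omega)$ denote the value of $\Delta c_f^{(r)}$ when $X_v=1$ and $X_w=0$ for $w\neq v$; this is a function of the cascade randomness alone, and its $\alpha$-th moment is exactly the conditional expectation appearing in \eqref{eq:Cr}. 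Scale-invariance combined with continuity in $\pareto$ then gives that, on the event that some $X_{v^*}$ dominates every other coordinate, $\Delta c_f^{(r)}(\pareto;\omega)\approx X_{v^*}\,g_{v^*}^{(r)}(\omega)$. The catastrophe principle next asserts that the tail of $\Delta c_f^{(r)}$ is, to leading order, the disjoint union of these single-vertex-dominates events, yielding
\[
\PR{\Delta c_f^{(r)}>y}\;\sim\;\sum_{v\in\V}\PR{X_v\,g_v^{(r)}>y}\;\sim\;K\sum_{v\in\V}\E\!\left[(g_v^{(r)})^\alpha\right]y^{-\alpha},
\]
where the last step is obtained by conditioning on $g_v^{(r)}$ and invoking \eqref{eq:scale-freeIntro}. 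This is precisely \eqref{eq:mainThm}--\eqref{eq:Cr}.

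For part 2, I would first show that the cascade is monotone: because capacities only decrease across stages and $c_f$ is non-decreasing in each argument, $\Delta c_f^{(r)}$ is non-decreasing in $r$ and converges almost surely to $\Delta c_f\End$. Monotone convergence inside \eqref{eq:Cr} then yields $C^{(r)}\uparrow C\End=K\sum_v\E[(g_v\End)^\alpha]$, finite as soon as the cascade terminates with bounded total cost on the conditioning event. To transfer the tail asymptotic from $\Delta c_f^{(r)}$ to $\Delta c_f\End$, I would use the sandwich
\[
\PR{\Delta c_f^{(r)}>y}\;\le\;\PR{\Delta c_f\End>y}\;\le\;\PR{\Delta c_f^{(r)}>y-\varepsilon}+\PR{\Delta c_f\End-\Delta c_f^{(r)}>\varepsilon},
\]
multiply through by $y^\alpha$, let $y\to\infty$ by part 1, then $r\to\infty$ using the almost-sure convergence to bound the residual term, and finally $\varepsilon\to 0$.

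The main obstacle is the rigorous justification of the catastrophe step, because the map $(\pareto,\omega)\mapsto\Delta c_f^{(r)}$ is defined only implicitly through the Wardrop UE and the recursive cascade dynamics. The delicate points will be joint measurability, continuity of the Wardrop UE flow in the vertex weights and edge capacities (so that the continuity ingredient genuinely applies at each stage), and an integrable envelope ensuring $\E[(g_v^{(r)})^\alpha]<\infty$ uniformly in $r$. Once these regularity facts are in place, the tail estimate reduces to the standard one-big-jump template \cite{nair_wierman_zwart_2022}, and part 2 follows from monotonicity together with the routine exchange-of-limits argument above.
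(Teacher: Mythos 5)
Your treatment of part 1 follows essentially the same route as the paper: reduce to the event that a single vertex weight dominates via the catastrophe principle (Proposition~\ref{sbj}), use scale-invariance to write $\Delta c_f^{(r)}\approx X_{v}\,g_v^{(r)}$ on that event, use continuity to justify replacing the normalized weight vector by $\unitV{v}$, and finish with a Breiman-type conditioning step that turns $\PR{X_v g_v^{(r)}>y}$ into $K\,\E[(g_v^{(r)})^\alpha]\,y^{-\alpha}$. The paper implements exactly this via the $\varepsilon$-sandwich with the bounds $U(v,\varepsilon,u,s)$ and $L(v,\varepsilon,u,s)$; the ``delicate points'' you flag (continuity of the Wardrop UE in the weights, continuity of the cascade law) are precisely Lemmas~\ref{continuityFlowr} and~\ref{cascadeCont}, so nothing is missing there.

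Part 2 is where you have a genuine gap. In your sandwich
\[
\PR{\Delta c_f\End>y}\;\le\;\PR{\Delta c_f^{(r)}>y-\varepsilon}+\PR{\Delta c_f\End-\Delta c_f^{(r)}>\varepsilon},
\]
you propose to multiply by $y^{\alpha}$ and send $y\to\infty$ first, then $r\to\infty$. But for fixed $r$ and $\varepsilon$ the second term is a constant in $y$, so $y^{\alpha}\PR{\Delta c_f\End-\Delta c_f^{(r)}>\varepsilon}\to\infty$ unless that probability is exactly zero; almost-sure convergence of $\Delta c_f^{(r)}$ to $\Delta c_f\End$ only makes it small, not zero, so the upper bound is vacuous in the stated order of limits. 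The repair is the observation the paper uses: each edge can be disrupted at most $n_e$ times, so every cascade terminates within $r^*:=\sum_{e\in\EE}n_e$ stages deterministically, whence $\Delta c_f\End=\Delta c_f^{(r^*)}$ identically and $C\End=C^{(r^*)}=\lim_{r\to\infty}C^{(r)}$ because $C^{(r)}$ is constant for $r\ge r^*$. With that, part 2 is an immediate corollary of part 1 and no limit interchange (and no monotone-convergence argument for $C^{(r)}\uparrow C\End$) is needed. Your monotonicity claim for $\Delta c_f^{(r)}$ is in fact correct --- the feasible set of \eqref{optFlowProblem} does not depend on the capacities and the objective is pointwise non-increasing in $\flowLim{e}$, so the optimal value increases as capacities shrink --- but it does not rescue the argument by itself; you still need the deterministic termination bound to kill the residual term.
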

Notably, from the expression of the constant $C^{(r)}$ in \eqref{eq:Cr}, it is evident that, asymptotically, the probability of large congestion can be decomposed into independent contributions from each vertex of the graph. One can view the contribution of vertex $v$ as the $\alpha$-th moment of the cascade cost at stage $r$, given that all traffic originates from the vertex $v$. This is a consequence of the catastrophe principle for scale-free distributions \cite{nair_wierman_zwart_2022}, implying that, typically, a single rare event (here, one large vertex) is responsible for extreme behavior of the entire process (i.e., the cascade cost). In the context of our model, this principle can be formalized as the proposition below. Note that for two functions $g(x)$ and $h(x)$ we say that $g(x) = \OO{h(x)}$ as $x\rightarrow \infty$ if and only if $\exists~N, x^*>0$ such that $|g(x)|\leq N|h(x)|$ for all $x\geq x^*$.
\begin{proposition} \label{sbj}
    Consider a vector of vertex weights $(X_1,\dots, X_{\nv})$. Let $X_{(\nv)} := \max\{X_1,\dots, X_{\nv}\}$. Then, for all $\varepsilon>0$ and $r\in \N$,
    \[\PR{ \Delta c_f^{(r)}>y ,\ \sum_{i = 1}^n X_i - X_{(\nv)} \geq \varepsilon X_{(\nv)}} = \OO{y^{-2\alpha}}\quad \text{as }y\rightarrow \infty.
    \]
\end{proposition}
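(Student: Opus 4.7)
\textbf{Proof plan for Proposition~\ref{sbj}.}
The plan is to exploit two structural properties of $\Delta c_f^{(r)}$ flagged in Section~\ref{roadmap} -- scale-invariance and continuity in the vertex weights -- to reduce the event $\{\Delta c_f^{(r)} > y\}$ to a statement purely about the weights $X_v$, and then apply a standard double-tail argument. Concretely, scale-invariance says that $\Delta c_f^{(r)}$, viewed as a function of $\bm{X} = (X_1,\dots,X_{\nv})$, is positively homogeneous of degree one. Combined with continuity on the compact simplex $\{\bm{x} \geq 0 : \|\bm{x}\|_\infty = 1\}$, this yields a deterministic linear bound
\[
\Delta c_f^{(r)}(\bm{X}) \;\leq\; M_r \, X_{(\nv)}
\]
for some finite constant $M_r$ depending only on $r$ and the graph. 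Thus, on $\{\Delta c_f^{(r)} > y\}$, the largest weight satisfies $X_{(\nv)} > y/M_r$.

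Next, I would exploit the second event $\{\sum_{i=1}^{\nv} X_i - X_{(\nv)} \geq \varepsilon X_{(\nv)}\}$ via a pigeonhole argument: since the sum over the non-maximal vertices is at least $\varepsilon X_{(\nv)}$, at least one of these $\nv - 1$ vertices must carry weight at least $\varepsilon X_{(\nv)} / (\nv - 1)$. Equivalently, the second-largest order statistic obeys $X_{(\nv-1)} \geq \varepsilon X_{(\nv)}/(\nv - 1)$. Combining with the lower bound $X_{(\nv)} > y/M_r$ obtained above shows that on the joint event under consideration,
\[
X_{(\nv)} \;>\; \frac{y}{M_r} \qquad \text{and} \qquad X_{(\nv-1)} \;>\; \frac{\varepsilon \, y}{(\nv-1) M_r}.
\]
In other words, the joint event forces at least two vertices to simultaneously carry weight of order $y$.

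The concluding step is a union bound using independence of the $X_v$ and the Pareto tail \eqref{eq:scale-freeIntro}. Writing $c := \min\{1/M_r, \, \varepsilon/((\nv-1)M_r)\}$, I would bound the probability of the joint event by
\[
\sum_{u \neq v} \PR{X_u > cy} \, \PR{X_v > cy} \;\leq\; \binom{\nv}{2} \bigl(K (cy)^{-\alpha} (1+o(1))\bigr)^{2} \;=\; \OO{y^{-2\alpha}},
\]
which is the desired estimate.

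The main obstacle is cleanly establishing the deterministic bound $\Delta c_f^{(r)} \leq M_r X_{(\nv)}$: this relies on the scale-invariance and continuity lemmas previewed in Section~\ref{roadmap}, and on an induction in the stage index $r$, since both the set of disrupted edges and the realized capacity reductions $\phi_e^{(r)}$ are themselves scale-invariant under joint rescaling of all vertex weights (the Wardrop UE flow and the thresholds $\flowLim{e}^{(r)}$ rescale consistently, so an edge disrupts under $c\bm{X}$ iff it disrupts under $\bm{X}$). Once this inductive claim is in place, compactness of the unit sphere delivers $M_r < \infty$, and the rest of the argument is the short union bound above.
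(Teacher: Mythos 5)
Your proposal is correct and follows essentially the same route as the paper: a deterministic bound $\Delta c_f^{(r)} \leq M(r)\,X_{(\nv)}$ obtained from scale-invariance (Corollary~\ref{scaleInvDelta} plus Lemma~\ref{BoundFlow}), the pigeonhole observation that the conditioning event forces $(\nv-1)X_{(\nv-1)} \geq \varepsilon X_{(\nv)}$, and a double-tail bound on the two largest order statistics via independence and the Pareto tail. The only cosmetic difference is that the paper produces the constant $M(r)$ explicitly by costing the feasible shortest-path flow with capacities reduced by at most $l_m^r$, rather than by a compactness-and-continuity argument on the unit sphere.
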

This result, together with the main theorem, suggests that large traffic jams are more likely to occur in networks with a single large city.

It is important to note that in our model we parametrize the functions driving the cascade behavior, which enables us to model various cascade mechanisms. Specifically, for each edge $e\in \EE$ and cascade stage $r\in \N$, we allow for a different continuous probability law $p_e^{(r)}$ and reduction factor $\phi_e^{(r)}$, both of which together determine the likelihood and severity of the edge disruption. In Theorem~\ref{mainTheorem}, only constants $C^{(r)}$ and $C^{(end)}$ depend on the choice of these parameters, implying that the scale-free behavior of the cascade cost is robust to the cascade mechanism. 
\begin{figure}[!t]
  \centering
  \begin{minipage}[b]{0.57\textwidth}
  \centering
\includegraphics[width=0.9\textwidth]{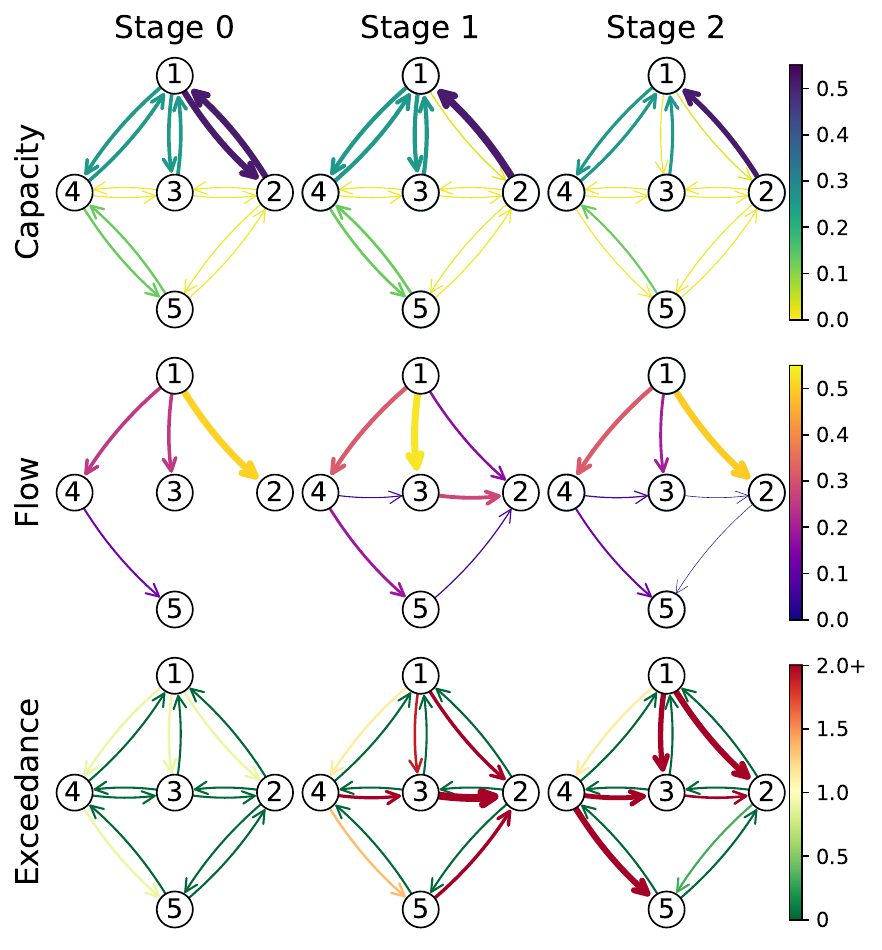}
         \caption{A cascade example initiated by the congestion on edge $(1,2)$. Here, edge exceedance is the fraction of the flow over flow capacity on the edge.}
         \label{cascade}
  \end{minipage}
  \hfill
          \nextfloat
  \begin{minipage}[b]{0.42\textwidth}\centering
      \begin{subfigure}[b]{\textwidth}
        \centering
\includegraphics[width=\textwidth]{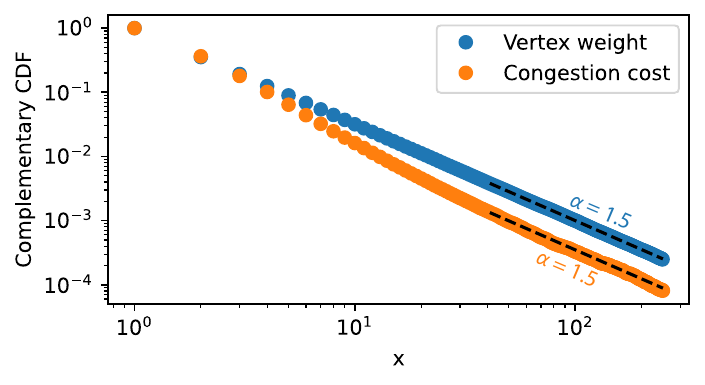}
    \caption{The CCDF of empirical distributions for the vertex weight $\bm{X}$ and congestion cost $\Delta \flowCost\End$.}
\label{loglogdist}
     \end{subfigure}
     \hfill\\
     \begin{subfigure}[b]{\textwidth}
         \centering
\includegraphics[width=\textwidth]{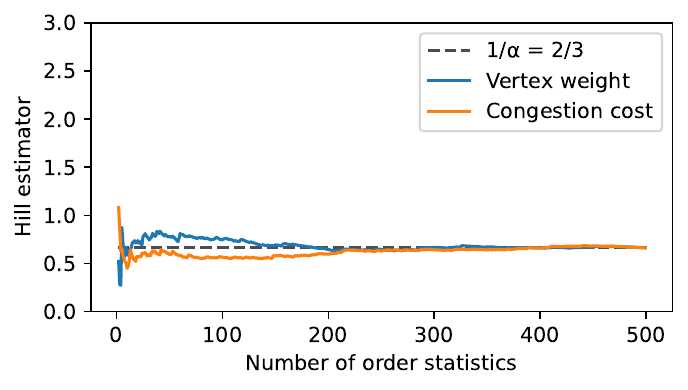}
         \caption{Hill estimators for the vertex weight $\bm{X}$ and congestion cost $\Delta\flowCost\End$.}
         \label{HillPlot}
\end{subfigure}
\caption{Congestion simulation in the example graph.}
        \label{exampleCong}
  \end{minipage}

\end{figure}

Having stated the main results, we proceed to demonstrate them on a small simulated network example, detailed in Appendix~\ref{example} and depicted in Figure~\ref{cascade}. This figure shows a single realization of a cascade on the graph. As stated earlier, an insight from our theoretical results is that asymptotically, large congestion occurs when one vertex has a large weight and the remaining weights approach 0. Hence, in the example, we consider the asymptotic normalized scenario with vertex weight vector $\bm{X} = (1,0,\dots,0)$. The intensity of the traffic from the vertex $1$ to $i$ is given by $q_{1,i}$ such that $(q_{1,1},q_{2,1}, q_{3,1},q_{4,1},q_{5,1}) = (0,1/2,1/4,1/8,1/8)$, and it is 0 from all other vertices. Moreover, for brevity, we assume that each edge can be disrupted at most once. In the first column of Figure~\ref{cascade}, we depict the cascade stage 0, where the network is stable and without congestion. The cascade is initiated by a disruption on edge (1,2), which causes a fraction of travelers who previously took route $1\rightarrow 2$ to choose one of the routes $1\rightarrow 3\rightarrow 2$ and $1\rightarrow 4\rightarrow 5\rightarrow 2$ instead. This leads to additional congestion on edges (1,3), (3,2), (4,3), (4,5), and (5,2), which become disrupted in stage 2. This results in the redistribution of the flow. At this point, all edges that have not yet experienced disruption have flow that does not exceed the edge's capacity, which means that the cascade terminates.

Next, we demonstrate the findings of Theorem \ref{mainTheorem}. Figures~\ref{loglogdist} and~\ref{HillPlot} present the results of $N = 10^6$ cascade simulations on the graph where vertex weights follow a Pareto distribution with parameter $\alpha = 3/2$. From the log-log plot, it is evident that the distribution of the congestion cost $\Delta \flowCost\End$ has a scale-free tail with parameter $\alpha$, inherited from the vertex weight distribution, which validates our theoretical findings. The estimated constant $C\End$ is equal to 0.35. The Hill plot also confirms that both distributions have a scale-free tail because we observe that both plots flatten around the value $2/3 = \alpha^{-1}$ for the choice of order statistics larger than 200.

Lastly, to visualize the catastrophe principle for our model, namely Proposition \ref{sbj}, we compare the cascade cost probabilities for two scenarios:
\begin{itemize}
    \item \textbf{Scenario 1}: $\sum_{i = 1 }^{\nv} X_i>2\max\{X_1,\dots,X_{\nv}\}$. This means that the total vertex weight is at least twice the maximum vertex weight.
    \item \textbf{Scenario 2}: $\sum_{i = 1 }^{\nv} X_i\leq 2\max\{X_1,\dots,X_{\nv}\}$. This means that the contribution of all other vertices is relatively small compared to the maximum.
\end{itemize}
\begin{figure}[b]
     \centering
     \begin{subfigure}[b]{0.49\textwidth}
         \centering
         \includegraphics[width=0.9\textwidth]{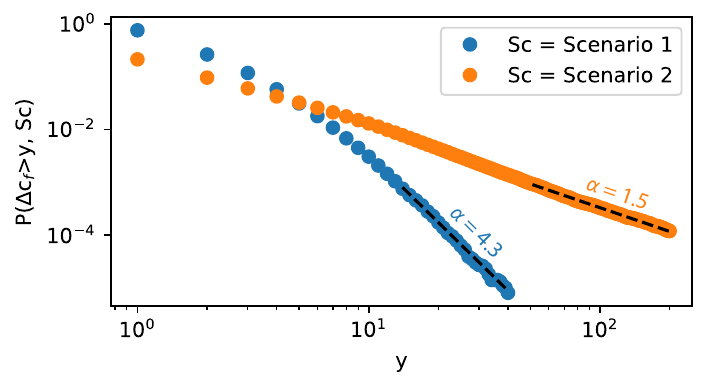}
         \caption{Probability that the cascade cost $\Delta \flowCost \End$ exceeds~$y$.  }
         \label{loglogdist1}
     \end{subfigure}
     \hfill
     \begin{subfigure}[b]{0.49\textwidth}
         \centering
         \includegraphics[width=0.9\textwidth]{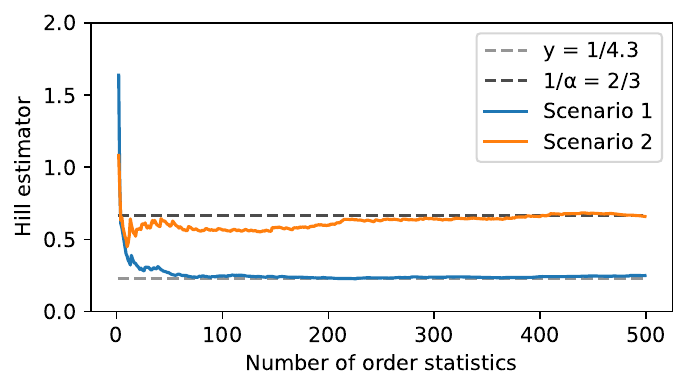}
         \caption{Hill estimators for the congestion cost $\Delta \flowCost \End$.}
         \label{HillPlot1}
     \end{subfigure}
        \caption{Plots of congestion simulation in the example graph for two vertex weight scenarios.}
        \label{Fsbj}
\end{figure}
Proposition~\ref{sbj} tells us that the probability that the cascade is larger than $y$ under Scenario 1 is of order $\OO{y^{-2\alpha}}$. This, together with the result of Theorem~\ref{mainTheorem}, implies that the $\alpha$-tail behavior occurs due to vertex weights that fall into Scenario 2. This is also evident from the simulation results. In Figure~\ref{loglogdist1}, we observe that under Scenario 1, the tail of the congestion cost behaves approximately as $\Theta(y^{-4.3}) = \OO{y^{-2\alpha}}$, whereas for Scenario 2, we again observe the scale-free tail behavior with parameter $\alpha$. The same conclusion can be drawn from Figure~\ref{HillPlot1}, where we observe that the curves flatten around values 100 and 400 for Scenarios 1 and 2, respectively, confirming the scale-free behavior observed in Figure~\ref{loglogdist1}. 

\section{Discussion}\label{discussions}
This study provides a novel explanation for the emergence of scale-free traffic jams in highway networks. We introduce a cascade model for highway congestion and prove that the scale-free nature of traffic jams arises from scale-free traffic intensity. Our analysis builds on the natural assumption—supported by data in Figure~\ref{fig:1}—that traffic intensities follow a scale-free distribution, establishing the first causal link between traffic intensity and congestion. Using probabilistic techniques, in Theorem~\ref{mainTheorem} we derive rigorous asymptotic results on the probability of large traffic jams within a cascading framework that not only aligns with, but also builds upon existing cascade models in traffic congestion. Our findings not only deepen the understanding of traffic jam formation, but also offer the first mathematically rigorous explanation of scale-free congestion phenomena.

Additionally, our results improve the understanding of  scale-free congestion phenomena beyond explaining their emergence. Specifically, in Proposition \ref{sbj} we show that among all traffic networks with scale-free city-size distributions, extreme traffic jams are more likely to occur because most traffic originates from one location, rather than other events. This key insight can help identify highway networks that are more prone to severe congestion. 
     
Furthermore, our findings imply that the scale-free nature of traffic jams is robust to the choice of network configuration or the specific congestion propagation mechanism. Specifically, as shown in Theorem \ref{mainTheorem}, they only influence the prefactor $C^{(r)}$, given in \eqref{eq:Cr}, and not the scale-free parameter $\alpha$. The robustness of the scale-free behavior of traffic jams to network configurations and propagation mechanisms may explain why similar scaling parameters were recovered from traffic jam data across different locations \cite{ Chen_Lin_Yan_Liu_Liu_Li_2024,Zhang_Zeng_Li_Huang_Stanley_Havlin_2019}.

Our work also has important implications for congestion prevention. A key insight from the robustness of the scale-free behavior is that mitigating the scale-free cost of traffic jams may be challenging without altering the fundamental distribution of traffic intensities, which is inherently difficult to control. However, strategic network augmentation or targeted traffic management strategies can help reduce the prefactor $C^{(r)}$. In large networks, this prefactor is expected to be significant, meaning that even a modest reduction could have a substantial impact on congestion severity. Still, the exact effects of prevention strategies remain an open question for future exploration.

Notably, our framework is inspired by a similar line of work on power transmission systems \cite{Nesti_2020, janicka2024scalefree}. There, the authors propose a power flow model and show that, similarly to traffic congestion, scale-free blackout sizes may be a consequence of the scale-free distribution of cities. Both models are examples of societal networks connecting cities and operate according to the same general principles, but they differ significantly regarding detailed flow and cascade dynamics. Despite these differences, the underlying root cause of ``severe'' cascades is the same in both models. Given this, and the increasing reliance of transportation systems on electrical power, as exemplified by the electrification of vehicles \cite{Lv_21}, it may be beneficial to explore potential hidden dependencies that arise from their connection to cities. This exploration could be effectively conducted using a multilayer cascade model~\cite{BOCCALETTI20141,Buldyrev_Parshani_Paul_Stanley_Havlin_2010, Vespignani_2010, DeDomenico_2015, Bianconi_2014, Su_Li_Peng_Kurths_Xiao_Yang_2014}. Moreover, cities drive the demand in various other flow networks, including gas and water distribution systems, hence incorporating the scale-free characteristics of city sizes into models of these systems could reveal critical new insights.

The key element of our model is the flow cost function, which determines the Wardrop UE flow at every stage of the congestion cascade. Future modifications could explore the impact of different cost functions. Common cost functions described in the literature include linear or convex polynomial functions of the flow. We anticipate that the application of such cost functions would maintain the scale-free nature of traffic jams, albeit potentially with a modified scaling parameter. It is naturally possible to create cost functions that would counteract the emergence of scale-free traffic jams. For example, cost functions of at most logarithmic order may lead to this effect. However, such functions typically model the principle of \textit{economies of scale}, where each increase in demand becomes significantly cheaper relative to the previous one. This assumption is not natural in the context of highway traffic and is not supported by data.

Furthermore, our model operates on a timescale that captures the emergence of traffic jams; however, a larger timescale is necessary to model congestion resolution. To model this larger timescale, it would be essential to incorporate dynamic traffic intensities that reflect user responses to road conditions and daily fluctuations in traffic volume, as well as road capacity recovery facilitated by, for example, incident management teams. Such an extension would enable a detailed study of the characteristics of congestion duration and recovery.

  Lastly, this work presents a novel explanation for the emergence of congestion; however, to further substantiate our findings, validation with empirical data is essential. Since our model aims to identify general traffic trends rather than detailed behavior, we expect our theoretical results and empirical data to exhibit similar trends, though perfect agreement is not anticipated. Our preliminary analysis of the Dutch highway network indeed reveals scale-free behavior in both traffic intensity and traffic jam length, as expected based on our theorem. However, the available data has certain limitations. First, it employs a different congestion metric than our model; the data is given in traffic jam lengths and our results in cumulative added travel time. Given the discussion above on the influence of the congestion cost function, this discrepancy may explain why we recover different scaling parameters for traffic intensity and congestion. 
  Second, the Dutch highway network is relatively small, i.e., of order $10^2$ vertices. Research suggests that accurately recovering scale-free behavior from data typically requires a sufficiently large network, generally on the order of $10^4$ vertices or more \cite{Nesti_2020}. In smaller networks, data may indicate a lighter tail than the true distribution or fail to exhibit a scale-free tail altogether. Therefore, further analysis of data from a larger traffic network is necessary to numerically validate our findings.

\section{Methods}
\label{methods}
In this section, we first introduce the general notation. We then detail our model in Section~\ref{modDesc}, where we first outline the flow network components, after which we explain the dynamics of how congestion propagates through the network. Then, in Section~\ref{roadmap}, we discuss the main ideas used in the proof of our main result, Theorem~\ref{mainTheorem}.

Throughout this paper, we adopt the following notation. We denote deterministic matrices and vectors with bold capital and small letters, respectively i.e., $\bm{A}\in \R^{n\times m}$ and $\bm{a}\in \R^n$. Their elements are denoted by $a_{i,j}$ and $a_{i}$, respectively, with $i\in \{1,\dots, n\}$ and  $j\in \{1,\dots, m\}$. The $n\times n$ identity matrix is denoted by $\bm{I}_n$. In addition, we use $\bm{e}$ to denote all-ones vector of appropriate size and letter $\bm{e}_j$ to denote the $j$-th unit vector of appropriate size with 1 at position $j$ and zero otherwise. The set of all non-negative real numbers is denoted by $\R_+$ and the set of all natural numbers, excluding 0, is denoted by $\N$. Lastly, for two functions $f(x)$, $g(x)\in \R$, we say that $f(x)\sim g(x)$ as $x\rightarrow \infty$ if and only if $\lim_{x\rightarrow \infty} f(x)/g(x) = 1$.

\subsection{Model} \label{modDesc}

\subsubsection{Flow network components} \label{basicModelComp}

\textit{Road network structure:} We represent a highway network through a directed, connected graph $\G = (\V, \EE)$, where $\V$ and $\EE$ represent the set of vertices and edges of $\G$, respectively, with the corresponding set sizes denoted by $\nv:=|\V|$ and $\nee := |\EE|$. Here, edges correspond to highways, and vertices correspond to highway crossings. Highways typically have lanes in both directions, hence we assume that the graph is symmetric, i.e., all edges appear twice, one in each direction. The graph is described through its incidence matrix $\incM\in \{-1,0,1\}^{\nv\times\nee}$, given by
\[b_{v,e} := \begin{cases} 1 & \text{if edge } e\in \EE \text{ enters vertex } v\in \V,\\
-1 & \text{if edge } e\in \EE \text{ exits vertex } v\in \V,\\
0 &\text{otherwise.}
\end{cases}\]
Each vertex $v\in \V$ has a stochastic weight $X_v$ with $\bm{X} = (X_v)_{v\in \V}$. We associate vertices with cities and vertex weights with city sizes. As city sizes are known to have a \textit{scale-free tail} \cite{ROSEN1980165, nair_wierman_zwart_2022,  Beckmann_1958, Berry1961}, we assume that
\begin{equation}\mathbb{P}(X_v>x) \sim Kx^{-\alpha} \quad \text{as }x\rightarrow \infty,\label{Kconstant}\end{equation}
 for some constants $K,\alpha>0$. Additionally, we assume that the vertex weights are mutually independent; i.e., $X_v\ind X_w$ for all $v\neq w \in \V$.
 
\vspace{0.25cm}
\textit{Origin and destination of commuters:} In highway networks, commuters travel from an origin vertex to a destination vertex. As common in the literature \cite{mCMF1,Barnhart2009, Whittle_2007}, we view the flow of traffic as a multi-commodity network flow. In this model, commuters originating from some vertex $k$ constitute a separate \textit{commodity} $k$, but our approach could be generalized to include other commodities, for example, vehicle type. Note that for all quantities $a$, for which the commodity is relevant, we use notation $a_{l,k}$ where $l$ refers to the location (either a vertex or an edge) and $k$ to the commodity. We denote the amount of traffic of commodity $k$ originating at vertex $v$ as $\gen[v,k]$ and the amount of traffic of commodity $k$ with destination $w$ as $\dem[w,k]$; this notation is inspired by the classical $s$-$t$ flow concept.  The inhabitants of vertex $v$ that remain in the vertex can be viewed as traffic from $v$ to $v$. Moreover, only commuters of commodity $k = v$ can originate from $v$. Hence, $\genM = (s_{v, k})_{v,k\in \V} = \text{diag}(\pareto)$, i.e., the diagonal matrix with $\pareto$ on the diagonal. 
Next, let the fraction of the commodity $k$ that travels to vertex $w$ be denoted by $q_{w,k}\geq 0$ such that $\sum_{w\in V} q_{w,k} = 1$. Using this, we obtain that the amount of traffic into $w$ of commodity $k$ is given by\begin{equation}\dem[w, k] = q_{w,k}\gen[k,k].\label{trafficIntensity}\end{equation}
In matrix notation, Equation~\eqref{trafficIntensity} can be written as
$\demM=  \bm{Q}\genM.
$ We call $\genM$, $\demM$, and $\bm{Q}$ \textit{origin}, \textit{destination}, and \textit{travel factor} matrices and the difference between the destination and origin matrix is the \textit{net travel} matrix $\bm{U}:= \demM - \genM$. 

\vspace{0.25cm}
\textit{Network flow:}
Commuters induce a flow on the network. Let $\flowM\in \R_+^{\nee\times \nv}$ be the \textit{flow} matrix where $\flow[e,k]\geq 0$ denotes the amount of commodity $k$ transported through edge $e$. The \textit{total flow} vector $\flowTot\in \R_+^{\nee}$ represents the cumulative flow on each edge and is given by $\flowTot:= \flowM \bm{e}$, were $\bm{e}$ is the all-ones vector of size $\nv$. The \textit{edge capacity }vector $\flowLimVec\in \R_+^{\nee}$ represents the capacity of each edge. If the total flow on an edge exceeds its capacity, it may lead to more disruptions, as will be explained in more detail in Section~\ref{emergencyPhase}.

We assume that the flow in the network is dictated by the Wardrop User Equilibrium (UE) flow. UE refers to a state of the network where no individual commuter can reduce their travel time by choosing a different route from their origin to their destination. In other words, the network is in equilibrium and all users commute on their optimal route. The flow matrix $\flowM^*$ corresponding to UE can be obtained by minimizing the flow cost function $c_f(\flowM, \flowLimVec)$, subject to the constraint that every commodity reaches its destination. In particular, the Wardrop UE flow is the optimal solution to \cite[p.\ 97]{Kelly_Yudovina_2014}:
    \begin{subequations} \begin{gather}\tag{F}\label{optFlowProblem} \flowM^*\left(\bm{U}, \flowLimVec
    \right):= \argmin_{\flowM\in \R^{n_{\EE}\times n_{\V}}_+}\flowCost(\flowM, \flowLimVec) \\
s.t.~~\incM \flowM = \bm{U}\tag{F1}\label{flowBalance}.\end{gather}\end{subequations}
where $c_f(\flowM, \flowLimVec)$ is the flow cost function. Constraint~\eqref{flowBalance} asserts that for all vertices $v,k\in\V$, the difference between the total \textit{flow into} $v$ and total \textit{flow out of} $v$ of commodity $k$ is equal to the net travel requirement at vertex $v$ of commodity $k$. As is commonly used in traffic literature \cite{1}, we assume that the flow cost function is given by
\begin{equation}c_f(\flowM, \flowLimVec) = \sum_{e\in \EE} \left(d_e\flow[e] + \frac{b_e}{\beta+1}\flowLim{e}\left(\flow[e]/\flowLim{e}\right)^{\beta+1}\right).\label{costEquation}\end{equation}
In the above cost function, $d_e>0$ is the free-flow travel parameter of edge $e$, representing the time required to travel through edge $e$, given no impediments. The parameter $b_e > 0$ quantifies the impact of congestion on edge $e$. This reflects how increased congestion might have a more significant effect on a narrow road compared to a wider one. The parameter $\beta>0$ models the global effect of congestion on travel time as an exponential function. As such, when the flow is below the capacity, the impact of congestion remains minimal; however, it increases drastically once the capacity is exceeded, accurately reflecting real-world traffic dynamics. Note that the exponent $\beta+1$ originates from an alternative representation of the Wardrop UE problem, discussed in \cite{Kelly_Yudovina_2014}.

Since $\beta>0$, the flow cost function is strictly convex, which we establish rigorously in Lemma~\ref{LemmaStrictConvex} of Appendix~\ref{appStrictConvex}. Moreover, Problem~\eqref{optFlowProblem} is feasible because $\G$ is connected and $\bm{e}^T\bm{U} = 0$. Hence, $\flowM^*$ is unique for any finite input as feasibility and strict convexity guarantee the existence and uniqueness of a solution of convex optimization problems \cite[ch.\ 4]{boyd2004convex}. Note that flow $\flowM^*(\bm{U}, \infty)$ is equivalent to solving Problem~\eqref{optFlowProblem} with cost function $\sum_{e\in \EE} d_e \flow[e]$, which can be viewed as the shortest path problem on $\G$ with edge weights $d_e$. As this particular flow occurs frequently in our analysis, we use the shorthand notation $\SPM:= \flowMStar(\bm{U}, \infty)$ and $\SPTot := \SPM\bm{e}$. Note that $\SPM$ may not be unique, in which case we assume that $\SPM$ is the optimal solution of~\eqref{optFlowProblem} that assigns equal flow to all shortest paths between vertices $v$ and $w$ for all $v,w\in \V$. This is discussed in more detail in Appendix~\ref{appendixSP} where we also show the equivalence of $\flowM^*(\bm{U}, \infty)$ with the shortest path flow in Lemma~\ref{SP}. 

\vspace{0.25cm}
\textit{Edge capacities:} We determine the edge capacity vector by considering the optimal flow without restrictions. In other words, we use the Wardrop UE flow with infinite capacity, i.e. the shortest path flow matrix $\SPM$. Then, we set the capacity on edge $e$ to be the maximum of i) minimal edge capacity $\flowLim{\min}$, ii) fraction $\tau$ of total shortest path flows on edges $e =(v,w)$ and $\tilde{e} = (w,v)$. In other words, 
\begin{equation}\flowLim{e} = \max\{\flowLim{\min},\tau g_{e}, \tau g_{\tilde{e}}\},\label{capacityPl}\end{equation} with $\tau\geq 1$ and $\flowLim{\min} = \varepsilon_{\min}\sum_{v\in \V} \gen[v,v]$, for a constant $\varepsilon_{\min}>0$. This choice captures the following natural properties:
\begin{enumerate}
    \item \textbf{Symmetry:} The capacities of edges $e$ and $\tilde{e}$ are equal, reflecting that highways typically have the same number of lanes in each direction.
    \item \textbf{Minimal capacity:} Minimal capacity threshold $\bar{f}_{\min}$ is chosen such that it scales linearly with the total weight of the vertices. As a consequence, edge capacities $\flowLim{e}$ also scale with the city sizes. 
    \item \textbf{Flow congruence:} The edge capacities $\flowLim{e}$ are large enough to sustain the flow through the graph. This is achieved by setting the capacity to be at least a fraction $\tau>1$ of the unconstrained Wardrop UE total flow $\SPTot$, which is in line with the literature~\cite{Motter2002}. 
\end{enumerate}
 
\subsubsection{Dynamics of congestion propagation} \label{emergencyPhase}
The previous section details the components of the flow network and the distribution of flow across it. In this section, we define the congestion dynamics triggered by an initial disruption. This disruption may cause congestion to propagate further through the network in multiple stages. Therefore, we use the superscript $(r)$ for every variable that is in stage $r\in \N$ of the cascade.

Stage $0$ represents the operation of the network before the first disruption occurs. Hence, the capacity vector $\flowLimVec^{(0)}$ is given by~\eqref{capacityPl} and the flow matrix $\flowM^{(0)}$ by $$\flowM^{(0)} = \flowM^*(\bm{U}, \flowLimVec^{(0)}).$$
The initial disruption can come from external factors such as the weather or internal factors such as a road accident. This disruption can lead to congestion and reduce the effective edge capacity, for example, when road lanes are closed. This affects the Wardrop UE, causing the redistribution of the network flow. Consequently, this may trigger additional disruptions throughout the network. Next, we describe this process in detail.

The cascade is initiated by a severe edge disruption at edge $e_1$, chosen uniformly at random from the set $\EE$. This reduces the effective capacity of edge~$e_1$ to 
$$\flowLim{e_1}^{(1)} = \phi_{init}\flowLim{e_1}^{(0)},$$ 
where $\phi_{init}  \in [l_m, l_M]\subset (0,1)$ is a continuous random variable with probability density function $p_{init}$ with support $[l_m, l_M]$. The capacities at all other edges remain the same. 

Figure~\ref{flowDiagram} illustrates the cascade process of disruptions. We now detail Steps 1 to 3 of Figure~\ref{flowDiagram} for an arbitrary cascade stage~$r\geq 2$: \begin{itemize}

    \item \textbf{Step 1.} To determine the set of disrupted edges, we consider the relative flow exceedance 
\begin{equation}
    \exc{e}^{(r)}:= \flow[e]^{(r)}/\flowLim{e}^{(r)}, \quad \text{for each } e\in \EE. \label{exceedance}
\end{equation}
Each overloaded edge $e$, i.e., with $\exc{e}^{(r)} > 1$,  experiences disruption with probability $p_e(\psi_e^{(r)})\in [0,1]$, independently of all other edge disruptions. Here, $p_e(\cdot)$ is some continuous, non-decreasing probability function. We denote the set of all disrupted edges in stage $r$ by $\EE^{(r)}$.
    
    \item As a consequence of the disruptions, in \textbf{Step 2}, the capacity of every edge $e\in \EE^{(r)}$ decreases. In particular, the updated edge capacity is now given by
\begin{equation} \label{nextFlowLim}
    \flowLim{e}^{(r)} := \begin{cases}
\phi_{e}\left(\exc{e}^{(r-1)}\right) \flowLim{e}^{(r-1)}, & \text{if } e\in \EE^{(r)},\\
\flowLim{e}^{(r-1)}, & \text{otherwise,}
        
    \end{cases} 
\end{equation}
where $\phi_e\in [l_m,l_M]\subset (0,1)$ is a continuous, non-decreasing function.

        \item In \textbf{Step 3}, the Wardrop UE flows are computed anew, using the updated flow capacity vector, i.e., $\flowM^{(r)} = \flowM^*(\bm{U},\flowLimVec^{(r)})$. 
            \end{itemize}
The cascade terminates once there are no edge disruptions in Step 1. For simplicity, we assume that each edge $e\in \EE$ can experience disruption at most $n_e\in \N$ times and we denote the current number of edge disruptions $e$ by $u_e$. If $u_e = n_e$, edge $e$ is excluded from the set of disrupted edges in Step~1. Thus, the cascade always terminates in a finite number of stages. We denote the random variable representing the disruption sequence by $D = (D_r)_{r\in \N}$, where $D_r$ is the set of edges disrupted in the $r$-th stage of the cascade. Furthermore, we denote the set of all possible disruption sequences by $\mathcal{D}$. We observe that $|\mathcal{D}|<\infty$ because every disruption sequence has finitely many stages and the graph has a finite number of edges.

To quantify the influence of the disruptions on the network performance, we consider a cascade measure $\Delta \flowCost^{(end)}$, capturing the increase in the flow cost due to the cascade. In particular, \begin{equation}
    \Delta\flowCost^{(end)} :=\flowCost(\flowM^{(end)}, \flowLimVec^{(end)}) -\flowCost(\flowM^{(0)}, \flowLimVec^{(0)}), \label{deltaFlowCost}
\end{equation}
where $\flowM^{(end)}$ and $\flowLimVec^{(end)}$ are the Wardrop UE flow and the flow capacity at the moment the disruption cascade terminates. Similarly, we define the added flow cost $\Delta \flowCost^{(r)}$ at the end of the $r$-th stage of the cascade as
\begin{equation}   \Delta\flowCost^{(r)} :=\flowCost(\flowM^{(r)}, \flowLimVec^{(r)}) -\flowCost(\flowM^{(0)}, \flowLimVec^{(0)}).\label{deltaFlowCostr}\end{equation}
Note that $\flowM^{(r)}$ and $\flowLimVec^{(r)}$ may be ill-defined because the cascade may have terminated at stage $q$, for some $q<r$. In such a case, we assume that $\flowM^{(r)} = \flowM^{(q)}$ and $\flowLimVec^{(r)} = \flowLimVec^{(q)}$.

Lastly, we note that the behavior of the model is probabilistic as it depends on the vertex weight vector $\pareto$, the initial capacity decrease factor $\phi_{init},$ and the disruption sequence $D$, which are random variables. Our analysis requires us to condition on these random variables. Hence, in this paper, for any quantity $A$, we use $A(\cdot)$ to denote $A$, conditioned on $(\cdot)$. The summary of the notation is provided in Table~\ref{tab:my_label} in Appendix~\ref{notation}.

\subsection{Analysis roadmap} \label{roadmap}
In this section, we discuss some key properties of the model that are crucial in proving the results we presented in Section~\ref{results}; namely, we show that the model exhibits scale-invariance, continuity, and the catastrophe principle. All these properties are formally proven in Appendix~\ref{analysis}.

In Appendix~\ref{subsec:scaleInv}, we show that our model is scale-invariant. This means that the behavior of the model is not dependent on the scale of vertex weights, but only on the proportions between them. This property is shown in steps through Lemmas~\ref{ScaleInvariancePl}--\ref{cascadeProbInv}.  Lemma~\ref{ScaleInvariancePl} shows the scale-invariance of the initial edge capacity vector. More specifically, for fixed vertex weight vector $\bm{X}$ and all $\omega>0$,
$$\flowLimVec^{(0)}(\omega\bm{X}) = \omega \flowLimVec^{(0)}(\bm{X}).$$ This follows from the properties of the Optimal Flow Problem~\eqref{optFlowProblem} and the definition of $\flowLimVec$ given in Equation~\eqref{capacityPl}. In Lemma~\ref{ScaleInvarianceCascade}, we show that the scale-invariance property can be extended to flow matrices and capacity vectors at every stage of the cascade, for a given vertex weight vector $\pareto$, initial capacity decrease factor $\phi_{init}$ and cascade $D$. In other words, for all $r\in \N$ and $\omega>0,$
\[\flowLimVec^{(r)}(\omega \pareto, \phi_{init}, D) = \omega \flowLimVec^{(r)}(\pareto, \phi_{init}, D) \quad \text{and}\quad\flowM^{(r)}(\omega \pareto, \phi_{init}, D) = \omega \flowM^{(r)}(\pareto, \phi_{init}, D). \]
We prove this using induction on the cascade stage, where the base case is again derived from the properties of the Optimal Flow Problem~\eqref{optFlowProblem}. This, together with the definition of the flow cost function and cascade cost (Equations~\eqref{costEquation}~and~\eqref{deltaFlowCostr}), directly implies that \[\Delta \flowCost^{(r)}(\omega \pareto, \phi_{init}, D) = \omega \Delta \flowCost^{(r)}(\pareto, \phi_{init}, D),\] which we show in Corollary~\ref{scaleInvDelta}. Last, in Lemma~\ref{cascadeProbInv}, we show that the probability of observing a particular disruption cascade $D$ is independent of $\omega$, i.e., 
\[\PR{D = d|\omega \bm{X},  \phi_{init}} =\PR{D = d|\bm{X},  \phi_{init}} \quad \text{for all } \omega>0.\]
This follows from the cascade mechanics and is proven using an iterative argument. An important consequence of these four results is that it is sufficient to study the cascade behavior for normalized vertex weight vectors $\bm{X}/\max(\bm{X})$, where the normalization is performed by dividing by the largest vertex weight. This explains the occurrence of normalized vertex weights in the Equation~\eqref{eq:Cr} of Theorem~\ref{mainTheorem}.

In Appendix~\ref{continuity}, we show that our model exhibits continuous behavior with respect to the vertex weight vector $\pareto$. This property indicates that a small change in the vertex weights has a small impact on the behavior of the network. In Lemmas~\ref{continuityPlanningOp} and~\ref{continuityFlowr}, we show that the continuity property holds for flow matrices and flow capacity vectors at every step of the cascade. Specifically, consider a convergent sequence of vertex weight vectors $(\bm{X}^k)_{k\in \N}$ with limit $\bm{X}^*\neq 0$, such that $\bm{X}^k\geq \bm{X}^*$ for all $k$. Further, take disruption sequence $D$ with initial congestion factor $\phi_{init}$. We then have that the flow capacity and the Wardrop UE flow at stage $r$, given $\pareto^k$, $\phi_{init}$ and $D$, converge to the flow capacity and the Wardrop UE flow at stage $r$, given $\pareto^*$, $\phi_{init}$, and $D$. In other words, \[\lim_{k\rightarrow \infty}\flowLimVec^{(r)}(\bm{X}^k,\phi_{init},D) = \flowLimVec^{(r)}(\bm{X}^*,\phi_{init},D)\quad \text{and}\quad
\lim_{k\rightarrow \infty}\flowM^{(r)}(\bm{X}^k,\phi_{init},D) = \flowM^{(r)}(\bm{X}^*,\phi_{init},D),\] for all $r\in \N$. This, together with the continuity of the flow cost function also implies the continuity of the congestion cost function at each step of the cascade. In Lemma~\ref{cascadeCont}, we show that the probability of a cascade is also a right-continuous function of vertex weights. 

The catastrophe principle, stated in Proposition~\ref{sbj}, indicates that with high probability cascades that lead to large congestion costs occur when one vertex has a much larger weight compared to all other vertices. We prove this proposition in Appendix~\ref{catastrophePrinc}, where we apply well-known properties of scale-free distributions and the fact that the flow cost at every stage of the cascade can be bounded by $M\cdot \max\{\pareto\}$ for some constant $M>0$. The latter follows from the scale-invariance of the flow cost function and Lemma~\ref{BoundFlow}. The catastrophe principle, together with the scale-invariance and continuity properties, implies that to understand the probability of large cascades, it is sufficient to analyze the cases where one vertex has weight 1 and all other vertices have weight 0, which explains the structure of the prefactor $C^{(r)}$ in Equation~\eqref{eq:Cr} of Theorem~\ref{mainTheorem} and the choice for our example in Section~\ref{results}.

Using these key properties of the model, we prove the main result, Theorem~\ref{mainTheorem}, in Appendix~\ref{mainTheoremProof}. The asymptotic behavior of the congestion cost $\Delta{c_f}$ is shown by constructing upper and lower bounds for the congestion cost probabilities at every stage of the cascade, and showing that both bounds are asymptotically equal as $y\rightarrow \infty$. 
\section*{Acknowledgments}
This work is supported by NWO through Gravitation NETWORKS grant no. 024.002.003.

\bibliographystyle{naturemag}
\newpage \bibliography{sample}

\begin{thebibliography}{10}
\expandafter\ifx\csname url\endcsname\relax
  \def\url#1{\texttt{#1}}\fi
\expandafter\ifx\csname urlprefix\endcsname\relax\def\urlprefix{URL }\fi
\providecommand{\bibinfo}[2]{#2}
\providecommand{\eprint}[2][]{\url{#2}}

\bibitem{LIU2009412}
\bibinfo{author}{Liu, H.}, \bibinfo{author}{Wu, X.}, \bibinfo{author}{Ma, W.} \& \bibinfo{author}{Hu, H.}
\newblock \bibinfo{title}{Real-time queue length estimation for congested signalized intersections}.
\newblock \emph{\bibinfo{journal}{Transportation Research Part C: Emerging Technologies}} \textbf{\bibinfo{volume}{17}}, \bibinfo{pages}{412--427} (\bibinfo{year}{2009}).

\bibitem{roadSegmentCongestion}
\bibinfo{author}{Lindsey, C.} \& \bibinfo{author}{Verhoef, E.}
\newblock \bibinfo{title}{Congestion modelling}.
\newblock \bibinfo{type}{Tinbergen Institute Discussion Papers} \bibinfo{number}{99-091/3}, \bibinfo{institution}{Tinbergen Institute} (\bibinfo{year}{1999}).

\bibitem{Long_Gao_Ren_Lian_2008}
\bibinfo{author}{Long, J.}, \bibinfo{author}{Gao, Z.}, \bibinfo{author}{Ren, H.} \& \bibinfo{author}{Lian, A.}
\newblock \bibinfo{title}{Urban traffic congestion propagation and bottleneck identification}.
\newblock \emph{\bibinfo{journal}{Science in China Series F: Information Sciences}} \textbf{\bibinfo{volume}{51}}, \bibinfo{pages}{948–964} (\bibinfo{year}{2008}).

\bibitem{Ranjan2020}
\bibinfo{author}{Ranjan, N.}, \bibinfo{author}{Bhandari, S.}, \bibinfo{author}{Zhao, H.}, \bibinfo{author}{Kim, H.} \& \bibinfo{author}{Khan, P.}
\newblock \bibinfo{title}{City-wide traffic congestion prediction based on {CNN}, {LSTM} and {Transpose CNN}}.
\newblock \emph{\bibinfo{journal}{IEEE Access}} \textbf{\bibinfo{volume}{8}}, \bibinfo{pages}{81606--81620} (\bibinfo{year}{2020}).

\bibitem{salter1996highway}
\bibinfo{author}{Salter, R.} \& \bibinfo{author}{Hounsell, N.}
\newblock \emph{\bibinfo{title}{Highway Traffic Analysis and Design}} (\bibinfo{publisher}{Bloomsbury Publishing}, \bibinfo{year}{1996}).

\bibitem{Avila_Mezić_2020}
\bibinfo{author}{Avila, A.} \& \bibinfo{author}{Mezić, I.}
\newblock \bibinfo{title}{Data-driven analysis and forecasting of highway traffic dynamics}.
\newblock \emph{\bibinfo{journal}{Nature Communications}} \textbf{\bibinfo{volume}{11}} (\bibinfo{year}{2020}).

\bibitem{mardiati2014review}
\bibinfo{author}{Mardiati, R.}, \bibinfo{author}{Ismail, N.} \& \bibinfo{author}{Faroqi, A.}
\newblock \bibinfo{title}{Review of microscopic model for traffic flow}.
\newblock \emph{\bibinfo{journal}{ARPN Journal of Engineering and Applied Sciences}} \textbf{\bibinfo{volume}{9}}, \bibinfo{pages}{1794--1800} (\bibinfo{year}{2014}).

\bibitem{lee2001macroscopic}
\bibinfo{author}{Lee, H.}, \bibinfo{author}{Lee, H.-W.} \& \bibinfo{author}{Kim, D.}
\newblock \bibinfo{title}{Macroscopic traffic models from microscopic car-following models}.
\newblock \emph{\bibinfo{journal}{Physical Review E}} \textbf{\bibinfo{volume}{64}}, \bibinfo{pages}{056126} (\bibinfo{year}{2001}).

\bibitem{Nagatani_2002}
\bibinfo{author}{Nagatani, T.}
\newblock \bibinfo{title}{The physics of traffic jams}.
\newblock \emph{\bibinfo{journal}{Reports on Progress in Physics}} \textbf{\bibinfo{volume}{65}}, \bibinfo{pages}{1331–1386} (\bibinfo{year}{2002}).

\bibitem{1}
\bibinfo{author}{Wu, J.}, \bibinfo{author}{Sun, H.} \& \bibinfo{author}{Gao, Z.}
\newblock \bibinfo{title}{Cascading failures on weighted urban traffic equilibrium networks}.
\newblock \emph{\bibinfo{journal}{Physica A: Statistical Mechanics and its Applications}} \textbf{\bibinfo{volume}{386}}, \bibinfo{pages}{407--413} (\bibinfo{year}{2007}).

\bibitem{5}
\bibinfo{author}{Yin, R.}, \bibinfo{author}{Yuan, H.}, \bibinfo{author}{Wang, J.}, \bibinfo{author}{Zhao, N.} \& \bibinfo{author}{Liu, L.}
\newblock \bibinfo{title}{Modeling and analyzing cascading dynamics of the urban road traffic network}.
\newblock \emph{\bibinfo{journal}{Physica A: Statistical Mechanics and its Applications}} \textbf{\bibinfo{volume}{566}} (\bibinfo{year}{2021}).

\bibitem{6}
\bibinfo{author}{Su, Z.} \emph{et~al.}
\newblock \bibinfo{title}{Robustness of interrelated traffic networks to cascading failures}.
\newblock \emph{\bibinfo{journal}{Scientific Reports}} \textbf{\bibinfo{volume}{4}} (\bibinfo{year}{2014}).

\bibitem{9}
\bibinfo{author}{Duan, J.}, \bibinfo{author}{Li, D.} \& \bibinfo{author}{Huang, H.}
\newblock \bibinfo{title}{Reliability of the traffic network against cascading failures with individuals acting independently or collectively}.
\newblock \emph{\bibinfo{journal}{Transportation Research Part C: Emerging Technologies}} \textbf{\bibinfo{volume}{147}} (\bibinfo{year}{2023}).

\bibitem{Guo2017}
\bibinfo{author}{Guo, H.}, \bibinfo{author}{Zheng, C.}, \bibinfo{author}{Iu, H.} \& \bibinfo{author}{Fernando, T.}
\newblock \bibinfo{title}{A critical review of cascading failure analysis and modeling of power system}.
\newblock \emph{\bibinfo{journal}{Renewable and Sustainable Energy Reviews}} \textbf{\bibinfo{volume}{80}}, \bibinfo{pages}{9--22} (\bibinfo{year}{2017}).

\bibitem{Huang_Vodenska_Havlin_Stanley_2013}
\bibinfo{author}{Huang, X.}, \bibinfo{author}{Vodenska, I.}, \bibinfo{author}{Havlin, S.} \& \bibinfo{author}{Stanley, H.}
\newblock \bibinfo{title}{Cascading failures in bi-partite graphs: Model for systemic risk propagation}.
\newblock \emph{\bibinfo{journal}{Scientific Reports}} \textbf{\bibinfo{volume}{3}} (\bibinfo{year}{2013}).

\bibitem{Daqing_2014}
\bibinfo{author}{Daqing, L.}, \bibinfo{author}{Yinan, J.}, \bibinfo{author}{Rui, K.} \& \bibinfo{author}{Havlin, S.}
\newblock \bibinfo{title}{Spatial correlation analysis of cascading failures: Congestions and blackouts}.
\newblock \emph{\bibinfo{journal}{Scientific Reports}} \textbf{\bibinfo{volume}{4}} (\bibinfo{year}{2014}).

\bibitem{Watts_2002}
\bibinfo{author}{Watts, D.}
\newblock \bibinfo{title}{A simple model of global cascades on random networks}.
\newblock \emph{\bibinfo{journal}{Proceedings of the National Academy of Sciences}} \textbf{\bibinfo{volume}{99}}, \bibinfo{pages}{5766–5771} (\bibinfo{year}{2002}).

\bibitem{Crucitti2004}
\bibinfo{author}{Crucitti, P.}, \bibinfo{author}{Latora, V.} \& \bibinfo{author}{Marchiori, M.}
\newblock \bibinfo{title}{Model for cascading failures in complex networks}.
\newblock \emph{\bibinfo{journal}{Physical Review E}} \textbf{\bibinfo{volume}{69}} (\bibinfo{year}{2004}).

\bibitem{Mahdi2017}
\bibinfo{author}{Jalili, M.} \& \bibinfo{author}{Perc, M.}
\newblock \bibinfo{title}{{Information cascades in complex networks}}.
\newblock \emph{\bibinfo{journal}{Journal of Complex Networks}} \textbf{\bibinfo{volume}{5}}, \bibinfo{pages}{665--693} (\bibinfo{year}{2017}).

\bibitem{barabasi1999emergence}
\bibinfo{author}{Barab{\'a}si, A.} \& \bibinfo{author}{Albert, R.}
\newblock \bibinfo{title}{Emergence of scaling in random networks}.
\newblock \emph{\bibinfo{journal}{Science}} \textbf{\bibinfo{volume}{286}}, \bibinfo{pages}{509--512} (\bibinfo{year}{1999}).

\bibitem{int03}
\bibinfo{author}{Faloutsos, M.}, \bibinfo{author}{Faloutsos, P.} \& \bibinfo{author}{Faloutsos, C.}
\newblock \bibinfo{title}{On power-law relationships of the internet topology}.
\newblock \emph{\bibinfo{journal}{ACM SIGCOMM Computer Communication Review}} \textbf{\bibinfo{volume}{29}} (\bibinfo{year}{2003}).

\bibitem{chen2001analysis}
\bibinfo{author}{Chen, J.}, \bibinfo{author}{Thorp, J.} \& \bibinfo{author}{Parashar, M.}
\newblock \bibinfo{title}{Analysis of electric power system disturbance data}.
\newblock In \emph{\bibinfo{booktitle}{Proceedings of the 34th Annual Hawaii International Conference on System Sciences}}, vol.~\bibinfo{volume}{1}, \bibinfo{pages}{13} (\bibinfo{year}{2001}).

\bibitem{Guimer2004}
\bibinfo{author}{Guimerá, R.} \& \bibinfo{author}{Amaral, L.}
\newblock \bibinfo{title}{Modeling the world-wide airport network}.
\newblock \emph{\bibinfo{journal}{The European Physical Journal B - Condensed Matter}} \textbf{\bibinfo{volume}{38}}, \bibinfo{pages}{381–385} (\bibinfo{year}{2004}).

\bibitem{Chen_Lin_Yan_Liu_Liu_Li_2024}
\bibinfo{author}{Chen, R.} \emph{et~al.}
\newblock \bibinfo{title}{Scaling law of real traffic jams under varying travel demand}.
\newblock \emph{\bibinfo{journal}{EPJ Data Science}} \textbf{\bibinfo{volume}{13}} (\bibinfo{year}{2024}).

\bibitem{Zhang_Zeng_Li_Huang_Stanley_Havlin_2019}
\bibinfo{author}{Zhang, L.} \emph{et~al.}
\newblock \bibinfo{title}{Scale-free resilience of real traffic jams}.
\newblock \emph{\bibinfo{journal}{Proceedings of the National Academy of Sciences}} \textbf{\bibinfo{volume}{116}}, \bibinfo{pages}{8673–8678} (\bibinfo{year}{2019}).

\bibitem{Serok_Havlin_Blumenfeld_Lieberthal_2022}
\bibinfo{author}{Serok, N.}, \bibinfo{author}{Havlin, S.} \& \bibinfo{author}{Blumenfeld~Lieberthal, E.}
\newblock \bibinfo{title}{Identification, cost evaluation, and prioritization of urban traffic congestions and their origin}.
\newblock \emph{\bibinfo{journal}{Scientific reports}} \textbf{\bibinfo{volume}{12}} (\bibinfo{year}{2022}).

\bibitem{LAVAL2023104056}
\bibinfo{author}{Laval, J.}
\newblock \bibinfo{title}{Self-organized criticality of traffic flow: Implications for congestion management technologies}.
\newblock \emph{\bibinfo{journal}{Transportation Research Part C: Emerging Technologies}} \textbf{\bibinfo{volume}{149}}, \bibinfo{pages}{104056} (\bibinfo{year}{2023}).

\bibitem{Nagatani_1995}
\bibinfo{author}{Nagatani, T.}
\newblock \bibinfo{title}{Self-organized criticality in asymmetric exclusion model with noise for freeway traffic}.
\newblock \emph{\bibinfo{journal}{Physica A: Statistical Mechanics and its Applications}} \textbf{\bibinfo{volume}{218}}, \bibinfo{pages}{145–154} (\bibinfo{year}{1995}).

\bibitem{ROSEN1980165}
\bibinfo{author}{Rosen, K.} \& \bibinfo{author}{Resnick, M.}
\newblock \bibinfo{title}{The size distribution of cities: An examination of the {P}areto law and primacy}.
\newblock \emph{\bibinfo{journal}{Journal of Urban Economics}} \textbf{\bibinfo{volume}{8}}, \bibinfo{pages}{165--186} (\bibinfo{year}{1980}).

\bibitem{Bettencourt_West_2010}
\bibinfo{author}{Bettencourt, L.} \& \bibinfo{author}{West, G.}
\newblock \bibinfo{title}{A unified theory of urban living}.
\newblock \emph{\bibinfo{journal}{Nature}} \textbf{\bibinfo{volume}{467}}, \bibinfo{pages}{912–913} (\bibinfo{year}{2010}).

\bibitem{2}
\bibinfo{author}{Wu, J.}, \bibinfo{author}{Gao, Z.} \& \bibinfo{author}{Sun, H.}
\newblock \bibinfo{title}{Effects of the cascading failures on scale-free traffic networks}.
\newblock \emph{\bibinfo{journal}{Physica A: Statistical Mechanics and its Applications}} \textbf{\bibinfo{volume}{378}}, \bibinfo{pages}{505--511} (\bibinfo{year}{2007}).

\bibitem{7}
\bibinfo{author}{Liu, H.} \& \bibinfo{author}{Wang, J.}
\newblock \bibinfo{title}{Vulnerability assessment for cascading failure in the highway traffic system}.
\newblock \emph{\bibinfo{journal}{Sustainability (Switzerland)}} \textbf{\bibinfo{volume}{10}} (\bibinfo{year}{2018}).

\bibitem{8}
\bibinfo{author}{Wang, Z.}, \bibinfo{author}{Pei, Y.}, \bibinfo{author}{Liu, J.} \& \bibinfo{author}{Liu, H.}
\newblock \bibinfo{title}{Vulnerability analysis of urban road networks based on traffic situation}.
\newblock \emph{\bibinfo{journal}{International Journal of Critical Infrastructure Protection}} \textbf{\bibinfo{volume}{41}} (\bibinfo{year}{2023}).

\bibitem{11}
\bibinfo{author}{Wang, S.}, \bibinfo{author}{Zhao, J.}, \bibinfo{author}{Zheng, J.} \& \bibinfo{author}{Li, A.}
\newblock \bibinfo{title}{Urban traffic congestion load redistribution control based on complex networks}.
\newblock \emph{\bibinfo{journal}{Journal of Highway and Transponation Research and Development}} \textbf{\bibinfo{volume}{13}}, \bibinfo{pages}{80} (\bibinfo{year}{2019}).

\bibitem{4}
\bibinfo{author}{Mandayam, C.} \& \bibinfo{author}{Prabhakar, B.}
\newblock \bibinfo{title}{Traffic congestion: Models, costs and optimal transport}.
\newblock In \emph{\bibinfo{booktitle}{The 2014 ACM International Conference on Measurement and Modeling of Computer Systems}}, SIGMETRICS '14, \bibinfo{pages}{553–554} (\bibinfo{publisher}{Association for Computing Machinery}, \bibinfo{address}{New York, NY, USA}, \bibinfo{year}{2014}).

\bibitem{DESHKAR2016}
\bibinfo{author}{Deshkar, S.} \& \bibinfo{author}{Adane, V.}
\newblock \bibinfo{title}{16: Community resilience approach for prioritizing infrastructure development in urban areas}.
\newblock In \emph{\bibinfo{booktitle}{Urban Disasters and Resilience in Asia}}, \bibinfo{pages}{245--267} (\bibinfo{publisher}{Butterworth-Heinemann}, \bibinfo{year}{2016}).

\bibitem{3}
\bibinfo{author}{Qian, Y.}, \bibinfo{author}{Wang, B.}, \bibinfo{author}{Xue, Y.}, \bibinfo{author}{Zeng, J.} \& \bibinfo{author}{Wang, N.}
\newblock \bibinfo{title}{A simulation of the cascading failure of a complex network model by considering the characteristics of road traffic conditions}.
\newblock \emph{\bibinfo{journal}{Nonlinear Dynamics}} \textbf{\bibinfo{volume}{80}}, \bibinfo{pages}{413--420} (\bibinfo{year}{2015}).

\bibitem{Wang_Xu_Wu_2015}
\bibinfo{author}{Wang, J.}, \bibinfo{author}{Xu, B.} \& \bibinfo{author}{Wu, Y.}
\newblock \bibinfo{title}{Ability paradox of cascading model based on betweenness}.
\newblock \emph{\bibinfo{journal}{Scientific Reports}} \textbf{\bibinfo{volume}{5}} (\bibinfo{year}{2015}).

\bibitem{Kelly_Yudovina_2014}
\bibinfo{author}{Kelly, F.} \& \bibinfo{author}{Yudovina, E.}
\newblock \emph{\bibinfo{title}{Stochastic Networks}}.
\newblock Institute of Mathematical Statistics Textbooks (\bibinfo{publisher}{Cambridge University Press}, \bibinfo{year}{2014}).

\bibitem{nair_wierman_zwart_2022}
\bibinfo{author}{Nair, J.}, \bibinfo{author}{Wierman, A.} \& \bibinfo{author}{Zwart, B.}
\newblock \emph{\bibinfo{title}{The Fundamentals of Heavy Tails: Properties, Emergence, and Estimation}}.
\newblock Cambridge Series in Statistical and Probabilistic Mathematics (\bibinfo{publisher}{Cambridge University Press}, \bibinfo{year}{2022}).

\bibitem{Nesti_2020}
\bibinfo{author}{Nesti, T.}, \bibinfo{author}{Sloothaak, S.} \& \bibinfo{author}{Zwart, B.}
\newblock \bibinfo{title}{Emergence of scale-free blackout sizes in power grids}.
\newblock \emph{\bibinfo{journal}{Physical Review Letters}} \textbf{\bibinfo{volume}{125}} (\bibinfo{year}{2020}).

\bibitem{janicka2024scalefree}
\bibinfo{author}{Janicka, A.}, \bibinfo{author}{Sloothaak, F.} \& \bibinfo{author}{Vlasiou, M.}
\newblock \bibinfo{title}{Scale-free cascading failures: Generalized approach for all simple, connected graphs} (\bibinfo{year}{2024}).
\newblock \eprint{arXiv:2403.11727}.

\bibitem{Lv_21}
\bibinfo{author}{Lv, S.}, \bibinfo{author}{Wei, Z.}, \bibinfo{author}{Sun, G.}, \bibinfo{author}{Chen, S.} \& \bibinfo{author}{Zang, H.}
\newblock \bibinfo{title}{Power and traffic nexus: From perspective of power transmission network and electrified highway network}.
\newblock \emph{\bibinfo{journal}{IEEE Transactions on Transportation Electrification}} \textbf{\bibinfo{volume}{7}}, \bibinfo{pages}{566--577} (\bibinfo{year}{2021}).

\bibitem{BOCCALETTI20141}
\bibinfo{author}{Boccaletti, S.} \emph{et~al.}
\newblock \bibinfo{title}{The structure and dynamics of multilayer networks}.
\newblock \emph{\bibinfo{journal}{Physics Reports}} \textbf{\bibinfo{volume}{544}}, \bibinfo{pages}{1--122} (\bibinfo{year}{2014}).

\bibitem{Buldyrev_Parshani_Paul_Stanley_Havlin_2010}
\bibinfo{author}{Buldyrev, S.}, \bibinfo{author}{Parshani, R.}, \bibinfo{author}{Paul, G.}, \bibinfo{author}{Stanley, H.} \& \bibinfo{author}{Havlin, S.}
\newblock \bibinfo{title}{Catastrophic cascade of failures in interdependent networks}.
\newblock \emph{\bibinfo{journal}{Nature}} \textbf{\bibinfo{volume}{464}}, \bibinfo{pages}{1025–1028} (\bibinfo{year}{2010}).

\bibitem{Vespignani_2010}
\bibinfo{author}{Vespignani, A.}
\newblock \bibinfo{title}{The fragility of interdependency}.
\newblock \emph{\bibinfo{journal}{Nature}} \textbf{\bibinfo{volume}{464}}, \bibinfo{pages}{984–985} (\bibinfo{year}{2010}).

\bibitem{DeDomenico_2015}
\bibinfo{author}{De~Domenico, M.}, \bibinfo{author}{Lancichinetti, A.}, \bibinfo{author}{Arenas, A.} \& \bibinfo{author}{Rosvall, M.}
\newblock \bibinfo{title}{Identifying modular flows on multilayer networks reveals highly overlapping organization in interconnected systems}.
\newblock \emph{\bibinfo{journal}{Physical review. X}} \textbf{\bibinfo{volume}{5}} (\bibinfo{year}{2015}).

\bibitem{Bianconi_2014}
\bibinfo{author}{Bianconi, G.}
\newblock \bibinfo{title}{Dangerous liaisons?}
\newblock \emph{\bibinfo{journal}{Nature Physics}} \textbf{\bibinfo{volume}{10}}, \bibinfo{pages}{712–714} (\bibinfo{year}{2014}).

\bibitem{Su_Li_Peng_Kurths_Xiao_Yang_2014}
\bibinfo{author}{Su, Z.} \emph{et~al.}
\newblock \bibinfo{title}{Robustness of interrelated traffic networks to cascading failures}.
\newblock \emph{\bibinfo{journal}{Scientific Reports}} \textbf{\bibinfo{volume}{4}} (\bibinfo{year}{2014}).

\bibitem{Beckmann_1958}
\bibinfo{author}{Beckmann, M.}
\newblock \bibinfo{title}{City hierarchies and the distribution of city size}.
\newblock \emph{\bibinfo{journal}{Economic Development and Cultural Change}} \textbf{\bibinfo{volume}{6}}, \bibinfo{pages}{243–248} (\bibinfo{year}{1958}).

\bibitem{Berry1961}
\bibinfo{author}{Berry, B.}
\newblock \bibinfo{title}{City size distributions and economic development}.
\newblock \emph{\bibinfo{journal}{Economic Development and Cultural Change}} \textbf{\bibinfo{volume}{9}}, \bibinfo{pages}{573–588} (\bibinfo{year}{1961}).

\bibitem{mCMF1}
\bibinfo{author}{Goldberg, A.}, \bibinfo{author}{Oldham, J.}, \bibinfo{author}{Plotkin, S.} \& \bibinfo{author}{Stein, C.}
\newblock \bibinfo{title}{An implementation of a combinatorial approximation algorithm for minimum-cost multicommodity flow}.
\newblock In \emph{\bibinfo{booktitle}{Integer Programming and Combinatorial Optimization}}, \bibinfo{pages}{338--352} (\bibinfo{publisher}{Springer Berlin Heidelberg}, \bibinfo{address}{Berlin, Heidelberg}, \bibinfo{year}{1998}).

\bibitem{Barnhart2009}
\bibinfo{author}{Barnhart, C.}, \bibinfo{author}{Krishnan, N.} \& \bibinfo{author}{Vance, P.}
\newblock \bibinfo{title}{Multicommodity flow problems}.
\newblock In \emph{\bibinfo{booktitle}{Encyclopedia of Optimization}}, \bibinfo{pages}{2354--2362} (\bibinfo{publisher}{Springer US}, \bibinfo{address}{Boston, MA}, \bibinfo{year}{2009}).

\bibitem{Whittle_2007}
\bibinfo{author}{Whittle, P.}
\newblock \emph{\bibinfo{title}{Networks: Optimisation and Evolution}}.
\newblock Cambridge Series in Statistical and Probabilistic Mathematics (\bibinfo{publisher}{Cambridge University Press}, \bibinfo{year}{2007}).

\bibitem{boyd2004convex}
\bibinfo{author}{Boyd, S.} \& \bibinfo{author}{Vandenberghe, L.}
\newblock \emph{\bibinfo{title}{Convex Optimization}} (\bibinfo{publisher}{Cambridge University Press}, \bibinfo{year}{2004}).

\bibitem{Motter2002}
\bibinfo{author}{Motter, A.} \& \bibinfo{author}{Lai, Y.}
\newblock \bibinfo{title}{Cascade-based attacks on complex networks}.
\newblock \emph{\bibinfo{journal}{Physical Review E}} \textbf{\bibinfo{volume}{66}}, \bibinfo{pages}{065102} (\bibinfo{year}{2002}).

\bibitem{plfit09}
\bibinfo{author}{Clauset, A.}, \bibinfo{author}{Shalizi, C.} \& \bibinfo{author}{Newman, M.}
\newblock \bibinfo{title}{Power-law distributions in empirical data}.
\newblock \emph{\bibinfo{journal}{SIAM Review}} \textbf{\bibinfo{volume}{51}}, \bibinfo{pages}{661--703} (\bibinfo{year}{2009}).

\bibitem{Faisca_Dua_Pistikopoulos_2007}
\bibinfo{author}{Faisca, N.}, \bibinfo{author}{Dua, V.} \& \bibinfo{author}{Pistikopoulos, E.}
\newblock \bibinfo{title}{Multiparametric linear and quadratic programming}.
\newblock \emph{\bibinfo{journal}{Multi-Parametric Programming}} \bibinfo{pages}{1--23} (\bibinfo{year}{2007}).

\end{thebibliography}
\appendix
\newpage
\section{Notation} \label{notation}
\begin{table}[h!]
    \centering
    \renewcommand{\arraystretch}{1.16}
    \begin{tabular}{|c|l|}
    \hline
    \textbf{Variable} & \textbf{Description}\\
    \hline
    $\N$ & The set of natural numbers excluding 0\\
    \hline
    $\R_+$ & The set of non-negative real numbers \\
    \hline
         $\G = (\V,\EE)$ & A connected, directed graph representing the underlying network\\
    \hline
         $\V$ & The set of vertices of $\G$\\
    \hline
         $\EE$ & The set of edges of $\G$ \\
    \hline
         $\nv = |\V|$ & Number of vertices in the network\\
    \hline
         $\nee = |\EE|$ & Number of edges in the network\\
    \hline
         $\incM$ & The incidence matrix of graph $\G$ \\
    \hline
         $\bm{X} = (X_v)_{v\in \V}$ & The vertex weight vector, with $X_v$ being the weight of vertex $v\in \V$\\
    \hline
         $\text{diag}(\pareto)$ & Diagonal matrix of vertex weights \\
    \hline
         $q_{v,w}\geq 0$ & Fraction of weight of vertex $w$ with destination at vertex $v$ \\
    \hline
        $\gen[v,w]\in \R_+$ & Amount of commodity $w\in \V$ with origin at vertex $v\in \V$\\
    \hline
         $\dem[v,w] := q_{v,w}\paretoI{w}\in \R_+$ & Amount commodity $w\in\V$ with destination vertex $v\in \V$\\
    \hline

         $\bm{Q} := [q_{v,w}]_{v,w\in \V}$ & Travel factor matrix of size $\nv\times \nv$\\
    \hline

         $\genM := [\gen[v,w]]_{v,w\in \V}$ & Origin matrix of size $\nv\times \nv$, with $\genM = \text{diag}(\pareto)$\\
         
    \hline
             $\demM := [\dem[v,w]]_{v,w\in \V}$ & Destination matrix of size $\nv\times \nv$, with $\demM = \bm{Q}\genM$\\
    \hline
         $\bm{U} := \demM- \genM$ & Net travel matrix of size $\nv\times \nv$\\
    \hline
         $\flow[e,w]$ & Flow of commodity $w\in \V$ on edge $e\in \EE$ with $\flow[e,w]\geq 0$\\
    \hline
         $\flowM := [\flow[e,w]]_{e\in\EE,w\in \V}$ & Flow matrix of size $\nee\times \nv$\\
    \hline
         $\bm{e}$ & All-ones vector of appropriate size\\
    \hline
         $\unitV{j}$ & $j$-th unit vector of size $\nv$\\
    \hline
         $\flow[e] := (\flowM\bm{e})_e$ & Total flow on edge $e\in \EE$, i.e., the sum of flows of commodities on edge $e$ \\
    \hline
         $\flowTot:= (\flow[e])_{e\in\EE}$ & Vector of total edge flows\\
    \hline
         $\flowLim{e}$ & The capacity of edge $e\in \EE$\\
    \hline
         $\flowLimVec :=(\flowLim{e})_{e\in\EE}$ & Vector of edge capacities \\
    \hline
         $\flowCost(\flowM, \flowLimVec)$ & Flow cost function, $\flowCost: \R^{\nee\times\nk}_+\times \R^{\nee}_+ \rightarrow \R_+$ \\
    \hline
        $\flowMOpt{\bm{U}, \flowLimVec}$ & Function describing the flow distribution mechanism (see Problem~\eqref{optFlowProblem})\\
    \hline
        $\SPM(.)$ & The shortest path flow matrix, given by $\SPM(.) = \flowM^*(., \infty)$\\
            \hline
    $\SPTot(.)$ & The shortest path total flow vector, given by $\SPTot(.) = \SPM(.) \bm{e}$\\
    \hline
        $\tau$ & Planning slack parameter; $\tau\geq 1$ \\
    \hline
        $\phi_{init}$ & The initial (random) congestion factor\\
    \hline 
        $p_{init}$ & The probability mass function of $\phi_{init}$\\
    \hline 
        $\exc{e}$ & Relative flow exceedance on edge $e\in \EE$; see Equation~\eqref{exceedance}\\
    \hline 
        $n_e$ & Number of disruption states of edge $e\in \EE$ \\
    \hline
        $u_e$ & Disruption state of edge $e\in \EE$; $u_e\in\{0,1,\dots, n_e\}$\\
    \hline 
        $\phi_e(\exc)$ & Congestion factor of edge $e\in \EE$; $\phi_e: \R_+\rightarrow [l_m,l_M]\subset (0,1)$\\
    \hline 
        $p_e(\psi)$ & Disruption probability of edge $e\in \EE$; $p_e: \R_+\rightarrow [0,1]$\\
    \hline 
        $\Delta \flowCost$ & Congestion cost, see Equations~\eqref{deltaFlowCost} and~\eqref{deltaFlowCostr}\\
    \hline
        $D$ & Disruption sequence\\
    \hline 
        $\mathcal{D}$ & Set of all possible disruption sequences $D$ \\
    \hline
    \end{tabular}
    \renewcommand{\arraystretch}{1}
    \caption{Summary of notation.} 
    \label{tab:my_label}
\end{table}

\newpage\section{Scale-free characteristics in Dutch traffic data}\label{DataAnalysis}

To propose a meaningful theoretical framework for traffic congestion, it is essential to first understand the fundamental characteristics of real traffic networks. Recent analyses of congestion data from China have suggested the existence of scale-free behavior in their transportation system \cite{ Chen_Lin_Yan_Liu_Liu_Li_2024, Zhang_Zeng_Li_Huang_Stanley_Havlin_2019}. However, to the best of our knowledge, similar results have not been documented for other transportation networks or other traffic measures, such as traffic intensity. Therefore, in this section, we analyze empirical data from the highway network in the Netherlands, aiming to estimate the distributions of traffic intensity and congestion. 

In this analysis, we utilize two datasets. \textit{Dataset 1} consists of Origin-Destination (OD) matrices, representing the daily number of trips between each OD  pair. Each location (origin or destination) represents one of 355 Dutch municipalities (as of 2020) or the external location. The dataset spans the period from January 2020 to December 2022. \textit{Dataset 2} comprises traffic jam records on Dutch highways, collected between January 2018 and August 2024. Each traffic jam record is represented as a time series, showing the evolution of the traffic jam, measured in meters. This dataset is sourced from the Dutch open data traffic platform NDW. 

The traffic intensity distribution is estimated using \textit{Dataset 1}. Specifically, we compute the average daily OD matrix, denoted as $\bm{M_{AOD}}$, by taking the mean of all OD matrices in the dataset. The traffic intensity $(v_{TI})_j$ from location $j\in \{1,\dots, 356\}$ is given by aggregating the number of trips from location $j$ over all possible destinations $i\in \{1,\dots, 356\}$. In particular,
\[\bm{v_{TI}} = \bm{M_{AOD}} \cdot \bm{e},\]
where $\bm{e}$ is the all-ones vector of size 356. This municipal traffic intensity is a reasonable proxy because the Dutch highway network is dense and highly developed, and it supports most inter-municipality trips. 

Figure~\ref{fig:a} shows the log-log plots of the empirical Complementary Cumulative Distribution Function (CCDF) for the traffic intensity. The tail of the distribution exhibits an approximately linear behavior, suggesting scale-free distributional properties. To explore this further, we estimate the cutoff point $\kappa_{TI} = 197$ (Figure~\ref{fig:e}), i.e., the threshold where the scale-free tail begins, using the Power-Law FIT (PLFIT) method \cite{plfit09}. Then, we use the Hill estimator $\xi_{TI}$ \cite{nair_wierman_zwart_2022} to find the appropriate tail parameter, given as the reciprocal of $\xi_{TI}$ at $\kappa_{TI}$ (Figure~\ref{fig:c}). Lastly, we perform the Kolmogorov-Smirnoff (KS) test, to determine if the power law hypothesis can be rejected at the 95\% confidence level. The recovered tail parameter $\alpha_{TI} = 1.42$. Table~\ref{tab:KS} summarizes the results of the KS test, showing that the power-law hypothesis cannot be rejected. However, due to the relatively small sample size, the strength of the test is limited. Based on this evidence, we conclude that there is a premise that the traffic intensity distribution in the Netherlands is Pareto-tailed. Nonetheless, further statistical analysis is required to confirm this claim.

Next, we analyze the distribution of congestion. For each unique jam, the congestion estimator is the average traffic jam length over time. Again, in the log-log plot of the empirical CCDF shown in Figure~\ref{fig:b}, we observe a linear tail behavior, suggesting a Pareto-tailed phenomenon. Hence, we perform the PLFIT method with the Hill estimator to fit the appropriate tail parameters (Figure~\ref{fig:e}). Then, we verify the Pareto-tailed hypothesis with the KS test. 

We obtain the scale parameter $\alpha_C = 6.72$ for the congestion distribution. For this parameter, in Figure~\ref{fig:d}, we observe that the corresponding Hill plot flattens, which is the expected behavior when the data is Pareto-tailed. Again, according to the KS test results, shown in Table~\ref{tab:KS}, the Pareto hypothesis cannot be rejected at the 95\% confidence level. Note that the strength of this test is larger than the two previous tests as a result of the larger sample size. 

The performed analysis suggests that the Dutch highway transportation system exhibits scale-free characteristics in both traffic intensity and congestion. The latter finding aligns with the results of \cite{Chen_Lin_Yan_Liu_Liu_Li_2024} and \cite{Zhang_Zeng_Li_Huang_Stanley_Havlin_2019}, where the authors found evidence of scale-free congestion in Chinese traffic systems. 

\begin{table}[h!]
    \centering
    \begin{tabular}{|c|c|c|c|c|}
    \hline
         \textbf{Data} & \textbf{Test statistic} & \textbf{p-value} & \textbf{\# samples} & \textbf{Reject} \\
         \hline 
         \textbf{Traffic intensity} & 0.09 & 0.35 & 197 & No\\
         \hline 
         \textbf{Traffic jams} & 0.04 & 0.34 & 1049 & No\\
         \hline
    \end{tabular}
    \caption{KS test results.}
    \label{tab:KS}
\end{table}

\begin{figure}[p!] 
\begin{subfigure}{0.45\textwidth}
\includegraphics[width=\linewidth]{figures/Asset_2.pdf}
\caption{CCDF plot of traffic intensity distribution} \label{fig:a}
\end{subfigure}\hspace*{\fill}
\begin{subfigure}{0.45\textwidth}
\includegraphics[width=\linewidth]{figures/Asset_1.pdf}
\caption{CCDF plot of congestion distribution} \label{fig:b}
\end{subfigure}

\begin{subfigure}{0.45\textwidth}
\includegraphics[width=\linewidth]{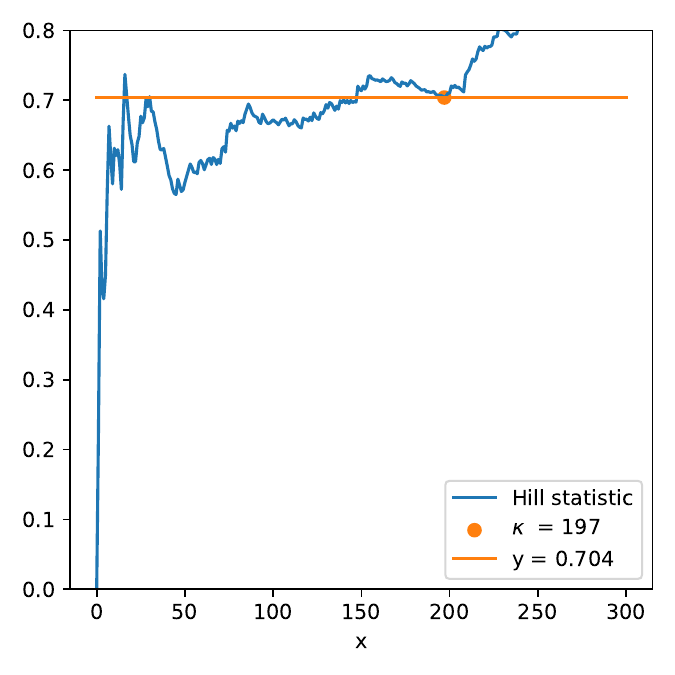}
\caption{Hill plot for traffic intensity} \label{fig:c}
\end{subfigure}\hspace*{\fill}
\begin{subfigure}{0.45\textwidth}
\includegraphics[width=\linewidth]{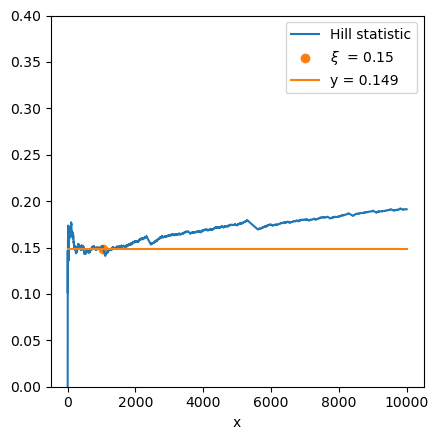}
\caption{Hill plot for congestion} \label{fig:d}
\end{subfigure}

\begin{subfigure}{0.45\textwidth}
\includegraphics[width=\linewidth]{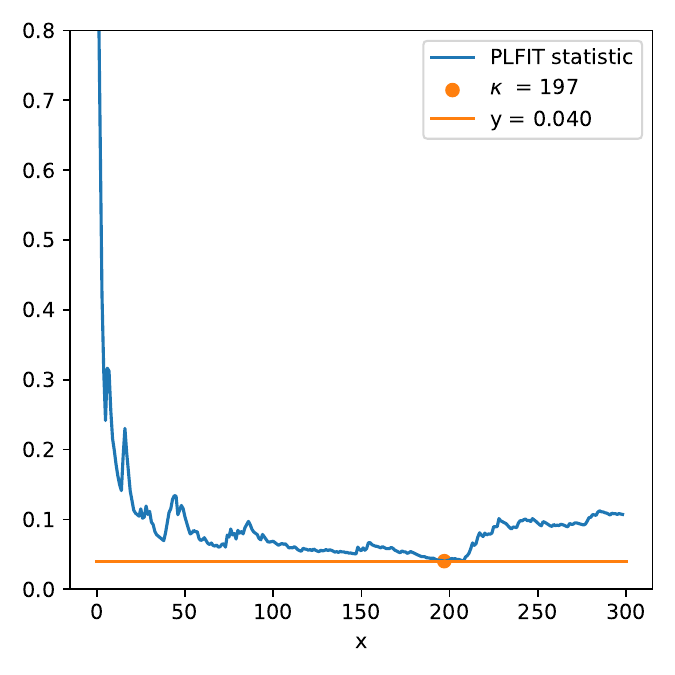}
\caption{PLFIT plot for traffic intensity} \label{fig:e}
\end{subfigure}
\hspace*{\fill}
\begin{subfigure}{0.45\textwidth}
\includegraphics[width=\linewidth]{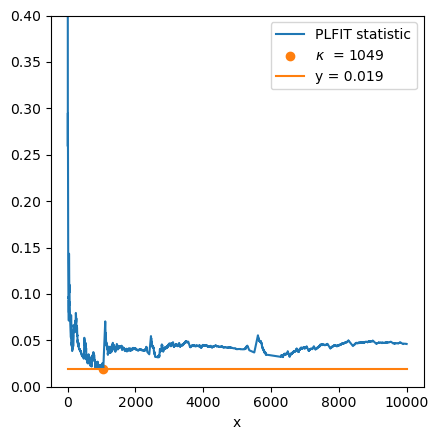}
\caption{PLFIT plot for congestion} \label{fig:f}
\end{subfigure}%
\caption{Tail analyses of traffic intensity distribution (left) and congestion distribution (right).} \label{fig:1}
\end{figure}
\newpage\section{Example}\label{example}
In this section, we provide the details of the cascade simulation example. Table~\ref{tabExample} specifies all model parameters used in the simulation example. Note that the choice of the free-flow travel parameter $\bm{d}$ of the flow cost function ensures that Problem~\eqref{optFlowProblem} with input $\bm{U} = \bm{Q}\cdot \text{diag}(\bm{X}) - \text{diag}(\bm{X})$ and $\flowLimVec = \infty$ always has a unique solution. Moreover, since the upper and lower bounds of the initial congestion factor are $l_m = l_M = 1/20$, we know that $\phi_{init} = \phi_e = 1/20$, and the initial probability $p_{init}(x) = 1$ for $x = 1/20$ and 0 otherwise.

Figure~\ref{cascade} presents the results on cascade progression in a single simulation run with initial vertex weights $\bm{X} = (1,0,0,0,0)^T$. Figures~\ref{loglogdist} --~\ref{HillPlot1} show results on the tail of the cascade cost distribution. These results were generated with $N = 10^6$ simulation runs where $X_v$ follows a Pareto distribution on $(1,\infty)$ with parameter $\alpha = 3/2$. 

\setcounter{MaxMatrixCols}{14}
\begin{table}[ht!]
    \centering
    \renewcommand{\arraystretch}{1.1}
    \begin{tabular}{|c|c|}
    \hline
        \textbf{Variable} & \textbf{Value}  \\
        \hline
        $\V$ & $\{1,2,3,4,5\}$\\
        \hline
        $\EE$ & $\{(1,2),(1,3),(1,4),(2,1),(2,3),(2,5),(3,1),(3,2),(3,4),(4,1),(4,3),(4,5),(5,2),(5,4)\}$\\
        \hline
        $\nv$ & 5\\
        \hline
        $\nee$ & 14\\
        \hline 
        $\incM$ & $\begin{pmatrix}
            -1 & -1 & -1 & 1 & 0 & 0 & 1 & 0 & 0 & 1 & 0 & 0 & 0 & 0 \\
            1 & 0 & 0 & -1 & -1 & -1 & 0 & 1 & 0 & 0 & 0 & 0 & 1 & 0\\
            0 & 1 & 0 & 0 & 1 & 0 & -1 & -1 & -1 & 0 & 1 & 0 & 0 & 0\\
            0 & 0 & 1 & 0 & 0 & 0 & 0 & 0 & 1 & -1 & -1 & -1 & 0 & 1\\
            0 & 0 & 0 & 0 & 0 & 1 & 0 & 0 & 0 & 0 & 0 & 1 & -1 & -1 
        \end{pmatrix}$\\
        \hline 
        $\tau $ & 51/50\\
        \hline
        $\varepsilon_{\min}$ & 1/10\\
        \hline
        $\bm{d} = (d_e)_{e\in \EE}$ & $(1, 1, 1, 1, 3/2, 2, 1, 3/2, 1, 1, 1, 1, 2, 1)^T$
        \\
        \hline
        $\bm{b} = (b_e)_{e\in \EE}$& $(1,1,1,1,1,1,1,1,1,1,1,1,1,1)^T$\\
        \hline
        $\beta$ & 1\\
        \hline 
        $\alpha$ & 3/2\\
        \hline
        $\bm{Q}$ & $\begin{pmatrix}
      0 & 1/8 &  0 &1/4 & 1/3\\
     1/2&  0& 0& 1/4& 1/6\\
     1/4& 1/2& 0& 1/4& 1/6\\
     1/8& 1/8& 1& 0& 1/3\\
     1/8& 1/4& 0& 1/4& 0
        \end{pmatrix}$\\
        \hline
        $l_m$ & 1/20\\
        \hline 
        $l_M$ & 1/20\\
        \hline 
        $p_e$ & $\begin{cases}
            0 & \text{if } \phi_e^{(r)}\leq 1,\\
            \phi_e^{(r)} - 1 & \text{if } \phi_e^{(r)}\in (1,2),\\
            1 & \text{if } \phi_e^{(r)} \geq 2,
        \end{cases}$ \quad \quad (for all $e\in \EE$ and $r\in \N$)\\
        \hline 
        $n_e$ & 1 for all $e\in \EE$\\
        \hline

    \end{tabular}
    \caption{Model parameters used in the simulation.}
    \label{tabExample}
\end{table}

\newpage\section{Analysis}\label{analysis}

In this section, we state and prove results about the key properties of the model: scale-invariance, continuity, and the catastrophe principle. Then, we provide the proof of Theorem~\ref{mainTheorem}. Lastly, we show the strict convexity of the flow cost function and discuss the properties of the optimal flow function with a linear cost function. Note that for all results, we assume that the graph $\G = (\V,\EE)$ and the travel factor matrix $\bm{Q}$ are fixed. 
\subsection{Scale-invariance} \label{subsec:scaleInv}In this section, we present four results that show the scaling properties of flow matrices, flow capacity vectors, flow exceedance vectors, congestion cost, and the probability of a cascade sequence to occur. Using these properties, we can show that it is sufficient to analyze the cascade behavior for a graph with a normalized set of vertex weights. First, we show the scale-invariance property of the capacity vector $\flowLimVec^{(0)}$. 
\begin{lemma}[Scale-invariance of $\flowLimVec^{(0)}$]
Consider a fixed vector of vertex weights $\bm{X} = (\paretoI{1},\dots,\paretoI{\nv})$. Then, for all $\omega>0$,
$$\flowLimVec^{(0)}(\omega\bm{X}) = \omega \flowLimVec^{(0)}(\bm{X}),$$
where $\flowLimVec^{(0)}(.)$, denotes the conditional initial flow capacity vector. 
\label{ScaleInvariancePl}

\end{lemma}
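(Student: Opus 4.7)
The plan is to trace the scaling $\bm{X} \mapsto \omega \bm{X}$ through each ingredient in the definition of $\flowLimVec^{(0)}$. Recall from Equation~\eqref{capacityPl} that componentwise $\flowLim{e}^{(0)} = \max\{\flowLim{\min}, \tau g_e, \tau g_{\tilde{e}}\}$, where $\flowLim{\min} = \varepsilon_{\min} \sum_{v \in \V} s_{v,v}$ and $\SPM = (g_e)_{e\in\EE}$ is the shortest-path flow $\flowM^*(\bm{U}, \infty)$. Since $\max(\omega a, \omega b, \omega c) = \omega \max(a,b,c)$ for $\omega > 0$, it will suffice to show that each of the three arguments of the max scales linearly in $\omega$.

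For the minimum-capacity term, note that $\genM = \text{diag}(\bm{X})$ so $s_{v,v} = X_v$ and thus $\flowLim{\min}(\omega \bm{X}) = \varepsilon_{\min} \sum_{v\in\V} \omega X_v = \omega\flowLim{\min}(\bm{X})$. For the $g_e$ terms, I would first observe that $\bm{U} = \demM - \genM = (\bm{Q} - \bm{I})\text{diag}(\bm{X})$ is linear in $\bm{X}$, so $\bm{U}(\omega \bm{X}) = \omega \bm{U}(\bm{X})$. It then remains to establish that $\SPM(\omega \bm{U}) = \omega \SPM(\bm{U})$.

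For this step, I would rely on the fact that $\flowM^*(\bm{U}, \infty)$ is equivalent to the shortest-path flow on $\G$ with edge weights $d_e$ (as stated in the main text and made precise in Lemma~\ref{SP}). When $\flowLimVec = \infty$, the cost reduces to the linear objective $\sum_{e\in\EE} d_e f_e$, and the constraint $\incM \flowM = \bm{U}$ is also linear. Consequently, $\flowM$ is feasible for $\bm{U}$ iff $\omega \flowM$ is feasible for $\omega \bm{U}$, and the objective values differ by the factor $\omega$; so the optimal value scales by $\omega$ and the optimizing set scales by $\omega$ as well. The tie-breaking convention introduced for $\SPM$ (equal allocation of commodity flow across all shortest paths between any origin-destination pair) is itself scale-free: the fractions of each commodity routed along each shortest path depend only on the graph, the edge weights $d_e$, and the travel factors $\bm{Q}$, not on the magnitude of $\bm{X}$. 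Hence this convention selects $\omega \SPM(\bm{X})$ as the representative solution at input $\omega \bm{X}$, giving $g_e(\omega \bm{X}) = \omega g_e(\bm{X})$ for every $e\in\EE$.

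Combining these pieces componentwise,
\[
\flowLim{e}^{(0)}(\omega \bm{X}) = \max\{\omega \flowLim{\min}(\bm{X}),\ \tau\,\omega g_e(\bm{X}),\ \tau\,\omega g_{\tilde{e}}(\bm{X})\} = \omega\,\flowLim{e}^{(0)}(\bm{X}),
\]
which yields the claim. The only non-routine point is ensuring that the potentially non-unique shortest-path flow is selected in a scale-equivariant way; this is precisely the reason the paper fixes the symmetric tie-breaking rule in advance, so once Lemma~\ref{SP} is invoked there is no remaining obstacle.
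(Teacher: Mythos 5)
Your proposal is correct and follows essentially the same route as the paper's proof: linearity of $\bm{U}$ in $\bm{X}$, scale-equivariance of the shortest-path flow $\SPM$ derived from the homogeneity of the linear optimization problem (with the equal-allocation tie-breaking rule handling non-uniqueness), and positive homogeneity of the max in Equation~\eqref{capacityPl}. The only cosmetic difference is that the paper justifies $\SPM(\omega\bm{U})=\omega\SPM(\bm{U})$ via an explicit two-sided cost comparison with the rescaled candidate $\frac{1}{\omega}\SPM(\omega\bm{U})$, whereas you invoke the scaling of the feasible set and objective directly; both arguments rest on the same facts.
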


\begin{proof}
Let $\bm{X} = (\paretoI{1}, \dots, \paretoI{\nv})$ be a given vector of vertex weights. We have that  $\genM = \text{diag}(\pareto),$ $\demM = \bm{Q}\genM$, and $\bm{U} = (\bm{Q} - \bm{I}_{\nv})\genM$, where $\bm{I}_{\nv}$ is the $\nv\times\nv$ identity matrix. From Equation~\eqref{capacityPl}, we know that for every $(v,w)\in \EE$,
    \begin{equation}
        \flowLim{(v,w)}^{(0)}(\bm{X}) =\max\left\{\tau g_{(v,w)}, \tau g_{(w,v)}, \varepsilon_{\min}\sum_{i = 1}^{\nv}X_i\right\} \label{fBarpl}.\end{equation}

Next, we show that $\SPM(\omega \bm{U}) = \flowM^*(\omega \bm{U}, \infty) = \omega\flowM^*(\bm{U},\infty) = \omega \SPM(\bm{U})$. First, we show that $\omega\SPM(\bm{U})$ is a feasible solution to Problem~\eqref{optFlowProblem} with input $\omega\bm{U}$ and $\infty$. By optimality of $\SPM(\bm{U})$, we have that 
\[\incM \SPM(\bm{U}) = \bm{U},\]
which implies that $\incM\left( \omega\SPM(\bm{U}) \right)= \omega\bm{U}.$

It remains to show that it is also the optimal solution. By the optimality of $\SPM(\omega \bm{U})$ and feasibility of  $\omega \SPM(\bm{U})$, we know that 
\[c_f(\SPM(\omega\bm{U}),\infty)\leq c_f\left(\omega \SPM(\bm{U}), \infty \right).\]
Using Equation~\eqref{costEquation}, we find
\[\sum_{e\in \EE} d_e\left(\SPTot(\omega \bm{U})\right)_e\leq\sum_{e\in \EE} d_e\left(\omega \SPTot(\bm{U})\right)_e, \]
and therefore,
\begin{equation*}\sum_{e\in \EE} d_e\left(\frac{1}{\omega}\SPTot(\omega\bm{U})\right)_e \leq \sum_{e\in \EE} d_e\left(\SPTot(\bm{U})\right)_e.\end{equation*}
We notice that $\frac{1}{\omega}\SPTot(\omega\bm{U})$ is a feasible solution to Problem~\eqref{optFlowProblem} with input $\bm{U}$ and $\infty$. Moreover, its cost is lower or equal to the cost of $\SPM(\bm{U})$. Hence, by optimality of $\SPM(\bm{U})$ we conclude that the costs of $\SPM(\bm{U})$ and $\frac{1}{\omega}\SPM(\omega \bm{U})$ are equal. Note that this also implies that if for some $\omega^*>0$, Problem~\eqref{optFlowProblem} with input $(\omega^*\bm{U}, \infty)$ has multiple optimal solutions, then the same problem with input $(\omega\bm{U}, \infty)$ has multiple optimal solutions for all $\omega >0$ because we can construct optimal solutions corresponding to $\omega$ from the optimal solutions corresponding $\omega^*$ by multiplying them with a factor $\omega^*/\omega$. Thus, if Problem~\eqref{optFlowProblem} with input $(\bm{U}, \infty)$ attains a unique optimal solution, then 
\begin{equation}
\SPM(\omega \bm{U}) = \flowM^*(\omega\bm{U},\infty) = \omega\flowM^*(\bm{U},\infty) = \omega\SPM(\bm{U}).  \label{flowMeq}  
\end{equation}
For the non-unique case, we first observe that the set of all shortest paths between vertices $v$ and $w$ for all $v,w\in \V$ for which $u_{v,w}\neq 0$ is the same for all $\omega \bm{U}$, $\omega>0$. Moreover, flow $\SPM(\bm{U})$ is given by the solution of Problem~\eqref{optFlowProblem} that assigns equal flow to all shortest paths between vertices $v$ and $w$ for all $v,w\in \V$ for which $u_{v,w}\neq 0$. But then $\omega\SPM(\bm{U})$ must also assign equal flow to all shortest paths and, since it is a feasible and optimal solution to Problem~\eqref{optFlowProblem} with input $(\omega\bm{U}, \infty)$, it follows that $\SPM(\omega \bm{U}) = \omega\SPM(\bm{U})$. 

Lastly, using Equations~\eqref{capacityPl},~\eqref{fBarpl}, and~\eqref{flowMeq}, we conclude that $\omega \flowLim{v,w}^{(0)}(\bm{X}) = \flowLim{v,w}^{(0)}(\omega\bm{X})$ for all edges $(v,w)\in \EE$.

\end{proof}
Next, we show that the scale-invariance property also holds for flow matrices and flow capacity vector at every stage of a given cascade $D = d$. Note that the probability that a given cascade occurs is not dependent on $\omega$, which we show later on in Lemma~\ref{cascadeProbInv}. Before stating the result, we introduce some notation. For $\omega>0$, let $\flowM(\omega\bm{X},\phi_{init}, D) = \left(\flowM^{(r)}(\omega\bm{X},\phi_{init}, D)\right)_{r\in \{0,\dots, |D|\}}$, $\flowLimVec(\omega\bm{X},\phi_{init},D) = \left(\flowLimVec^{(r)}(\omega\bm{X},\phi_{init},D)\right)_{r\in \{0,\dots, |D|\}}$, and $\bm{\psi}(\omega\bm{X},\phi_{init}, D) = \left(\bm{\psi}^{(0)}(\omega\bm{X},\phi_{init}, D)\right)_{r\in \{0,\dots, |D|\}}$ denote the sequences of flow matrices, flow capacity vectors, and exceedance vectors at every cascade stage, given the set of vertex weights $\omega\bm{X}$, the cascade sequence $D$, and the initial capacity decrease factor $\phi_{init}$. 

\begin{lemma}
    Consider a fixed vector of vertex weights $\bm{X} = (\paretoI{1},\dots,\paretoI{\nv})$, a sequence of edge disruptions $D = (D_r)_{r\in \{1,\dots, |D|\}}$ and an initial capacity decrease factor $\phi_{init}$. 
    Then, for all $\omega>0$,\begin{enumerate}
        \item $\flowM(\omega\bm{X},\phi_{init},D) = \omega\flowM(\bm{X},\phi_{init}, D),$
        \item $\flowLimVec(\omega\bm{X},\phi_{init},D) = \omega\flowLimVec(\bm{X},\phi_{init},D),$
\item $\bm{\psi}(\omega\bm{X},\phi_{init}, D) = \bm{\psi}(\bm{X},\phi_{init}, D).$

    \end{enumerate}
    \label{ScaleInvarianceCascade}
\end{lemma}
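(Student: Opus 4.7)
The plan is to prove the three claims jointly by induction on the cascade stage $r \in \{0, 1, \dots, |D|\}$, since all three quantities are tightly coupled: flows are defined via $\flowM^*$ from the net travel matrix $\bm{U}$ and the current capacity vector, capacities at stage $r+1$ are defined through the exceedances at stage $r$, and exceedances are ratios of flows to capacities. The main technical tool will be a joint-scaling identity for the Wardrop UE problem, namely that for any $\omega > 0$,
\begin{equation*}
\flowM^*(\omega \bm{U}, \omega \flowLimVec) \;=\; \omega \flowM^*(\bm{U}, \flowLimVec),
\end{equation*}
which I will establish first. This follows by substituting $(\omega \flowM, \omega \flowLimVec)$ into the cost function \eqref{costEquation} and observing that the second term satisfies $\omega \flowLim{e}(\omega \flow[e]/(\omega\flowLim{e}))^{\beta+1} = \omega \cdot \flowLim{e}(\flow[e]/\flowLim{e})^{\beta+1}$, so that $c_f(\omega \flowM, \omega \flowLimVec) = \omega\, c_f(\flowM, \flowLimVec)$. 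Combined with the fact that $\incM(\omega\flowM) = \omega \bm{U}$ whenever $\incM \flowM = \bm{U}$, an optimality-and-uniqueness argument identical to the one used in Lemma~\ref{ScaleInvariancePl} (invoking strict convexity via Lemma~\ref{LemmaStrictConvex}) yields the identity.

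For the base case $r = 0$, Lemma~\ref{ScaleInvariancePl} gives $\flowLimVec^{(0)}(\omega \bm{X}) = \omega \flowLimVec^{(0)}(\bm{X})$. Since $\bm{U}(\omega \bm{X}) = (\bm{Q} - \bm{I}_{\nv}) \text{diag}(\omega\bm{X}) = \omega \bm{U}(\bm{X})$, the joint-scaling identity applied to $\flowM^{(0)} = \flowM^*(\bm{U}, \flowLimVec^{(0)})$ yields $\flowM^{(0)}(\omega \bm{X}) = \omega \flowM^{(0)}(\bm{X})$. Then, for each $e \in \EE$,
\begin{equation*}
\psi_e^{(0)}(\omega \bm{X}) = \frac{f_e^{(0)}(\omega \bm{X})}{\flowLim{e}^{(0)}(\omega \bm{X})} = \frac{\omega f_e^{(0)}(\bm{X})}{\omega \flowLim{e}^{(0)}(\bm{X})} = \psi_e^{(0)}(\bm{X}),
\end{equation*}
giving the invariance of the exceedance vector.

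For the inductive step, suppose the three claims hold at stage $r$; I will propagate them to stage $r+1$ along the cascade $D$. The update rule \eqref{nextFlowLim} for each edge $e \in \EE$ yields $\flowLim{e}^{(r+1)}$ either as $\phi_e(\psi_e^{(r)}) \flowLim{e}^{(r)}$ or as $\flowLim{e}^{(r)}$, depending only on whether $e \in D_{r+1}$ (which is a function of the fixed disruption sequence, not of $\omega$). By the induction hypothesis, $\psi_e^{(r)}$ is $\omega$-invariant, so $\phi_e(\psi_e^{(r)})$ is unchanged, while $\flowLim{e}^{(r)}$ scales by $\omega$; this delivers $\flowLimVec^{(r+1)}(\omega\bm{X}, \phi_{init}, D) = \omega \flowLimVec^{(r+1)}(\bm{X}, \phi_{init}, D)$. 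Applying the joint-scaling identity once more, with $\bm{U}(\omega\bm{X}) = \omega \bm{U}(\bm{X})$ and the just-proved scaling of $\flowLimVec^{(r+1)}$, gives $\flowM^{(r+1)}(\omega\bm{X},\phi_{init},D) = \omega \flowM^{(r+1)}(\bm{X},\phi_{init},D)$; taking ratios then recovers the exceedance invariance.

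The main obstacle I expect is pinning down the joint-scaling identity cleanly in the non-unique limiting regimes (for instance when some components of $\bm{U}$ vanish and $\flowM^*(\bm{U}, \infty)$ could admit multiple optimizers), mirroring the careful argument at the end of Lemma~\ref{ScaleInvariancePl}. However, strict convexity of $c_f$ for $\beta>0$ and finite $\flowLimVec$ (Lemma~\ref{LemmaStrictConvex}) means that at every stage of the cascade $\flowM^*$ is uniquely determined, so this subtlety is confined to the shortest-path flow $\SPM$ that only enters through the initial capacities, where it was already addressed. Everything else in the induction is purely bookkeeping via the explicit cascade update \eqref{nextFlowLim} and the definition \eqref{exceedance}.
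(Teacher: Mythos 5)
Your proposal is correct and follows essentially the same route as the paper: induction on the cascade stage, with the base case supplied by Lemma~\ref{ScaleInvariancePl}, the capacity update at stage $r+1$ handled via the $\omega$-invariance of the exceedances through Equation~\eqref{nextFlowLim}, and the flow scaling at each stage obtained from the degree-one homogeneity $c_f(\omega\flowM,\omega\flowLimVec)=\omega\, c_f(\flowM,\flowLimVec)$ together with feasibility and uniqueness of the optimizer. The only cosmetic difference is that you isolate the joint-scaling identity $\flowM^*(\omega\bm{U},\omega\flowLimVec)=\omega\flowM^*(\bm{U},\flowLimVec)$ as a standalone step, whereas the paper inlines the same optimality argument as a proof by contradiction at each stage.
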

We show this by induction.
\begin{proof}
    First, applying Lemma~\ref{ScaleInvariancePl}, we obtain that $$\flowLimVec^{(0)}(\omega\bm{X},\phi_{init},D) = \omega \flowLimVec^{(0)}(\bm{X},\phi_{init},D)$$
    as the initial capacity vector depends only on vertex weights and not on the cascade $D$ and the factor $\phi_{init}$. Next, we show that $\flowM^{(0)}(\omega\bm{X},\phi_{init},D) = \omega\flowM^{(0)}(\bm{X},\phi_{init},D)$, using a similar argument as in the proof of Lemma~\ref{ScaleInvariancePl}. 

    Recall that $\bm{U} = (\bm{Q} - \bm{I}_{\nv})\cdot \text{diag}(\pareto)$ and $\flowM^{(0)}(\omega\bm{X},\phi_{init},D) = \flowM^*\left(\omega\bm{U}, \omega \flowLimVec^{(0)}(\bm{X},\phi_{init},D)\right)$. First, we observe that $\omega \flowM^{(0)}(\bm{X},\phi_{init},D)$ is a feasible solution to Problem~\eqref{optFlowProblem} with net travel matrix $\omega \bm{U}$. It remains to show by contradiction that it is the optimal solution. Suppose that $\flowM^*\left(\omega\bm{U}, \omega \flowLimVec^{(0)}(\bm{X},\phi_{init},D)\right) = \widetilde{\flowM}\neq \omega \flowM^{(0)}(\bm{X},\phi_{init},D)$. This implies that 
    \begin{equation}\label{costCont}c_f\left(\widetilde{\flowM},\omega \flowLimVec^{(0)}(\bm{X},\phi_{init},D)\right)<c_f\left(\omega \flowM^{(0)}(\bm{X},\phi_{init},D), \omega\flowLimVec^{(0)}(\bm{X},\phi_{init},D) \right).\end{equation}
    Let $\tilde{\flowTot} = \widetilde{\flowM}\bm{e}$ and $\flowTot^{(0)} = \flowM^{(0)}(\bm{X},\flowLimVec^{(0)}(\bm{X},\phi_{init},D))\bm{e}$. Applying the definition of the cost function and dividing both sides of the inequality~\eqref{costCont} by $\omega$, we obtain

      \begin{align*}
&\sum_{e\in \EE}\frac{d_e}{\omega}\tilde{f}_e + \frac{b_e}{\beta+1}\flowLim{e}^{(0)}(\bm{X},\phi_{init},D) \left(\frac{1}{\omega}\tilde{f}_e/ \flowLim{e}^{(0)}(\bm{X},\phi_{init},D) \right)^{\beta+1}\\
&\hspace{0.7cm}< \sum_{e\in \EE} d_ef^{(0)}_e +\frac{b_e}{\beta+1}\flowLim{e}^{(0)}(\bm{X},\phi_{init},D) \left(f_e^{(0)}/ \flowLim{e}^{(0)}(\bm{X},\phi_{init},D) \right)^{\beta+1}.    
    \end{align*}  
However, this implies that $\frac{1}{\omega} \widetilde\flowM$ has a smaller cost than $\flowM^{(0)}(\bm{X},\phi_{init},D)$, which is a contradiction. Hence, we conclude that 
\begin{equation}\flowM^{(0)}(\omega\bm{X},\phi_{init},D) = \omega\flowM^{(0)}(\bm{X},\phi_{init},D).
\label{contradictionArg}\end{equation}
Furthermore, we find that for every $e\in \EE$
\[\exc{e}^{(0)}(\omega\bm{X},\phi_{init}, D)  = \frac{\omega f_e^{(0)}}{\omega \flowLim{e}^{(0)}(\bm{X},\phi_{init},D)} = \exc{e}^{(0)}(\bm{X},\phi_{init}, D).\]
Next, we show by induction that $\flowM^{(r)}(\omega\bm{X},\phi_{init},D) = \omega\flowM^{(r)}(\bm{X},\phi_{init},D)$ and $\flowLimVec^{(r)}(\omega\bm{X},\phi_{init},D) = \omega \flowLimVec^{(r)}(\bm{X},\phi_{init},D)$ for any $r\in \{1,\dots, |D|\}$. Suppose that it is true up to $r\in \{0,\dots, |D| -1 \}$.  Then, for every $e\in \EE$, \begin{align*}&\exc{e}^{(r)}(\omega\bm{X},\phi_{init},D)=\flow[e]^{(r)}(\omega\bm{X},\phi_{init},D)/\flowLim{e}^{(r)}(\omega\bm{X},\phi_{init},D) \\&\hspace{0.7cm}=\flow[e]^{(r)}(\bm{X},\phi_{init},D)/\flowLim{e}^{(r)}(\bm{X},\phi_{init},D) = \exc{e}^{(r)}(\bm{X},\phi_{init},D).\end{align*}
Hence, by~\eqref{nextFlowLim}, we conclude that 
\[\flowLimVec^{(r+1)}(\omega\bm{X},\phi_{init},D) = \omega \flowLimVec^{(r+1)}(\bm{X},\phi_{init},D).\]
With this we can also show that $\flowM^{(r+1)}(\omega\bm{X},\phi_{init},D) = \omega\flowM^{(r+1)}(\bm{X},\phi_{init},D)$, which follows from an analogue contradiction argument used to show~\eqref{contradictionArg}, by replacing superscript $(0)$ with $(r)$.
\end{proof}

From Lemmas~\ref{ScaleInvariancePl} and~\ref{ScaleInvarianceCascade}, the next result follows directly.
\begin{corollary}
    Consider vectors of vertex weights $\bm{X} = (\paretoI{1},\dots,\paretoI{\nv})
    $ initial congestion faction $\phi_{init}$, and a sequence of edge disruptions $D = (D_r)_{r\in \{1,\dots, |D|\}}$. Let $\omega>0$ and consider $r\in\{1,\dots,|D|\}$.
    Then, \begin{enumerate}
    \item $c_f\left(\flowM^{(k)}(\omega\bm{X}, \phi_{init}, D), \flowLimVec^{(k)}(\omega\bm{X}, \phi_{init}, D)\right) = \omega c_f\left(\flowM^{(k)}(\bm{X}, \phi_{init}, D), \flowLimVec^{(k)}(\bm{X}, \phi_{init}, D)\right),$ \item
$\Delta c_f^{(r)}(\omega\bm{X}, \phi_{init}, D) = \omega\Delta c_f^{(r)}(\bm{X}, \phi_{init}, D).$ \end{enumerate}\label{scaleInvDelta}
\end{corollary}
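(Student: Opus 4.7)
The plan is to derive both claims directly from Lemma~\ref{ScaleInvarianceCascade} by substituting the scaled flow and capacity into the explicit formula for the cost function given in Equation~\eqref{costEquation}. The scale-invariance lemma already tells us that $\flowM^{(k)}(\omega\bm{X},\phi_{init},D)=\omega\flowM^{(k)}(\bm{X},\phi_{init},D)$ and $\flowLimVec^{(k)}(\omega\bm{X},\phi_{init},D)=\omega\flowLimVec^{(k)}(\bm{X},\phi_{init},D)$, so the only work left is to see how the cost $c_f$ reacts to multiplying both arguments by the same positive constant $\omega$.

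For the first claim, I would pick an arbitrary edge $e\in\EE$ and examine the two summands of $c_f$. The linear term becomes $d_e(\omega f_e)=\omega\,d_e f_e$, while the congestion term is $\frac{b_e}{\beta+1}(\omega\flowLim{e})\bigl((\omega f_e)/(\omega\flowLim{e})\bigr)^{\beta+1}$, where the ratio $f_e/\flowLim{e}$ inside the power is invariant under simultaneous scaling, leaving only the prefactor $\omega\flowLim{e}$ to contribute a single factor of $\omega$. Summing over $e\in\EE$ then gives $c_f(\omega\flowM,\omega\flowLimVec)=\omega\,c_f(\flowM,\flowLimVec)$; applied with $\flowM=\flowM^{(k)}(\bm{X},\phi_{init},D)$ and $\flowLimVec=\flowLimVec^{(k)}(\bm{X},\phi_{init},D)$, and invoking Lemma~\ref{ScaleInvarianceCascade} to identify the left-hand side with $c_f(\flowM^{(k)}(\omega\bm{X},\phi_{init},D),\flowLimVec^{(k)}(\omega\bm{X},\phi_{init},D))$, this yields the first identity.

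The second claim follows immediately by linearity of subtraction. Writing
\begin{equation*}
\Delta c_f^{(r)}(\omega\bm{X},\phi_{init},D)=c_f\bigl(\flowM^{(r)}(\omega\bm{X},\phi_{init},D),\flowLimVec^{(r)}(\omega\bm{X},\phi_{init},D)\bigr)-c_f\bigl(\flowM^{(0)}(\omega\bm{X},\phi_{init},D),\flowLimVec^{(0)}(\omega\bm{X},\phi_{init},D)\bigr),
\end{equation*}
and applying the first claim to both the $r$-th stage and the base stage $(r=0)$ separately, each of the two cost terms picks up a factor $\omega$, which can be factored out of the difference to give $\omega\,\Delta c_f^{(r)}(\bm{X},\phi_{init},D)$. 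No obstacle is anticipated here: everything reduces to the homogeneity of the cost function together with the already-established homogeneity of the flow and capacity sequences. The only care needed is to verify the homogeneity of $c_f$ in its two arguments jointly — a short algebraic check that relies on the particular exponent $\beta+1$ appearing alongside the $1/\flowLim{e}$ term inside the power, which is exactly what makes the ratio scale-invariant.
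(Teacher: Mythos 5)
Your proposal is correct and follows essentially the same route as the paper's own proof: both invoke Lemma~\ref{ScaleInvarianceCascade} to scale the flow matrices and capacity vectors, verify the joint homogeneity of $c_f$ term by term in Equation~\eqref{costEquation} (the ratio $\flow[e]/\flowLim{e}$ being scale-invariant so that only the prefactor contributes the factor $\omega$), and then obtain the second claim by subtracting the stage-$0$ cost from the stage-$r$ cost. No gaps.
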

\begin{proof}
    From Lemmas~\ref{ScaleInvariancePl} and ~\ref{ScaleInvarianceCascade}, we obtain that $\flowM^{(k)}(\omega\bm{X}, \phi_{init}, D) = \omega\flowM^{(k)}(\bm{X}, \phi_{init}, D) $ and $\flowLimVec^{(k)}(\omega\bm{X}, \phi_{init}, D) = \omega\flowLimVec^{(k)}(\bm{X}, \phi_{init}, D),$ for all $k\in \{0,\dots,|D|\}.$ With this, we calculate the cost at the $k$-th stage of the cascade
    \begin{align*}
        &c_f\left(\flowM^{(k)}(\omega\bm{X}, \phi_{init}, D), \flowLimVec^{(k)}(\omega\bm{X}, \phi_{init}, D)\right)\\
        &\hspace{0.7cm}= \sum_{e\in \EE} \left(d_e\flow[e]^{(k)}(\omega\bm{X}, \phi_{init}, D) + \frac{b_e}{\beta +1} \flowLim{e}^{(k)}(\omega\bm{X}, \phi_{init}, D)\left(\frac{\flow[e]^{(k)}(\omega\bm{X}, \phi_{init}, D)}{ \flowLim{e}^{(k)}(\omega\bm{X}, \phi_{init}, D)}\right)^{\beta+1}\right)\\
        &\hspace{0.7cm}=\omega \sum_{e\in \EE} \left(d_e\flow[e]^{(k)}(\bm{X}, \phi_{init}, D) + \frac{b_e}{\beta +1} \flowLim{e}^{(k)}(\bm{X}, \phi_{init}, D)\left(\frac{\flow[e]^{(k)}(\bm{X}, \phi_{init}, D)}{ \flowLim{e}^{(k)}(\bm{X}, \phi_{init}, D)}\right)^{\beta+1}\right)\\
        &\hspace{0.7cm}= \omega c_f\left(\flowM^{(k)}(\bm{X}, \phi_{init}, D), \flowLimVec^{(k)}(\bm{X}, \phi_{init}, D)\right).
    \end{align*}
    Hence, \begin{align*}&\Delta c_f^{(r)}(\omega\bm{X}, \phi_{init}, D) \\&\hspace{0.7cm}= c_f\left(\flowM^{(r)}(\omega\bm{X}, \phi_{init}, D), \flowLimVec^{(r)}(\omega\bm{X}, \phi_{init}, D)\right) - c_f\left(\flowM^{(0)}(\omega\bm{X}, \phi_{init}, D), \flowLimVec^{(0)}(\omega\bm{X}, \phi_{init}, D)\right)\\&\hspace{0.7cm} =  \omega c_f\left(\flowM^{(r)}(\bm{X}, \phi_{init}, D), \flowLimVec^{(r)}(\bm{X}, \phi_{init}, D)\right) -  \omega c_f\left(\flowM^{(0)}(\bm{X}, \phi_{init}, D), \flowLimVec^{(0)}(\bm{X}, \phi_{init}, D)\right) \\&\hspace{0.7cm}= \omega \Delta c_f^{(r)}(\bm{X}, \phi_{init}, D).\end{align*}
\end{proof}
Finally, we present and prove a lemma, showing that the probability of a certain cascade to occur is independent of the scale~$\omega$. 
\begin{lemma}
     Consider vectors of vertex weights $\bm{X} = (\paretoI{1},\dots,\paretoI{\nv})$ and let $\PR{D = d | \bm{X}, \phi_{init}}$ be the probability that the cascade sequence  $d = (d_r)_{r\in \{1,\dots, |d|\}}$ occurs, given the vector of vertex weights $\bm{X}$ and the initial capacity decrease factor  $\phi_{init}$.  Then,
     \[\PR{D = d|\omega \bm{X},  \phi_{init}} =\PR{D = d|\bm{X},  \phi_{init}} \quad \text{for all } \omega>0.\]
In other words, this probability does not depend on the scale of vertex weights.
\label{cascadeProbInv}\end{lemma}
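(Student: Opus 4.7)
The plan is to decompose $\PR{D = d \mid \omega\bm{X}, \phi_{init}}$ into a product of stage-by-stage conditional probabilities via the chain rule, and then show that each factor in the product is invariant under the scaling $\bm{X} \mapsto \omega\bm{X}$ by appealing to Lemma~\ref{ScaleInvarianceCascade}. Specifically, writing $d = (d_1, \dots, d_{|d|})$, I would start from
\begin{equation*}
\PR{D = d \mid \omega\bm{X}, \phi_{init}} = \prod_{r=1}^{|d|} \PR{D_r = d_r \mid \omega\bm{X}, \phi_{init}, D_1 = d_1, \dots, D_{r-1} = d_{r-1}},
\end{equation*}
and analyze each factor separately.

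First I would handle the base case $r = 1$: since the initial disrupted edge $e_1$ is chosen uniformly at random from $\EE$ and is independent of the vertex weights, the probability of $D_1 = d_1$ equals $1/\nee$ (when $d_1$ is a single edge, and zero otherwise), which does not depend on $\omega$. For $r \geq 2$, the dynamics described in Section~\ref{emergencyPhase} tell us that each overloaded edge $e$ (those with $\psi_e^{(r-1)} > 1$ and $u_e < n_e$) is independently disrupted with probability $p_e\bigl(\psi_e^{(r-1)}\bigr)$. Hence each factor has the explicit product form
\begin{equation*}
\prod_{e \in d_r} p_e\!\left(\psi_e^{(r-1)}\right) \prod_{\substack{e \notin d_r,\, u_e < n_e \\ \psi_e^{(r-1)} > 1}} \left(1 - p_e\!\left(\psi_e^{(r-1)}\right)\right),
\end{equation*}
which depends on the vertex weights only through the exceedances $\psi_e^{(r-1)}$.

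The key step is then to invoke item 3 of Lemma~\ref{ScaleInvarianceCascade}, which gives $\bm{\psi}^{(r-1)}(\omega\bm{X}, \phi_{init}, D) = \bm{\psi}^{(r-1)}(\bm{X}, \phi_{init}, D)$ for every stage. Since the counts $u_e$ and the thresholds $n_e$ depend only on the realized cascade sequence $d_1, \dots, d_{r-1}$ (and not on $\omega$), the sets of eligible overloaded edges coincide and each individual factor $p_e(\psi_e^{(r-1)})$ is identical under $\omega\bm{X}$ and $\bm{X}$. Taking the product over $r$ yields the claim.

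The main obstacle I anticipate is bookkeeping rather than conceptual: one needs to ensure that Lemma~\ref{ScaleInvarianceCascade} applies at each stage $r$ under the conditioning event $\{D_1 = d_1, \dots, D_{r-1} = d_{r-1}\}$, so that the exceedance vectors being equated are computed under the \emph{same} realized prefix of the cascade. This is straightforward because Lemma~\ref{ScaleInvarianceCascade} is stated for a fixed cascade sequence $D$, and here we are conditioning on exactly such a fixed prefix, so the induction on $r$ in its proof carries over verbatim to our conditional setting.
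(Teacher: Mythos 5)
Your proposal is correct and follows essentially the same route as the paper's proof: both reduce the cascade probability to stage-wise conditional disruption probabilities (the paper via an iterative conditioning argument on single-edge failures, you via the chain-rule product), observe that the initial edge is chosen uniformly at random independently of the weights, and conclude by noting that the remaining factors depend on $\omega\bm{X}$ only through the exceedance vectors, which are scale-invariant by Lemma~\ref{ScaleInvarianceCascade}. The only differences are cosmetic (your explicit per-stage product versus the paper's single-edge formulation, and a harmless indexing shift in which stage's exceedance drives the stage-$r$ disruptions).
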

\begin{proof}

Let $A(d_r)$ be the event that edges in set $d_r$ become disrupted in the cascade stage $r$.
Observe that through an iterative argument, it suffices to show the following. Given a set of disruptions so far, the probability that an arbitrary edge $l\in \EE$ is disrupted (fails) in the next cascade stage is equal for all $\omega >0$. More precisely, we will show that for all $l\in \EE$, $r\in \{2,\dots, |D|\}$ and any sequence of disruptions $A(d_1), \dots A(d_{r-1}),$
\begin{equation}\PR{l \text{ fails in stage } r|\omega \bm{X}, \phi_{init}, A(d_1), \dots, A(d_{r-1})} = \PR{l \text{ fails in stage } r| \bm{X}, \phi_{init}, A(d_1), \dots, A(d_{r-1})}.\label{probLFails}\end{equation}
In our model, the probability that edge $l$ fails in stage $r$ is given by $p_l(\exc{l}^{(r)})$, independently of other edge disruptions, but the edge can only fail if it is not in its final disruption state, i.e., $u_l^{(r)}\neq n_l$. Hence, we find that:
\begin{align*}
    &\PR{l \text{ fails in stage } r|\omega \bm{X}, \phi_{init}, A(d_1), \dots, A(d_{r-1})} \\
    &\hspace{0.7cm}= p_l\left(\exc{l}^{(r)}(\omega \bm{X}, \phi_{init}, A(d_1), \dots, A(d_{r-1}))\right)\cdot \mathbbm{1}\{u_l^{(r)} \neq n_l| A(d_1), \dots, A(d_{r-1})\}.
\end{align*}
We observe that the only dependence on $\omega$ in the above equation occurs only through $\exc{l}^{(r)}$. Due to Lemma~\ref{ScaleInvarianceCascade}, we have
\[\bm{\psi}^{(r)}(\omega \bm{X}, \phi_{init}, A(d_1),\dots, A(d_{r-1})) = \bm{\psi}^{(r)}(\bm{X}, \phi_{init}, A(d_1),\dots, A(d_{r-1}))\]
for any $\omega>0$. Hence, Equation~\eqref{probLFails} holds. Moreover, the first edge to fail is chosen uniformly at random, which also does not depend on $\omega$. Hence, the result follows by interatively conditioning on the set of next edges to fail and applying Equation~\eqref{probLFails}.

\end{proof}

In light of Lemmas~\ref{ScaleInvariancePl},~\ref{ScaleInvarianceCascade}, and~\ref{cascadeProbInv}, to understand all possible cascade behaviors on a given graph, it is sufficient to consider vertex weights, scaled by the largest vertex weight, i.e. $$(X_1,\dots,X_{\nv})/\max\{X_1,\dots,X_{\nv}\}\in [0,1]^{\nv}.$$
Having established this, we next illustrate another important property of the model, i.e., continuity of the flow capacity vector and the flow matrix with respect to vertex weights. 
\subsection{Continuity properties} \label{continuity}
The results derived in this section constitute an important ingredient in the proof of the main theorem. Specifically, they allow us to reduce the complexity of the problem by focusing solely on certain limiting scenarios of the vertex weight distribution, which is possible by applying the continuity argument. Motivated by the catastrophe principle, the limiting scenarios of interest are the cases when one vertex has a large weight and the weight of all other vertices is marginal.

In the first two lemmas, we show the continuity properties of the flow capacity vector and the flow matrix before the cascade and at an arbitrary stage of the cascade.

\begin{lemma}[Continuity of $\flowLimVec^{(0)}$ and $\flowM^{(0)}$]\label{continuityPlanningOp}
Consider a convergent sequence of vertex weight vectors $(\bm{X}^k)_{k\in \N}$ with limit $\bm{X}^*\neq 0$ such that $\bm{X}^k\geq \bm{X}^*$ for all $k$. Let $\flowLimVec^{(0)}(\bm{X})$ and $\flowM^{(0)}(\bm{X})$ denote the flow capacity vector and the flow matrix at stage 0, given the vertex weight vector $\bm{X}$. Then, \begin{equation}\lim_{k\rightarrow \infty}\flowLimVec^{(0)}(\bm{X}^k) = \flowLimVec^{(0)}(\bm{X}^*)\label{flowLimCont}\end{equation}
and 
\begin{equation}\lim_{k\rightarrow \infty}\flowM^{(0)}(\bm{X}^k) = \flowM^{(0)}(\bm{X}^*)\label{flowCont}.\end{equation}
    \end{lemma}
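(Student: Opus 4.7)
The plan is to prove \eqref{flowLimCont} by a direct continuity argument, and then derive \eqref{flowCont} via an argmin-continuity scheme built around the strict convexity of $c_f$. For \eqref{flowLimCont}, I would observe that the set of shortest paths between any two vertices depends only on $\G$ and the travel parameters $\bm d$, not on $\bm X$. Under the convention that $\SPM(\bm X)$ distributes each origin-destination demand $q_{w,v}\paretoI{v}$ uniformly across the corresponding shortest paths (cf.\ Lemma~\ref{SP}), each coordinate of $\SPM(\bm X)$ is a \emph{linear} function of $\bm X$. Since $\flowLim{\min}(\bm X) = \varepsilon_{\min}\sum_v X_v$ is also linear, and $\flowLimVec^{(0)}$ is defined in~\eqref{capacityPl} as an edgewise maximum of these continuous functions, $\flowLimVec^{(0)}(\bm X^k)\to \flowLimVec^{(0)}(\bm X^*)$ follows immediately.

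For \eqref{flowCont}, write $\flowM_k := \flowM^{(0)}(\bm X^k)$, $\flowLimVec_k := \flowLimVec^{(0)}(\bm X^k)$, and $\flowLimVec_* := \flowLimVec^{(0)}(\bm X^*)$. The monotonicity assumption $\bm X^k \geq \bm X^*\neq 0$ ensures $\flowLim{\min}(\bm X^k)\geq \flowLim{\min}(\bm X^*) > 0$, so capacities stay uniformly bounded away from zero and $c_f(\cdot, \flowLimVec_k)$ remains jointly continuous in a neighbourhood of the limit. A standard flow-decomposition argument (using that the optimizer is acyclic, since cycles add cost without changing net flow) bounds $\{\flowM_k\}$ in terms of the finite demand $\bm U(\bm X^k)$; then any convergent subsequence has a limit $\flowM^\infty$ with $\flowM^\infty \geq 0$ and $\incM\flowM^\infty = \bm U(\bm X^*)$, so $\flowM^\infty$ is feasible for Problem~\eqref{optFlowProblem} at $\bm X^*$. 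To identify $\flowM^\infty$ it suffices, by strict convexity of $c_f$ (Lemma~\ref{LemmaStrictConvex}), to show that $\flowM^\infty$ attains the optimal cost at $\bm X^*$.

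The main obstacle, the $\limsup$ direction of this equality, I intend to handle by constructing a recovery sequence that exploits the monotonicity hypothesis directly. By linearity of $\SPM$, the flow $\widetilde{\flowM}_k := \SPM(\bm X^k - \bm X^*)$ is nonnegative (since $\bm X^k - \bm X^* \geq 0$), satisfies $\incM \widetilde{\flowM}_k = \bm U(\bm X^k) - \bm U(\bm X^*)$, and converges to zero as $k\to\infty$. Therefore $\flowM^{(0)}(\bm X^*) + \widetilde{\flowM}_k$ is nonnegative and feasible for the $k$-th problem, and by optimality of $\flowM_k$ together with joint continuity of $c_f$,
\begin{equation*}
\limsup_{k\to\infty} c_f(\flowM_k, \flowLimVec_k) \leq \lim_{k\to\infty} c_f\bigl(\flowM^{(0)}(\bm X^*) + \widetilde{\flowM}_k,\; \flowLimVec_k\bigr) = c_f\bigl(\flowM^{(0)}(\bm X^*), \flowLimVec_*\bigr).
\end{equation*}
Combining this with the matching lower bound $c_f(\flowM^\infty, \flowLimVec_*) \leq \liminf_k c_f(\flowM_k, \flowLimVec_k)$, which follows from joint continuity of $c_f$ on the limiting neighbourhood, shows that $\flowM^\infty$ is optimal at $\bm X^*$. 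Uniqueness then forces $\flowM^\infty = \flowM^{(0)}(\bm X^*)$; since this holds for every subsequential limit, the full sequence $\{\flowM_k\}$ converges to $\flowM^{(0)}(\bm X^*)$, establishing~\eqref{flowCont}.
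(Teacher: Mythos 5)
Your proposal is correct and follows essentially the same route as the paper: continuity of $\flowLimVec^{(0)}$ via continuity of $\SPM$ and the max operator, then for $\flowM^{(0)}$ a compactness/subsequence argument whose key step is the very same recovery sequence $\flowM^{(0)}(\bm X^*)+\SPM(\bm X^k-\bm X^*)$, made feasible by the monotonicity hypothesis $\bm X^k\geq\bm X^*$, followed by a cost comparison and uniqueness of the optimizer. The only (harmless) variations are that you justify continuity of $\SPM$ by its linearity under the equal-split convention rather than by citing parametric-LP continuity, and you bound the flows via an acyclic flow decomposition rather than by the total commodity.
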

\begin{proof}
    We begin by considering the flow capacity vector. First, recall that $\bm{X}$ uniquely defines the net travel matrix $\bm{U}$. Moreover, $$\flowLim{(v,w)}^{(0)}(\bm{X}) = \max\{\varepsilon_{\min}\sum_{i = 1}^{\nv}X_i,\tau g_{(v,w)}(\bm{X}), \tau g_{(w,v)}(\bm{X})\}, \quad \SPM(\bm{X}) = \flowM^*(\bm{U}, \infty)$$ for $(v,w)\in \EE$ (Equation~\eqref{capacityPl}). As $\SPM(\bm{X})$ is the solution of a linear optimization problem, $\SPM(\bm{X})$ is a continuous function of the net travel matrix $\bm{U}$ \cite[Thm. 2]{Faisca_Dua_Pistikopoulos_2007}. 
    Moreover, since $\SPTot(\bm{X}) = \SPM(\bm{X})\bm{e}$ and $\bm{U}$ is a linear function of $\bm{X}$, $\SPTot(\bm{X})$ is a continuous function of $\bm{X}$. Thus, it follows that 
    \[\lim_{k\rightarrow \infty}\SPTot(\bm{X}^k) = \SPTot(\bm{X}^*).\]
Hence, $\flowLimVec^{(0)}(\bm{X})$ is a maximum of three functions continuous in $\bm{X}$. Since the maximum operator preserves continuity, we conclude that 
    \[\lim_{k\rightarrow \infty}\flowLimVec^{(0)}(\bm{X}^k) = \flowLimVec^{(0)}(\bm{X}^*).\]
Next, we show~\eqref{flowCont}, using the following steps:\begin{enumerate}
    \item[A)] Show, using compactness, that $\flowM^{(0)}(\bm{X}^k)$ is a convergent subsequence $\flowM^{(0)}(\bm{X}^{k_j})$.
    \item[B)] Construct a sequence $\widetilde{\flowM}^k$ with limit $\flowM^{(0)}(\bm{X}^*)$. 
    \item[C)] Show that $\flowM^{(0)}(\bm{X}^{k_j})$ converges to $\flowM^{(0)}(\bm{X}^*)$ by comparing $c_f(\flowM^{(0)}(\bm{X}^{k_j}))$ and $c_f(\widetilde{\flowM}^{k_j})$, where we use the optimality of $\flowM^{(0)}(\bm{X}^{k_j})$, feasibility of $\widetilde{\flowM}^{k_j}$ and the fact that $\widetilde{\flowM}^{k_j}\rightarrow \flowM^{(0)}(\bm{X}^*)$ as $j\rightarrow\infty$.
    \item[D)] Show that $\flowM^{(0)}(\bm{X}^k)\rightarrow \flowM^{(0)}(\bm{X}^*)$ as $k\rightarrow \infty$ because all convergent subsequences of $\flowM^{(0)}(\bm{X}^k)$ converge to $\flowM^{(0)}(\bm{X}^*)$.
\end{enumerate} Recall that $\flowM^{(0)}(\bm{X}) = \flowM^*(\bm{U}(\bm{X}), \flowLimVec^{(0)}(\bm{U}(\bm{X})))$.
For the sake of exposition, denote $\flowM^k:=\flowM^{(0)}(\bm{X}^k)$, $\flowLimVec^k:=\flowLimVec^{(0)}(\bm{X}^k)$, $\flowLimVec^*:=\flowLimVec^{(0)}(\bm{X}^*)$, and $\bm{U}^k = \bm{U}(\bm{X}^k)$. 

To show A), note that for each $\bm{X}^k$, the total flow on any edge is bounded by the sum of all commodities in the network $\xi^k$, given by $\xi^k := \sum_{v\in\V} X_v$. Let $\xi = \sup_{k\in \N} \xi^k$ and note that $\xi$ is well-defined because $(\bm{X}^k)_{k\in\N}$ is, by assumption, a convergent sequence and $\bm{U}$ is a linear function of $\bm{X}$. Thus, $\flowM^k\in [0,\xi]^{\nee\times \nv}$, which is compact. Hence, $\flowM^k$ has at least one convergent subsequence. We choose an arbitrary convergence subsequence $\flowM^{k_j}$ with limit $\flowM^{\infty}$ as $j\rightarrow \infty$. 

Next, we show part B), where we construct a feasible sequence of flows $(\widetilde{\flowM}^k)_{k\in \N}$ converging to $\flowM^{(0)}(\bm{X}^*)$. In particular, let \begin{equation}\widetilde{\flowM}^k: = \flowM^{(0)}(\bm{X}^*) + \SPM(\bm{X}^k - \bm{X}^*).\label{eq:alternativeSeq}\end{equation}
Note that $\SPM(\bm{X}^k - \bm{X}^*)$ exists because $\bm{X}^k - \bm{X}^*\geq 0$. Hence, applying Corollary~\ref{additivityCor} with $\bm{X}_1 = \bm{X}^*$ and $\bm{X}_2 = \bm{X}^k - \bm{X}^*$, we obtain that
$$\widetilde{\flowM}^k = \flowM^{(0)}(\bm{X}^*) + \SPM(\bm{X}^k) - \SPM(\bm{X}^*).$$  We observe that $\widetilde{\flowM}^k\geq 0$ because $\flowM^{(0)}(\bm{X}^*)\geq 0$ and $\SPM(\bm{X}^k - \bm{X}^*)\geq 0$. Moreover,
\[\incM\widetilde{\flowM}^k =\incM\flowM^{(0)}(\bm{X}^*) + \incM\SPM(\bm{X}^k) - \incM\SPM(\bm{X}^*)= \bm{U}^* + \bm{U}^k - \bm{U}^* = \bm{U}^k,\]
showing that sequence $\widetilde{\flowM}^k$ is indeed feasible in \eqref{optFlowProblem}. 
Lastly, we observe that $\widetilde{\flowM}^k\rightarrow \flowM^{(0)}(\bm{X}^*)$ as $k\rightarrow\infty$ where we again use the continuity of $\SPM$~\cite[Thm. 2]{Faisca_Dua_Pistikopoulos_2007}.  

We show part C) by comparing the costs of $\flowM^{k_j}$ and $\widetilde{\flowM}^{k_j}$ for all $j\in \N$, including the limit $j\rightarrow \infty$, where we exploit the fact that optimal solutions to $\eqref{optFlowProblem}$ attain the lowest cost. Due to optimality of $\flowM^{k_j}$, we know that for each $j\in \N$,
\begin{equation}c_f(\flowM^{k_j}, \flowLimVec^{k_j})\leq c_f(\widetilde{\flowM}^{k_j}, \flowLimVec^{k_j}).\label{costComp}\end{equation}
The next step is to show that $c_f(\widetilde{\flowM}^{k_j}, \flowLimVec^{k_j})\rightarrow c_f(\flowM^{(0)}(\bm{X}^*),\flowLimVec^*)$ as $j\rightarrow \infty$. Recall that $$c_f(\widetilde{\flowM}^{k_j},\flowLimVec^{k_j}) = \sum_{e\in \EE} d_e \widetilde{\flow[e]}^{k_j} + \frac{b_e}{\beta + 1} \flowLim{e}^{k_j} \left(\widetilde{\flow[e]}^{k_j}/\flowLim{e}^{k_j}\right)^{\beta+1}.$$ 
By construction, we know that $\widetilde{\flowTot}^{k_j}$ converges as $j\rightarrow \infty$ and due to $\eqref{flowLimCont}$, $\flowLimVec^{k_j}$ converges as $j\rightarrow \infty$. Hence, $\lim_{j\rightarrow\infty} c_f(\widetilde{\flowM}^{k_j}, \flowLimVec^{k_j})$ exist if $\lim_{j\rightarrow \infty}\widetilde{\flow[e]}^{k_j}/\flowLim{e}^{k_j}$ exists for all $e\in \EE$. This is indeed true because $\flowLim{e}^{k_j}\geq \varepsilon_{\min} \sum_{v\in \V}X_v^*>0$, which means that $\lim_{j\rightarrow \infty}\flowLim{e}^{k_j}\neq 0$, which, by the quotient rule for limits, implies that $\lim_{j\rightarrow\infty} \widetilde{f}_e^{k_j}/\flowLim{e}^{k_j} = \widetilde{f}_e^*/\flowLim{e}^*.$ 
Hence, $c_f(\tilde{\flowM}^{k_j}, \flowLimVec^{k_j})\rightarrow c_f(\flowM^{(0)}(\bm{X}^*), \flowLimVec^*)$ as $j\rightarrow \infty$. Next, we take limits from both sides in~\eqref{costComp} and obtain that
\begin{equation}\label{costInequality}c_f(\flowM^{\infty}, \flowLimVec^*)\leq c_f(\flowM^{(0)}(\bm{X}^*), \flowLimVec^*).\end{equation}
Moreover, for every $j$, $\flowM^{k_j}$ satisfies $\incM\flowM^{k_j} = \bm{U}^{k_j}$. Therefore, taking the limit from both sides, we obtain that $\incM\flowM^\infty = \bm{U}^*$. In other words, $\flowM^\infty$ is a feasible flow matrix for $\bm{X}^*$. Hence, by the uniqueness of the optimal solution and~\eqref{costInequality}, we obtain that $\flowM^\infty = \flowM^{(0)}(\bm{X}^*)$. Moreover, since the convergent sequence $\flowM^{k_j}$ was chosen arbitrarily, this implies that all convergent subsequences of $\flowM^k$ converge to the same limit $\flowM^{(0)}(\bm{X}^*)$. 

Finally, we show part D) by contradiction. Suppose that $\flowM^{k}$ does not converge to $\flowM^{(0)}(\bm{X}^*)$ as $k\rightarrow \infty$. This means that there exists some $\varepsilon>0$ such that for every $k^*\in \N$, there exists a $k\geq k^*$ with $||\flowM^k - \flowM^{(0)}(\bm{X}^*)||>\varepsilon$. Hence, there exists a subsequence $\flowM^{k_l}$, $l\in \N$, such that $||\flowM^{k_l} - \flowM^{(0)}(\bm{X}^*)||>\varepsilon$. Moreover, this subsequence lies in the compact space $[0,\xi]^{\nee\times\nv}$ and therefore has a convergent subsequence $\flowM^{k_{l_i}}$, $i\in \N$, with limit $\lim_{i\rightarrow \infty}\flowM^{k_{l_i}}\neq \flowM^{(0)}(\bm{X}^*)$. However, this is a contradiction as $\flowM^{k_{l_i}}$ is a subsequence of $\flowM^k$, and we have shown that all its convergent subsequences converge to $\flowM^{(0)}(\bm{X}^*)$. Hence, we conclude that Equation~\eqref{flowCont} holds.
\end{proof}

Next, we extend the continuity results to an arbitrary cascade stage $r$.
\begin{lemma}[Continuity of $\flowLimVec^{(r)}$ and $\flowM^{(r)}$] \label{continuityFlowr}
Consider a convergent sequence of vertex weight vectors $(\bm{X}^k)_{k\in \N}$ with limit $\bm{X}^*\neq 0$, such that $\bm{X}^k\geq \bm{X}^*$ for all $k$. Let $\flowLimVec^{(r)}(\bm{X},\phi_{init}, D)$ and $\flowM^{(r)}(\bm{X},\phi_{init},D)$ denote the flow limit vector and the flow matrix corresponding to the vector of vertex weights $\bm{X}$ at the $r$-th stage of cascade $D$ with initial congestion factor $\phi_{init}$ for some $r\in \{1,\dots,|D|\}$. Then, \begin{equation}\lim_{k\rightarrow \infty}\flowLimVec^{(r)}(\bm{X}^k,\phi_{init},D) = \flowLimVec^{(r)}(\bm{X}^*,\phi_{init},D)\label{flowLimContr}\end{equation}
and 
\begin{equation}\lim_{k\rightarrow \infty}\flowM^{(r)}(\bm{X}^k,\phi_{init},D) = \flowM^{(r)}(\bm{X}^*,\phi_{init},D)\label{flowContr}.\end{equation}
    \end{lemma}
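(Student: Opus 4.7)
The plan is to prove both statements by induction on $r$, leveraging Lemma~\ref{continuityPlanningOp} (which handles stage $r=0$) and propagating continuity through the cascade update rule \eqref{nextFlowLim}. Each inductive step splits naturally into two parts: first establish continuity of the updated capacity vector $\flowLimVec^{(r+1)}$ using the continuity of the functions $\phi_e$, and then rerun the four-step compactness/auxiliary-sequence argument from the proof of Lemma~\ref{continuityPlanningOp} for the Wardrop UE subproblem at stage $r+1$.

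For the base case $r=1$, the initial disruption rescales the capacity of a single edge $e_1$ by the (given) factor $\phi_{init}$, so $\flowLimVec^{(1)}(\bm{X}^k,\phi_{init},D)\to \flowLimVec^{(1)}(\bm{X}^*,\phi_{init},D)$ is immediate from \eqref{flowLimCont}, and the continuity of $\flowM^{(1)}$ follows by repeating the argument of Lemma~\ref{continuityPlanningOp} with $\flowLimVec^{(1)}$ in place of $\flowLimVec^{(0)}$. For the inductive step, assume \eqref{flowLimContr}--\eqref{flowContr} hold at stage $r\ge 1$. Since the disruption sequence $D$ is an input rather than a random realization, the set of edges disrupted at stage $r+1$ is the same for all $\bm{X}^k$ and for $\bm{X}^*$; for any such edge $e$, \eqref{nextFlowLim} gives
\[
\flowLim{e}^{(r+1)}(\bm{X}^k,\phi_{init},D)=\phi_e\!\left(\exc{e}^{(r)}(\bm{X}^k,\phi_{init},D)\right)\flowLim{e}^{(r)}(\bm{X}^k,\phi_{init},D),
\]
while capacities of undisrupted edges are unchanged. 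Iterating $\phi_e\ge l_m$ and $\phi_{init}\ge l_m$ over all disruption rounds so far yields the uniform lower bound $\flowLim{e}^{(r)}(\bm{X}^*,\phi_{init},D)\ge l_m^{\,n_e}\varepsilon_{\min}\sum_{v\in\V}X_v^{*}>0$, so by the quotient rule combined with the induction hypothesis, $\exc{e}^{(r)}(\bm{X}^k)\to\exc{e}^{(r)}(\bm{X}^*)$, and continuity of $\phi_e$ then yields \eqref{flowLimContr} at stage $r+1$.

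For the flow continuity at stage $r+1$, I would reuse the four steps of Lemma~\ref{continuityPlanningOp} essentially verbatim: (A) since $\flowM^{(r+1)}(\bm{X}^k)$ lies in a compact hypercube (the bound $\xi=\sup_k\sum_{v}X_v^k$ on total flow is stage-independent), extract a convergent subsequence with some limit $\flowM^{\infty}$; (B) construct the auxiliary feasible sequence
\[
\widetilde{\flowM}^k:=\flowM^{(r+1)}(\bm{X}^*,\phi_{init},D)+\SPM(\bm{X}^k-\bm{X}^*),
\]
which lies in the feasible set of \eqref{optFlowProblem} with input $\bigl(\bm{U}(\bm{X}^k),\flowLimVec^{(r+1)}(\bm{X}^k,\phi_{init},D)\bigr)$ by Corollary~\ref{additivityCor}, and which converges to $\flowM^{(r+1)}(\bm{X}^*)$ by continuity of $\SPM$; (C) compare costs using optimality of $\flowM^{(r+1)}(\bm{X}^k)$ against $\widetilde{\flowM}^{k}$, pass to the limit, and conclude from strict convexity (Lemma~\ref{LemmaStrictConvex}) and uniqueness that $\flowM^{\infty}=\flowM^{(r+1)}(\bm{X}^*)$; (D) promote subsequential convergence to full-sequence convergence by the same compactness-plus-contradiction argument.

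I expect the main obstacle to be contained in step (C): the cost comparison requires $\lim_{j}c_f(\widetilde{\flowM}^{k_j},\flowLimVec^{(r+1)}(\bm{X}^{k_j}))=c_f(\flowM^{(r+1)}(\bm{X}^*),\flowLimVec^{(r+1)}(\bm{X}^*))$, which hinges on the ratios $\widetilde{f}_e^{k_j}/\flowLim{e}^{(r+1)}(\bm{X}^{k_j})$ behaving continuously under the exponent $\beta+1$ in \eqref{costEquation}. This is precisely where the strictly positive lower bound $l_m^{\,n_e}\varepsilon_{\min}\sum_{v\in\V}X_v^{*}$ on the limiting capacity is used, and it is the only place where the hypothesis $\bm{X}^*\neq 0$ enters the argument in an essential way; apart from this bookkeeping point, the entire induction is a direct reduction to the stage-$0$ argument already carried out in Lemma~\ref{continuityPlanningOp}.
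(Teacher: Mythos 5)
Your proposal is correct and follows essentially the same route as the paper's proof: induction on the cascade stage, with the capacity update handled via the continuity of $\phi_e$ together with a strictly positive lower bound on the limiting capacities (which is where $\bm{X}^*\neq 0$ enters), and the flow continuity handled by rerunning the compactness/auxiliary-sequence argument of Lemma~\ref{continuityPlanningOp} with $\widetilde{\flowM}^k=\flowM^{(r)}(\bm{X}^*,\phi_{init},D)+\SPM(\bm{X}^k-\bm{X}^*)$. Your write-up is in fact slightly more explicit than the paper's (which omits the details of the flow step at stages $r\ge 1$), e.g.\ in making the lower bound $l_m^{\,n_e}\varepsilon_{\min}\sum_{v\in\V}X_v^{*}$ and the role of Corollary~\ref{additivityCor} explicit.
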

\begin{proof}

We show that Equations~\eqref{flowLimContr} and~\eqref{flowContr} hold, using induction. First, we consider the base case of $r = 1$. For this choice of $r$, we show that Equation~\eqref{flowLimContr} holds. We note that due to Lemma~\eqref{continuityPlanningOp}, 
\[\lim_{k\rightarrow \infty} \flowLimVec^{(0)}(\bm{X}^k) =\flowLimVec^{(0)}(\bm{X}^*).\]
Furthermore, it follows from Equation~\eqref{nextFlowLim} that for any net travel matrix $\bm{X}$ and $e\in \EE$,
\[\flowLim{e}^{(1)}(\bm{X},\phi_{init},D)  = \begin{cases} \phi_{init}\flowLim{e}^{(0)}(\bm{X}) &\text{if } e \in D_1,\\
\flowLim{e}^{(0)}(\bm{X})  &\text{otherwise. }
\end{cases}\]
Hence, we obtain that for $e\in D_1$\[\lim_{k\rightarrow \infty}\flowLim{e}^{(1)}(\bm{X}^k,\phi_{init},D) = \lim_{k\rightarrow \infty}\phi_{init}\flowLim{e}^{(0)}(\bm{X}^k) = \phi_{init}\flowLim{e}^{(0)}(\bm{X}^*) =\flowLim{e}^{(1)}(\bm{X}^*,\phi_{init},D) \]
and for $e\in \EE/D_1$
\[\lim_{k\rightarrow \infty}\flowLim{e}^{(1)}(\bm{X}^k,\phi_{init},D) = \lim_{k\rightarrow \infty}\flowLim{e}^{(0)}(\bm{X}^k) = \flowLim{e}^{(0)}(\bm{X}^*) =\flowLim{e}^{(1)}(\bm{X}^*, \phi_{init},D) .\]
Thus, we conclude that Equation~\eqref{flowLimContr} holds for $r = 1$.

To show that Equation~\eqref{flowContr} holds, we use an argument similar to the one given in the proof of Lemma~\ref{continuityPlanningOp}, where we replace the sequence in Equation~\eqref{eq:alternativeSeq} with the following sequence of feasible flow matrices: \[\widetilde{\flowM}^k = \flowM^{(1)}(\bm{X}^*, {e_1},\phi_{init}) + \SPM(\bm{X}^k - \bm{X}^*, {e_1},\phi_{init}).\]
Because of the similarity of these approaches, the details of the proof are omitted. 

Now, consider some $r\in \{2,\dots,|D|\}$ and suppose that Equation~\eqref{flowLimContr} holds for $r- 1$. From Equation~\eqref{nextFlowLim}, it follows that for any vertex weight vector $\bm{X}$ and $e\in \EE$,
\[\flowLim{e}^{(r)}(\bm{X},\phi_{init},D)  = \begin{cases} \phi_e\left(\psi_e^{(r-1)}(\bm{X},\phi_{init},D)\right)\flowLim{e}^{(r-1)}(\bm{X},\phi_{init},D) &\text{if } e \in D_{r-1},\\
\flowLim{e}^{(r-1)}(\bm{X},\phi_{init},D)  &\text{otherwise. }
\end{cases}\]
Hence, for $e\in \EE/{D_{r-1}}$, we obtain
\[\lim_{k\rightarrow \infty}\flowLim{e}^{(r)}(\bm{X}^k,\phi_{init},D) = \lim_{k\rightarrow \infty}\flowLim{e}^{(r-1)}(\bm{X}^k,\phi_{init},D) = \flowLim{e}^{(r-1)}(\bm{X}^*,\phi_{init},D) = \flowLim{e}^{(r)}(\bm{X}^*,\phi_{init},D).\]
It remains to consider the case when $e\in D_{r-1}$. Recall that $\psi_e^{(r-1)} = \flow[e]^{(r-1)}/\flowLim{e}^{(r-1)}$. Moreover, our modeling assumptions, together with the fact that $\bm{X}^*\neq 0$, imply that $\lim_{k\rightarrow\infty } \flowLim{e}^{(r-1)}(\bm{X}^k,\phi_{init},D) > 0$. Hence, $\lim_{k\rightarrow\infty} \psi_e^{(r-1)}(\bm{X}^k,\phi_{init},D)$ exists and it is equal to
\[\lim_{k\rightarrow\infty} \psi_e^{(r-1)}(\bm{X}^k,\phi_{init},D) = \frac{\lim_{k\rightarrow \infty}\flow[e]^{(r-1)}(\bm{X}^k,\phi_{init},D)}{\lim_{k\rightarrow \infty}\flowLim{e}^{(r-1)}(\bm{X}^k,\phi_{init},D)} =\frac{\flow[e]^{(r-1)}(\bm{X}^*,\phi_{init},D)}{\flowLim{e}^{(r-1)}(\bm{X}^*,\phi_{init},D)} = \psi_e^{(r-1)}(\bm{X}^*,\phi_{init},D).\]
Next, we use the fact that $\phi_e$ is a continuous function to conclude that
\begin{align*}&\lim_{k\rightarrow\infty}\flowLim{e}^{(r)}(\bm{X}^k,\phi_{init},D) \\&\hspace{0.7cm}= \phi_e\left(\lim_{k\rightarrow \infty} \psi_{e}^{(r-1)}(\bm{X}^k,\phi_{init},D)\right)\lim_{k\rightarrow\infty}\flowLim{e}^{(r-1)}(\bm{X}^k,\phi_{init},D) = \flowLim{e}^{(r)}(\bm{X}^*,\phi_{init},D).\end{align*}
Thus, we conclude that Equation~\eqref{flowLimContr} holds for any $r$.

Again, to show that Equation~\eqref{flowContr} holds as well for $r = 1$, we use an analog proof to the one of Lemma~\ref{continuityPlanningOp}, where we replace the sequence in Equation~\eqref{eq:alternativeSeq} with the following sequence of feasible flow matrices: \[\widetilde{\flowM}^k = \flowM^{(r)}(\bm{X}^*,\phi_{init},D) + \SPM(\bm{X}^k - \bm{X}^*,\phi_{init},D).\]
Again, because of the similarity of these approaches, the details of the proof are omitted. 
\end{proof}
Finally, we provide a result on the continuity of cascade probability with respect to vertex weights. 
\begin{lemma}\label{cascadeCont}
    Consider a sequence of vertex weights $\bm{X}(\varepsilon) = (X_1(\varepsilon), \dots, X_{\nv}(\varepsilon))$ such that $\lim_{\varepsilon\downarrow 0}\bm{X}(\varepsilon) = \bm{X}^*\neq 0$ and $\bm{X}(\varepsilon)\geq \bm{X}^*$ for all $\varepsilon\geq 0$. Moreover, consider a particular cascade sequence $D = d$. Then, $\PR{D = d|\bm{X}(\varepsilon), \phi_{init}}$ is a right-continuous function in the neighborhood of $\varepsilon = 0$. Specifically, for all $\varepsilon>0$ sufficiently small, there exists a function $h(\varepsilon)$ with $\lim_{\varepsilon\downarrow 0} h(\varepsilon) = 0$ such that 
    \[\PR{D = d|\bm{X}(\varepsilon), \phi_{init}} \leq  \PR{D = d|\bm{X}^*, \phi_{init}} + h(\varepsilon).\]
\end{lemma}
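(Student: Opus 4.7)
The plan is to express $\PR{D = d \mid \bm{X}(\varepsilon), \phi_{init}}$ as a finite product of functions that inherit continuity from Lemma~\ref{continuityFlowr} together with the continuity of each edge-disruption probability function $p_e$, then conclude that the probability itself is continuous (in fact two-sided) at $\varepsilon = 0$, which immediately yields the stated right-continuity upper bound.

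First, I would write the cascade probability explicitly. Since the initial disruption edge is chosen uniformly at random from $\EE$ and subsequent edges fail independently given the current flow state,
\[
\PR{D = d \mid \bm{X}, \phi_{init}} \;=\; \frac{1}{|\EE|}\prod_{r=2}^{|d|+1}\;\prod_{e\in d_r} q_e^{(r-1)}(\bm{X}) \prod_{e\in \EE\setminus d_r}\bigl(1-q_e^{(r-1)}(\bm{X})\bigr),
\]
where $d_{|d|+1}:=\emptyset$ encodes termination and
\[
q_e^{(r-1)}(\bm{X}) \;:=\; p_e\bigl(\psi_e^{(r-1)}(\bm{X},\phi_{init},d)\bigr)\cdot \mathbbm{1}\{u_e^{(r-1)}<n_e\}.
\]
The indicators $\mathbbm{1}\{u_e^{(r-1)}<n_e\}$ depend only on the history $(d_1,\dots,d_{r-1})$ and are therefore constant in $\bm{X}$, so they do not affect continuity.

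Next, I would invoke Lemma~\ref{continuityFlowr}: as $\varepsilon\downarrow 0$, the flow matrix $\flowM^{(r-1)}(\bm{X}(\varepsilon),\phi_{init},d)$ and the capacity vector $\flowLimVec^{(r-1)}(\bm{X}(\varepsilon),\phi_{init},d)$ converge to their values at $\bm{X}^*$ for every $r\in\{1,\dots,|d|\}$. Since $\bm{X}^*\neq 0$, the minimal-capacity clause in \eqref{capacityPl} yields $\flowLim{e}^{(r-1)}(\bm{X}^*,\phi_{init},d)\geq \varepsilon_{\min}\sum_{v}X_v^*>0$, so the quotient $\psi_e^{(r-1)} = \flow[e]^{(r-1)}/\flowLim{e}^{(r-1)}$ also converges. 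Composing with the continuous function $p_e$ gives continuity of each factor $q_e^{(r-1)}$, and since the product involves only finitely many factors (at most $|d|\cdot|\EE|$), the whole expression converges to $\PR{D=d\mid\bm{X}^*,\phi_{init}}$. Setting $h(\varepsilon):=\bigl|\PR{D=d\mid\bm{X}(\varepsilon),\phi_{init}}-\PR{D=d\mid\bm{X}^*,\phi_{init}}\bigr|$ then produces the required inequality.

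The main subtlety will be the behavior of $p_e$ at the boundary $\psi=1$. The model treats $p_e$ as continuous and non-decreasing, and triggers disruptions only when $\psi_e>1$, which amounts to extending $p_e$ by zero on $[0,1]$; for this extension to remain continuous when some $\psi_e^{(r-1)}(\bm{X}^*,\phi_{init},d)$ happens to equal $1$, one needs $p_e(1)=0$, a property consistent with the continuity assumption and with the explicit example in Appendix~\ref{example}. Once this is granted, the remainder of the argument is purely a bookkeeping consequence of the flow-continuity machinery already in place, so I do not anticipate any further obstacles.
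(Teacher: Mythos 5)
Your proposal is correct and follows essentially the same route as the paper's proof: write the cascade probability as a finite product of per-edge disruption factors, obtain continuity of each exceedance $\psi_e^{(r)}$ from Lemma~\ref{continuityFlowr} (using $\bm{X}^*\neq 0$ to keep capacities bounded away from zero), compose with the continuity of $p_e$, and conclude via continuity of finite products. Your closing remark about needing $p_e(1)=0$ so that the overload indicator does not break continuity at $\psi_e=1$ is a point the paper's proof passes over silently, and it is a worthwhile observation rather than a defect.
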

\begin{proof}
 Let $A(d_r)$ be the event that edges in set $d_r$ experience disruption in the cascade stage $r$. From the proof of Lemma~\ref{cascadeProbInv}, we obtain the following expression for the cascade probability:
    \begin{align*}\begin{split}
        &\PR{D = d | \bm{X}(\varepsilon), \phi_{init}} \\
        &\hspace{0.4cm}= \PR{A(d_1) }\cdot \prod_{r = 2}^{|d|}\Big(\prod_{l\in d_r} p_l(\psi_l^{(r)}(A(d_1),\dots, A(d_{r-1}),\bm{X}(\varepsilon),  \phi_{init}))\cdot\mathbbm{1}\{u_l^{(r)} \neq n_l |A(d_1),\dots, A(d_{r-1})\} \\
        &\hspace{0.8cm}\prod_{l\not\in d_r}\max\left\{\left(1 - p_l(\psi_l^{(r)}(A(d_1),\dots, A(d_{r-1}),\bm{X}(\varepsilon),  \phi_{init}))\right), \mathbbm{1}\{u_l^{(r)} = n_l |A(d_1),\dots, A(d_{r-1})\} \right\}\Big) \\
        &\hspace{0.8cm}\prod_{l\in \EE}\max\left\{\left(1 - p_l(\psi_l^{(|d|+1)}(A(d_1),\dots, A(d_{|d|}),\bm{X}(\varepsilon),  \phi_{init}))\right), \mathbbm{1}\{u_l^{(|d|+1)} = n_l |A(d_1),\dots, A(d_{|d|})\} \right\}.
        \end{split}
    \end{align*}

Recall that by assumption, $p_e$ is a continuous function for all $e\in \EE$. Moreover, using Lemma~\ref{continuityFlowr}, we conclude that for every $r\in \{2,\dots,|d|+1\}$, $$\lim_{\varepsilon\downarrow 0}\psi^{(r)}(A(d_1), \dots, A(d_{r-1}),\bm{X}(\varepsilon),\phi_{init}) = \psi^{(r)}(A(d_1), \dots, A(d_{r-1}), \bm{X}^*,\phi_{init}).$$
Hence, by continuity of the product and the maximum operators, we conclude that 
\begin{align*}&\lim_{\varepsilon \downarrow0} \PR{D = d| \bm{X}(\varepsilon),\phi_{init}} = \PR{D = d | \bm{X}^*,\phi_{init}}.\end{align*}
In other words, for every $\varepsilon > 0$ sufficiently small, there exists some $h(\varepsilon)$ with $\lim_{\varepsilon\downarrow 0} h(\varepsilon) = 0$, such that
    \[\PR{D = d|\bm{X}(\varepsilon), , \phi_{init}} \leq  \PR{D=d|\bm{X}^*, \phi_{init}} + h(\varepsilon).\]
\end{proof}

\subsection{Catastrophe principle} \label{catastrophePrinc}

In this section, we prove the catastrophe principle, stated in Proposition~\ref{sbj}. However, in this proof, we require an additional technical result, which shows that the travel cost for any normalized vertex weight vector is bounded at every stage of the cascade. This is shown in the following lemma. 
\begin{lemma}
    For every cascade stage $r\in \N\cup \{0\}$, $$c_f\left(\flowM^{(r)}(\bm{x}, \phi_{init}, D), \flowLimVec^{(r)}(\bm{x}, \phi_{init}, D)\right)\leq M(r) \text{ for all } \bm{x}\in[0,1]^{\nv}, D \in \mathcal{D}, \phi_{init}\in [l_m, l_M], $$
    with \[M(r):= \nv \sum_{e\in \EE}\left(d_e+\frac{b_e\left(\tau (l_m)^r\right)^{-\beta}}{\beta+1}\right).\]
    Moreover, $M(r)$ is increasing in $r$. \label{BoundFlow}
\end{lemma}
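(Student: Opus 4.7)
The plan is to dominate $c_f\bigl(\flowM^{(r)},\flowLimVec^{(r)}\bigr)$ by the cost of the shortest-path flow $\SPM(\bm{x})$ evaluated on the reduced capacity $\flowLimVec^{(r)}$. The crucial observation is that Problem~\eqref{optFlowProblem} has only the flow-balance constraint $\incM\flowM=\bm{U}$ and the non-negativity constraint; no capacity constraint appears. Therefore $\SPM(\bm{x})$ is a feasible competitor for every cascade stage, so optimality of $\flowM^{(r)}$ gives
\[
c_f\bigl(\flowM^{(r)},\flowLimVec^{(r)}\bigr)\;\le\; c_f\bigl(\SPM(\bm{x}),\flowLimVec^{(r)}\bigr)\;=\;\sum_{e\in\EE}\left(d_e\,g_e+\frac{b_e}{\beta+1}\,g_e\left(\frac{g_e}{\flowLim{e}^{(r)}}\right)^{\beta}\right).
\]

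Next I would establish two uniform bounds. First, $g_e\le \nv$: since $\bm{x}\in[0,1]^{\nv}$, the total generated flow is $\sum_v X_v\le \nv$, and in the shortest-path (acyclic) routing underlying $\SPM(\bm{x})$ each commodity $k$ contributes at most $X_k$ to any single edge, so $g_e=\sum_k g_{e,k}\le\sum_k X_k\le \nv$. Second, $\flowLim{e}^{(r)}\ge \tau(l_m)^r g_e$: by Equation~\eqref{capacityPl}, $\flowLim{e}^{(0)}\ge\tau g_e$; at every subsequent stage the capacity of a disrupted edge is multiplied by a factor in $[l_m,l_M]\subset(0,1)$ (and left unchanged otherwise), and after at most $r$ stages this yields $\flowLim{e}^{(r)}\ge (l_m)^r\flowLim{e}^{(0)}\ge\tau(l_m)^r g_e$.

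Substituting, when $g_e>0$ I use $g_e/\flowLim{e}^{(r)}\le 1/(\tau(l_m)^r)$, and when $g_e=0$ the corresponding summand vanishes, giving
\[
c_f\bigl(\flowM^{(r)},\flowLimVec^{(r)}\bigr)\;\le\;\sum_{e\in\EE}\left(d_e\,\nv+\frac{b_e}{\beta+1}\,\nv\,(\tau(l_m)^r)^{-\beta}\right)\;=\;M(r),
\]
as required. Monotonicity of $M(r)$ is immediate, since $l_m\in(0,1)$ makes $(l_m)^{-r\beta}$ strictly increasing in $r$ and every other factor is a positive constant.

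The only non-routine step is the bound $g_e\le\nv$, which relies on $\SPM(\bm{x})$ being an acyclic shortest-path flow (established in Lemma~\ref{SP} / Appendix~\ref{appendixSP}); the rest is a direct substitution using the definition of $c_f$ in Equation~\eqref{costEquation} and the capacity recursion~\eqref{nextFlowLim}.
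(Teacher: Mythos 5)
Your proposal is correct and follows essentially the same route as the paper's proof: compare $\flowM^{(r)}$ against the feasible shortest-path competitor $\SPM(\bm{x})$, lower-bound the capacities via $\flowLim{e}^{(r)}\geq (l_m)^r\flowLim{e}^{(0)}\geq\tau(l_m)^r g_e$, and bound $g_e\leq \nv$ by the total normalized commodity mass. Your explicit justification of $g_e\leq\nv$ through the acyclicity of the shortest-path routing, and your separate treatment of the $g_e=0$ case, are slightly more careful than the paper's terse version but do not change the argument.
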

\begin{proof}
    Since our results do not depend on the choice of normalized vertex weight, the cascade, or the initial capacity decrease factor, in this proof, we do not explicitly state the dependence on these random variables. First, we consider $r = 0$. From Equation~\eqref{capacityPl}, we know that $\flowLimVec^{(0)}\geq \tau\SPTot$, where $\SPTot = \SPM\bm{e}$ is the total shortest path flow vector. Due to the optimality of $\flowM^{(0)}$, we have that
    $$c_f\left(\flowM^{(0)},\flowLimVec^{(0)}\right)\leq c_f\left(\SPM,\flowLimVec^{(0)}\right) .$$
    Rewriting the right-hand side, we find that
    \begin{align*}
      &c_f\left(\SPM,\flowLimVec^{(0)}\right)= \sum_{e\in \EE}\left( d_e g_e + \frac{b_e}{\beta + 1} \flowLim{e}^{(0)}\left(g_e/\flowLim{e}^{(0)}\right)^{\beta+1}\right)\\
      &\hspace{0.7cm}\leq \sum_{e\in \EE}\left( d_e g_e + \frac{b_e}{\beta + 1} \tau g_e\tau^{-\beta-1}\right)\\
      &\hspace{0.7cm}\overset{(\star)}\leq \sum_{e\in \EE} \left(d_e\nv + \frac{b_e\nv\tau^{-\beta}}{\beta+1}\right) \\
      &\hspace{0.7cm}= \nv \sum_{e\in \EE}\left(d_e+\frac{b_e\tau^{-\beta}}{\beta+1}\right)= M(0),
    \end{align*}
where in $(\star)$ we used that the flow on each edge can be bounded by the the total amount of commodities; this is at most $\nv$, as we only consider normalized vertex weights. Hence, we conclude that $$c_f\left(\flowM^{(0)},\flowLimVec^{(0)}\right)\leq M(0).$$

Now, consider $r\in\N$. Using Equation~\eqref{nextFlowLim}, we obtain that \[\flowLim{e}^{(r)} \geq  (l_m)^r\flowLim{e}^{(0)}\] because at every stage of the cascade, the capacity may be lowered at most by a factor of $l_m$. With this, we find that the current cost of planning flow can be bounded by
    \begin{align*}
      &c_f\left(\SPM,\flowLimVec^{(r)}\right)= \sum_{e\in \EE}\left( d_e g_e + \frac{b_e}{\beta + 1} \flowLim{e}^{(r)}\left(g_e/ \flowLim{e}^{(r)}\right)^{\beta+1}\right)\\
      &\hspace{0.7cm}\leq \sum_{e\in \EE} \left(d_eg_e + \frac{b_e}{\beta+1} \tau (l_m)^r g_e\left(\frac{g_e}{\tau (l_m)^rg_e}\right)^{\beta+1}\right) \\
      &\hspace{0.7cm}\leq \sum_{e\in \EE} \left(d_e\nv + \frac{b_e\nv\left(\tau (l_m)^r\right)^{-\beta}}{\beta+1}\right) \\
      &\hspace{0.7cm}=\nv \sum_{e\in \EE}\left(d_e+\frac{b_e\left(\tau (l_m)^r\right)^{-\beta}}{\beta+1}\right)= M(r).
    \end{align*}
    Hence, we conclude that $$c_f\left(\flowM^{(r)},\flowLimVec^{(r)}\right)\leq c_f\left(\SPM,\flowLimVec^{(r)}\right)\leq M(r).$$
    Lastly, for any $r\in \N$ we find that
    \[M(r) - M(r-1) = \nv\sum_{e\in \EE} \frac{b_e}{\beta+1}\tau^{-\beta}\left(l_m^{-r\beta} - l_m^{-(r-1)\beta}\right) = \nv\sum_{e\in \EE} \frac{b_e}{\beta+1}\tau^{-\beta}l_m^{-(r-1)\beta}\left(l_m^{-\beta} - 1\right).\]
    Clearly, since $l_m\in (0,1)$ and $\beta, b_e, \nv, \tau\geq 0$, we find that $M(r) - M(r-1)\geq 0 $. Hence, we conclude that function $M(r)$ is increasing.
\end{proof}

\begin{proof}[Proof of Proposition~\ref{sbj}]
First, we show that $\Delta c_f^{(r)}\leq M(r)  X_{(\nv)}$. Let $\bm{X} =(X_1,\dots,X_{\nv}) $ and let $\bm{x}:=\bm{X} /X_{(\nv)}$. Choosing $\omega = X_{(\nv)}$, Corollary~\ref{scaleInvDelta} gives us
\[c_f\left(\flowM^{(k)}(\bm{X}, \phi_{init}, D), \flowLimVec^{(k)}(\bm{X}, \phi_{init}, D)\right) = X_{(\nv)} c_f\left(\flowM^{(k)}(\bm{x}, \phi_{init}, D), \flowLimVec^{(k)}(\bm{x}, \phi_{init}, D)\right).\]
  Now, applying Lemma~\ref{BoundFlow}, we find that for any cascade $D$ and initial capacity decrease factor $\phi_{init}$, we have the following bound on the flow cost at stage $r$:
\[c_f\left(\flowM^{(r)}, \flowLimVec^{(r)}\right)\leq M(r)X_{(\nv)}.\]
Moreover, since the flow cost function $c_f$ is nonnegative, we obtain
\begin{align}\begin{split}&\Delta c_f^{(r)} = c_f\left(\flowM^{(k)}(\bm{X}, \phi_{init}, D), \flowLimVec^{(k)}(\bm{X}, \phi_{init}, D)\right) - c_f\left(\flowM^{(0)}(\bm{X}, \phi_{init}, D), \flowLimVec^{(0)}(\bm{X}, \phi_{init}, D)\right)\\
&\hspace{0.7cm}\leq M(r)X_{(\nv)},\end{split}\label{eq:MrXbound}\end{align}
where we bounded the first summand by $M(r)X_{(\nv)}$ and the second by 0. Using the derived bound on the congestion cost, we obtain 
\begin{align*}
    &\PR{\Delta c_f^{(r)} >y, \sum_{i = 1}^{\nv} X_i - X_{(\nv)}\geq \varepsilon X_{(\nv)}} \\&\hspace{0.7cm}\leq \PR{M(r)X_{(\nv)}>y, \sum_{i = 1}^{\nv} X_i - X_{(\nv)}\geq \varepsilon X_{(\nv)}}\\
    &\hspace{0.7cm}\leq \PR{X_{(\nv)}>y/(M(r)), \sum_{i = 1}^{\nv} X_i - X_{(\nv)}\geq \varepsilon y/(M(r))}\\
    &\hspace{0.7cm}\leq \PR{X_{(\nv)}>y/(M(r)), (\nv-1)X_{(\nv-1)}\geq \varepsilon y/(M(r))},\\
\end{align*}
where $X_{(\nv- k)}$ is the $(k+1)$-th largest order statistic of $(X_1,\dots, X_{\nv})$. 
Furthermore, we know that for $k\in \{1,\dots,\nv\}$,
\begin{equation}
\PR{X_{(\nv)}>c_{\nv}y, \dots , X_{(\nv-k)}>c_{\nv-k} y} = \OO{y^{-k\alpha}}, \quad c_{\nv - k},\dots, c_{\nv}>0.    
\label{eq:orderStatAsymptotic}\end{equation}
For $k = 1$, Equation~\eqref{eq:orderStatAsymptotic} can be shown as follows:
\begin{align*}
    &\PR{X_{(\nv)}>c_{\nv} y , X_{(\nv - 1)} >c_{\nv - 1}y} \\&\hspace{0.7cm}\overset{(\star)}= n(n-1)\PR{X_{\nv}>c_{\nv} y , X_{\nv - 1} >c_{\nv - 1}y, X_1, \dots, X_{n-2}\leq X_{n-1}\leq X_{n}}\\
    &\hspace{0.7cm} \leq n(n-1)\PR{X_n> c_{\nv}y, X_{\nv-1}> c_{\nv - 1}y}\\
    &\hspace{0.7cm}= n(n-1)\PR{X_n> c_{\nv}y}\PR{X_{\nv-1}> c_{\nv - 1}y} = \OO{y^{-2\alpha}},
\end{align*}
where in $(\star)$ we use the fact that there are $n(n-1)$ possibilities for the largest two order statistics, each occurring with the same probability.
Hence, we conclude that 
\[ \PR{\Delta c_f^{(r)} >y, \sum_{i = 1}^{\nv} X_i - X_{(\nv)}\geq \varepsilon X_{(\nv)}} =  \OO{y^{-2\alpha}}.\]
\end{proof}
\begin{corollary} \label{sbj1}
    Consider a vector of vertex weights $(X_1,\dots, X_{\nv})$. Let $X_{(\nv)} := \max\{X_1,\dots, X_{\nv}\}$. Then, for all $\varepsilon>0$,
    \[\PR{ \Delta c_f^{(end)}>y , \sum_{i = 1}^n X_i - X_{(\nv)} \geq \varepsilon X_{(\nv)}} = \OO{y^{-2\alpha}}.
    \]
\end{corollary}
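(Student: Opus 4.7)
The plan is to reduce Corollary~\ref{sbj1} to Proposition~\ref{sbj} by exploiting the built-in finiteness of the cascade length. By the modeling assumption that each edge $e\in\EE$ can be disrupted at most $n_e\in\N$ times, and since every non-terminating stage contains at least one edge disruption, the total number of cascade stages is deterministically bounded by $N := 1 + \sum_{e\in\EE} n_e$. Combined with the convention that $\flowM^{(r)} = \flowM^{(q)}$ and $\flowLimVec^{(r)} = \flowLimVec^{(q)}$ for every $r$ beyond the termination stage $q$, this produces the almost-sure identity
\[
\Delta c_f^{(end)} \;=\; \Delta c_f^{(N)}.
\]

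Given this identity, the corollary follows immediately by applying Proposition~\ref{sbj} with the fixed stage index $r = N$:
\[
\PR{\Delta c_f^{(end)}>y,\ \sum_{i=1}^{\nv} X_i - X_{(\nv)} \geq \varepsilon X_{(\nv)}} \;=\; \PR{\Delta c_f^{(N)}>y,\ \sum_{i=1}^{\nv} X_i - X_{(\nv)} \geq \varepsilon X_{(\nv)}} \;=\; \OO{y^{-2\alpha}}.
\]

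No substantive obstacle arises: the corollary is essentially a bookkeeping consequence of the deterministic upper bound on the cascade length, which allows one to evaluate Proposition~\ref{sbj} at a single stage index that dominates the end-of-cascade behavior on every sample path. The only care needed is to verify that the bound $N$ is uniform in $(\bm{X}, \phi_{init}, D)$, which follows immediately from the fact that $N$ depends only on the graph $\G$ and the pre-specified disruption capacities $(n_e)_{e\in\EE}$.
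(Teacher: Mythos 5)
Your argument is correct and is essentially identical to the paper's own proof: both identify $\Delta c_f^{(end)}$ with $\Delta c_f^{(r)}$ for a deterministic stage index bounded by $\sum_{e\in\EE} n_e$ and then invoke Proposition~\ref{sbj} at that fixed stage. The only difference is your harmless off-by-one choice of $N = 1 + \sum_{e\in\EE} n_e$ versus the paper's $\sum_{e\in\EE} n_e$.
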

\begin{proof}
Every edge $e\in \EE$ can be disrupted at most $n_e$ times. This implies that the total number of cascade stages is bounded by $\sum_{e\in\EE}n_e$. Hence, 
\[\Delta c_f^{(end)} = \Delta c_f^{(\sum_{e\in\EE}n_e)}.\]
Thus, the result follows directly from Proposition~\ref{sbj}.
\end{proof}

\subsection{Proof of Theorem~\ref{mainTheorem}} \label{mainTheoremProof}
\begin{proof}
    We begin by proving Equation~\eqref{eq:mainThm}. For $r\in \N$, we observe that for all $\varepsilon\in (0,1)$, Lemma~\ref{sbj} yields
    \[\PR{\Delta c_f^{(r)}>y} = \PR{\Delta c_f^{(r)}>y, \sum_{i = 1}^{\nv- 1} X_{(i)}\leq \varepsilon X_{(\nv)}} +\OO{y^{-2\alpha}}.\] Hence, it suffices to consider the first summand. The additional travel cost $\Delta c_f^{(r)}$ depends on three aspects: 1) vertex weights, 2)  the capacity decrease caused by the first disruption, 3) the cascade. Given 1), 2), and 3), the additional travel cost at stage $r$ of the cascade is a deterministic function. Hence, in what follows, we wish to apply the law of total probability and add the probability of all possible events of 1), 2), and 3) to occur. The capacity decrease factor of the first edge disruption is a continuous random variable taking values in $[l_{m},l_{M}]\subset (0,1) $ with probability density function $p_{init}(x)$. Hence, summing over all possible cascades $d\in \mathcal{D}$ and conditioning on the value of $\phi_{init}$, we obtain
    \begin{align*}
&\PR{\Delta c_f^{(r)}>y, \sum_{i = 1}^{\nv - 1} X_{(i)}\leq \varepsilon X_{(\nv)}} \\&\hspace{0.7cm}= \sum_{d\in \mathcal{D}} \int_{l_{m}}^{l_{M}}\PR{\Delta c_f^{(r)}>y, \sum_{i = 1}^{\nv - 1} X_{(i)}\leq \varepsilon X_{(\nv)}, D = d~|~ \phi_{init} = u} p_{init}(u) \, \mathrm{d}u. 
    \end{align*}
Next, we use the fact that $\{X_i = X_{(\nv)}\} = \{X_i\geq X_j \text{ for all } j\in \V\}$ to obtain
\begin{align*}
    &\PR{\Delta c_f^{(r)}>y, \sum_{i = 1}^{\nv - 1} X_{(i)}\leq \varepsilon X_{(\nv)},D = d~|~ \phi_{init} = u}\\
    &\hspace{0.7cm} = \sum_{v\in \V} \PR{\Delta c_f^{(r)}>y, \sum_{w\neq v \in \V} X_w\leq \varepsilon X_v, X_v\geq X_w \text{ for all } w\in \V,D = d~|~ \phi_{init} = u}\\
    &\hspace{0.7cm}= \sum_{v\in \V} \PR{\Delta c_f^{(r)}>y, \sum_{w\neq v \in \V} X_w\leq \varepsilon X_v , D = d~|~ \phi_{init} = u},
\end{align*}
where in the last line we used the fact that $\varepsilon<1$. 

Next, we consider the above term for some fixed $v\in\V$. From Equation~\eqref{eq:MrXbound}, we know that $\Delta c_f^{(r)}$ is upper-bounded by $M(r)X_{v}$, where $X_{v}$ is the maximum of $X_1,\dots,X_{\nv}$. Hence, if $\varepsilon<1$ and $X_v<\frac{\varepsilon}{M(r)} y$, then $\Delta c_f^{(r)}\leq \varepsilon y < y$. Therefore, 
\begin{align*}&\PR{\Delta c_f^{(r)}>y, \sum_{w\neq v \in \V} X_w\leq \varepsilon X_v , D = d~|~ \phi_{init} = u} \\
&\hspace{0.7cm} = \PR{\Delta c_f^{(r)}>y, \sum_{w\neq v \in \V} X_w\leq \varepsilon X_v , D = d, X_v\geq \frac{\varepsilon}{M(r)}y~|~ \phi_{init} = u} \end{align*} To evaluate this probability further, let $h_{X_v}(x|w)$ denote the density function of $X_v$, conditional of the event $\{X_v>w\}$. With this, we obtain that 
\begin{align}\begin{split}\label{intEq2}&\PR{\Delta c_f^{(r)}>y, \sum_{w\neq v \in \V} X_w\leq \varepsilon X_v , D=d~\Big|~ \phi_{init} = u} \\
&\hspace{0.7cm}\overset{(A)}= \int_{\frac{\varepsilon}{M(r)}}^{\infty}\PR{\Delta c_f^{(r)}>y, \sum_{w\neq v \in \V} X_w\leq \varepsilon X_v , D=d~\Big|~ \phi_{init} = u,X_v = yz }  \\&\hspace{1.4cm}\cdot \PR{X_v\in \mathrm{d}(yz) ~|~\phi_{init} = u}\\
&\hspace{0.7cm}\overset{(B)}= \int_{\frac{\varepsilon}{M(r)}}^{\infty}\PR{\Delta c_f^{(r)}>y~\Big|~ \phi_{init} = u,X_v = yz , \sum_{w\neq v \in \V} X_w\leq \varepsilon X_v ,D=d} \\&\hspace{1.4cm}\cdot  \PR{\sum_{w\neq v \in \V} X_w\leq \varepsilon X_v ~\Big|~\phi_{init}, X_v = yz} \\
&\hspace{2.1cm}\cdot\PR{D=d~\Big|~ \phi_{init} = u,X_v = yz , \sum_{w\neq v \in \V} X_w\leq \varepsilon X_v}\cdot \PR{X_v\in \mathrm{d}(yx)}\\
&\hspace{0.7cm}\overset{(C)}= \PR{X_v>\frac{\varepsilon}{M(r)}y} \int_{\frac{\varepsilon}{M(r)}}^{\infty}\PR{\Delta c_f^{(r)}>\frac{1}{z}~\Big|~ \phi_{init} = u,X_v = 1 , \sum_{w\neq v \in \V} X_w\leq \varepsilon,D=d }  \\&\hspace{1.4cm}\cdot  \PR{\sum_{w\neq v \in \V} X_w\leq \varepsilon X_v ~\Big|~ X_v = yz} h_{X_v}\left(yz~\Big|~\frac{\varepsilon}{M(r)}y\right) \, \mathrm{d}z \,\\
&\hspace{2.1cm}\cdot\PR{D=d~\Big|~ \phi_{init} = u,X_v = 1, \sum_{w\neq v \in \V} X_w\leq \varepsilon}.\end{split}
\end{align}
Note that in step (A), we applied the law of total probability on $X_v$. In step (B), we applied the independence of $X_v$ and $\phi_{init}$, and we applied the conditional probability rule twice. Finally, in step (C), we conditioned on the event that $\{X_v>\frac{\varepsilon}{M(r)}y\}$ and applied the scale invariance property of the cost function (Corollary~\ref{scaleInvDelta}), independence of $\pareto$ and $\phi_{init}$, and the scaling property of the cascade probability (Lemma~\ref{cascadeProbInv}). 

Next, we evaluate the term $\PR{\Delta c_f^{(r)}>\frac{1}{z}~|~ \phi_{init} = u,X_v = 1 , \sum_{w\neq v \in \V} X_w\leq \varepsilon,D=d }$. Consider an arbitrary sequence $\bm{x}(\varepsilon)\geq \unitV{v}$ such that $\bm{x}(\varepsilon) \rightarrow \unitV{v}$ as $\varepsilon\downarrow 0$. As a consequence of Lemma~\ref{continuityFlowr} and the continuity of the flow cost function $\flowCost$, we obtain that
\begin{equation}\lim_{\varepsilon\downarrow 0}\Delta c_f^{(r)}(\bm{X} = \bm{x}(\varepsilon),\phi_{init} = u, D=d) =  \Delta c_f^{(r)}(\bm{X} = \unitV{v}, \phi_{init} = u,  D=d).\label{eq:30}\end{equation}
Moreover, from Lemma~\ref{BoundFlow}, we obtain that for all $\bm{x}(\varepsilon)\in [0,1],$
\begin{equation}\label{eq:31}\Delta c_f^{(r)}(\bm{X} = \bm{x}(\varepsilon),\phi_{init} = u, D=d) \leq M(r).\end{equation}
Hence, from \eqref{eq:30} and \eqref{eq:31}, it follows that for every $\varepsilon$ sufficiently small and $\bm{x}(\varepsilon)\in [0,1]$, there exists $M_{\bm{x}(\varepsilon)}(u,s)\leq 2M(r)$ with $\lim_{\varepsilon\downarrow0}M_{\bm{x}(\varepsilon)}(u,s) = 0$ such that
\[ \left| \Delta c_f^{(r)}(\bm{X} = \unitV{v}, \phi_{init} = u,  D = d) -  \Delta c_f^{(r)}(\bm{X} = \bm{x}(\varepsilon), \phi_{init} = u, D=d) \right|\leq  M_{\bm{x}(\varepsilon)}(u,s). \]
Now, for every $\varepsilon>0$, let $$M(\varepsilon, u,s) := \sup_{\bm{x}(\varepsilon)\in [0,1]^{\nv}, x_v(\varepsilon) = 1}M_{\bm{x}(\varepsilon)}(u,s),$$
which is well-defined, because we are taking a supremum over a set that is bounded from above by $2M(r)$. We use this to bound the probability that the cascade cost is larger than $1/z$. Let $Z^{(r)}(v,u,s)$ be the congestion cost at the $r-$th stage of the cascade, given that $X_v = 1$, $X_w = 0$ for $w\in \V/{v}$, $\phi_{init} = u$, and cascade $s$ has occurred, i.e., \begin{equation}Z^{(r)}(v,u,s) := \Delta c_f^{(r)}(\bm{X} = \bm{e}_v, \phi_{init} = u, D = d).\label{eq:DefZ}\end{equation} Using this definition, we find

\begin{align}
\begin{split}\label{bounds2}
   &\PR{Z^{(r)}(v,u, s) - M(\varepsilon, u, s)>\frac{1}{z}~\Big|~ \phi_{init} = u,X_v = 1 , \sum_{w\neq v \in \V} X_w\leq \varepsilon, D =d}\\&\hspace{0.7cm}\leq \PR{\Delta c_f^{(r)}>\frac{1}{z}~\Big|~ \phi_{init} = u,X_v = 1 , \sum_{w\neq v \in \V} X_w\leq \varepsilon, D = d }\\
    &\hspace{1.4cm}\leq \PR{Z^{(r)}(v,u, s) + M(\varepsilon, u, s)>\frac{1}{z}~\Big|~ \phi_{init} = u,X_v = 1 , \sum_{w\neq v \in \V} X_w\leq \varepsilon, D = d }.\end{split}\end{align}

Next, let $U(v,\varepsilon, u, s):=\left(Z^{(r)}(v,u,s) + M(\varepsilon, u,s)\right)^{-1}$ and $L(v,\varepsilon,u,s):=\left(Z^{(r)}(v,u, s) - M(\varepsilon, u,s)\right)^{-1}$. Observe that \[\lim_{\varepsilon\downarrow 0}U(v,\varepsilon, u, s) = \lim_{\varepsilon\downarrow 0}L(v,\varepsilon, u, s) = \left(Z^{(r)}(v,u,s)\right)^{-1}>0,\]
because every disruption cascade results in positive cost. Therefore, for all $\varepsilon$ small enough, we observe that $\varepsilon/(M(r))<U(v,\varepsilon,u,s)\leq L(v,\varepsilon,u,s)$. Hence, applying the upper bound in \eqref{bounds2} to \eqref{intEq2}, we find

\begin{align}\label{eq:32}\begin{split}
   & \PR{\Delta c_f^{(r)}>y, \sum_{w\neq v \in \V} X_w\leq \varepsilon X_v , D=d~|~ \phi_{init} = u} 
   \\
   &\hspace{0.7cm}\overset{\eqref{bounds2}}\leq \PR{X_v>\frac{\varepsilon}{M(r)}y}  \PR{D=d~|~ \phi_{init} = u,X_v = 1 , \sum_{w\neq v \in \V} X_w\leq \varepsilon}\\
   &\hspace{1.4cm}\cdot\int_{\frac{\varepsilon}{M(r)}}^{\infty}\PR{z>U(v,\varepsilon,u,s)~\Big|~ \phi_{init} = u,X_v = 1 , \sum_{w\neq v \in \V} X_w\leq \varepsilon,D=d } \\
   &\hspace{2.1cm}\cdot h_{X_v}\left(yz\Big|\frac{\varepsilon}{M(r)}y\right)\PR{\sum_{w\neq v \in \V} X_w\leq \varepsilon X_v ~|~ X_v = yz} \, \mathrm{d}z
   \\
   &\hspace{0.7cm} \overset{(\star)}= \PR{X_v>\frac{\varepsilon}{M(r)}y}  \PR{D=d~|~ \phi_{init} = u,X_v = 1 , \sum_{w\neq v \in \V} X_w\leq \varepsilon}\, \\
&\hspace{1.4cm}\cdot\int_{U(v,\varepsilon, u, s)}^{\infty}h_{X_v}\left(yz\Big|\frac{\varepsilon}{M(r)}y\right)\PR{\sum_{w\neq v \in \V} X_w\leq \varepsilon X_v ~|~ X_v = yz} \, \mathrm{d}z\\
   &\hspace{0.7cm} \overset{(\star\star)}\leq \PR{X_v>\frac{\varepsilon}{M(r)}y}  \PR{D=d~|~ \phi_{init} = u,X_v = 1 , \sum_{w\neq v \in \V} X_w\leq \varepsilon}\, \\
&\hspace{1.4cm}\cdot\int_{U(v,\varepsilon, u, s)}^{\infty}h_{X_v}\left(yz\Big|\frac{\varepsilon}{M(r)}y\right) \, \mathrm{d}z\\
   &\hspace{0.7cm} \overset{(\star\star\star)}= \PR{X_v> U(v,\varepsilon, u, s)y} \cdot\PR{D=d~|~ \phi_{init} = u,X_v = 1 , \sum_{w\neq v \in \V} X_w\leq \varepsilon}.
\end{split}
\end{align}
Note that in $(\star)$, we used the fact that $\PR{z>U(v,\varepsilon,u,s) ~\big|~ \phi_{init} = u, X_v = 1, \sum_{w\neq v \in \V} X_w\leq \varepsilon,D=d } = 1$ if $z>U(v,\varepsilon,u,s)$ and 0 otherwise, allowing us to change the integration boundary. In $(\star\star)$, we bounded the last term from above by 1, and in $(\star\star\star)$ we removed the conditioning and applied the law of total probability on $X_v$.

For the lower bound, we take a similar approach and obtain the following expression:
\begin{align}\label{eq:33}\begin{split}
   & \PR{\Delta c_f^{(r)}>y, \sum_{w\neq v \in \V} X_w\leq \varepsilon X_v , D = d~|~ \phi_{init} = u} \\
&\hspace{0.7cm}\overset{\eqref{bounds2}}\geq \PR{X_v>\frac{\varepsilon}{M(r)}y}  \PR{D=d~|~ \phi_{init} = u,X_v = 1 , \sum_{w\neq v \in \V} X_w\leq \varepsilon}\\
   &\hspace{1.4cm}\cdot\int_{\frac{\varepsilon}{M(r)}}^{\infty}\PR{z>L(v,\varepsilon,u,s)~\Big|~ \phi_{init} = u,X_v = 1 , \sum_{w\neq v \in \V} X_w\leq \varepsilon,D=d } \\
   &\hspace{2.1cm}\cdot h_{X_v}\left(yz\Big|\frac{\varepsilon}{M(r)}y\right)\PR{\sum_{w\neq v \in \V} X_w\leq \varepsilon X_v ~|~ X_v = yz} \, \mathrm{d}z
   \\
      &\hspace{0.7cm} = \PR{X_v>\frac{\varepsilon}{M(r)}y}\PR{D=d~|~ \phi_{init} = u,X_v = 1 , \sum_{w\neq v \in \V} X_w\leq \varepsilon} \,\\
&\hspace{1.4cm}\cdot\int_{L(v,\varepsilon, u, s)}^{\infty}h_{X_v}\left(yz\Big|\frac{\varepsilon}{M(r)}y\right)\PR{\sum_{w\neq v \in \V} X_w\leq \varepsilon X_v ~|~ X_v = yz} \, \mathrm{d}z \\
   &\hspace{0.7cm} \overset{(\star\star\star\star)}\geq \PR{X_v>\frac{\varepsilon}{M(r)}y} \PR{D=d~|~ \phi_{init} = u,X_v = 1 , \sum_{w\neq v \in \V} X_w\leq \varepsilon} \,\\
&\hspace{1.4cm}\cdot\int_{L(v,\varepsilon, u, s)}^{\infty}h_{X_v}\left(yz\Big|\frac{\varepsilon}{M(r)}y\right)\PR{\sum_{w\neq v \in \V} X_w\leq \varepsilon L(v,\varepsilon, u, s)y} \, \mathrm{d}z\\
   &\hspace{0.7cm}=\PR{X_v> L(v,\varepsilon,u, s)y}\PR{D = d~|~ \phi_{init} = u,X_v = 1 , \sum_{w\neq v \in \V} X_w\leq \varepsilon}\, \\
   &\hspace{1.4cm} \cdot \PR{\sum_{w\neq v \in \V} X_w\leq \varepsilon L(v,\varepsilon, u, s)y},
\end{split}
\end{align}
where in $(\star\star\star\star)$, we use the fact that $X_v\geq L(v,\varepsilon, u,s)y$ and then apply the independence of $X_w$ and $X_v$ for all $w\neq v\in \V$. 

We observe that $\PR{\sum_{w\neq v \in \V} X_w\leq \varepsilon L(v,\varepsilon, u, s)y}\rightarrow 1$ as $y\rightarrow \infty$ because $\varepsilon L(v,\varepsilon, u,s)$ is a constant w.r.t.\ $y$, so the right-hand side of the inequality approaches $\infty$. Hence, it follows from \eqref{eq:32} and \eqref{eq:33}, and the fact that $X_v$ is Pareto-tailed, that
\begin{align}\begin{split}
    &K\cdot L(v,\varepsilon, u, s)^{-\alpha}\cdot \PR{D = d~|~ \phi_{init} = u,X_v = 1 , \sum_{w\neq v \in \V} X_w\leq \varepsilon} \\
    &\hspace{0.7cm}\leq \lim_{y\rightarrow\infty} y^{\alpha } \cdot \PR{\Delta c_f^{(r)}>y, \sum_{w\neq v \in \V} X_w\leq \varepsilon X_v , D = d~|~ \phi_{init} = u} \\
    &\hspace{0.7cm}\leq K\cdot U(v,\varepsilon, u, s)^{-\alpha} \cdot \PR{D=d~|~ \phi_{init} = u,X_v = 1 , \sum_{w\neq v \in \V} X_w\leq \varepsilon}.\label{boundsEps1}\end{split}\end{align}

Both bounds in Equation~\eqref{boundsEps1} hold for all $\varepsilon$ sufficiently small. Moreover, recall that from definitions of $U(v,\varepsilon,u,s)$ and $L(v,\varepsilon,u,s)$, it follows that $U(v,\varepsilon,u,s)^{-1}\rightarrow Z^{(r)}(v,u,s)$ and $L(v,\varepsilon,u,s)^{-1}\rightarrow Z^{(r)}(v,u,s)$ as $\varepsilon\downarrow 0$. Hence, taking the limit $\varepsilon\downarrow0$ and applying Lemma~\ref{cascadeCont}, we obtain
\begin{align*}&\lim_{\varepsilon \downarrow 0}\lim_{y\rightarrow \infty} y^{\alpha } \cdot \PR{\Delta c_f^{(r)}>y, \sum_{w\neq v \in \V} X_w\leq \varepsilon X_v , D=d~|~ \phi_{init} = u} \\
&\hspace{0.7cm}= K\cdot Z^{(r)}(v,u, s)^{\alpha} \cdot \PR{D =d| \phi_{init} = u, \bm{X} = \unitV{v}}.\end{align*}
Thus, altogether we obtain that
\begin{align} \label{eq:proofCr}
    \begin{split}
        &\lim_{y\rightarrow \infty}y^{\alpha}\PR{\Delta c_f^{(r)}>y} \\
        &\hspace{0.7cm}= \lim_{y\rightarrow \infty}y^{\alpha}\sum_{v\in \V}\sum_{d\in \mathcal{D}}\int_{l_{m}}^{l_{M}}\PR{c_f^{(r)}>y, \sum_{w\neq v \in \V} X_w \leq \varepsilon X_v, D=d ~|~ \phi_{init} = u}p_{init}(u) \, \mathrm{d}u + \OO{y^{-\alpha}} \\
        &\hspace{0.7cm}=\sum_{v\in \V}\sum_{d\in \mathcal{D}}\int_{l_{m}}^{l_{M}} p_{init}(u)\lim_{\varepsilon\downarrow 0}\lim_{y\rightarrow \infty}y^{\alpha}\cdot \PR{c_f^{(r)}>y, \sum_{w\neq v \in \V} X_w\leq \varepsilon X_v, D=d|\phi_{init} = u} \, \mathrm{d}u\\
        &\hspace{0.7cm}= \sum_{v\in \V}\sum_{d\in \mathcal{D}}K\int_{l_{m}}^{l_{M}} Z^{(r)}(v,u, s)^{\alpha}\cdot\PR{D=d | \phi_{init} = u, \bm{X} = \unitV{v}}\cdot p_{init}(u)\, \mathrm{d}u
        .
    \end{split}
\end{align}
Recall that, conditioned on $\pareto, \phi_{init},$ and $D$, the congestion cost $\Delta c_f^{(r)}$ is no longer random, but deterministic. Hence, the following expression is true: \[\left(Z^{(r)}(v,u,s)\right)^\alpha\overset{\eqref{eq:DefZ}} = \left(\Delta c_f^{(r)}(\pareto = \unitV{v},\phi_{init} = u, D = d ) \right)^{\alpha}=  \E\left[\left(\Delta c_f^{(r)}\right)^\alpha \,\Big|\, \pareto = \unitV{v}, \phi_{init} = u, D = d\right].\] 
Therefore, applying the law of total expectation twice, Equation~\eqref{eq:proofCr} yields
\begin{align*} 
    \begin{split}
        &\lim_{y\rightarrow \infty}y^{\alpha}\PR{\Delta c_f^{(r)}>y} \\
        &\hspace{0.7cm}= K\sum_{v\in \V}\sum_{d\in \mathcal{D}}\int_{l_{m}}^{l_{M}} \E\left[\left(\Delta c_f^{(r)}\right)^\alpha\Big| \pareto = \unitV{v}, \phi_{init} = u, D = d\right]\PR{D=d | \phi_{init} = u, \bm{X} = \unitV{v}}\cdot p_{init}(u)\, \mathrm{d}u
        \\
        &\hspace{0.7cm}= K\sum_{v\in \V}\int_{l_{m}}^{l_{M}} \E\left[\left(\Delta c_f^{(r)}\right)^\alpha\Big| \pareto = \unitV{v}, \phi_{init} = u\right]\cdot p_{init}(u)\, \mathrm{d}u\\
        &\hspace{0.7cm}= K\sum_{v\in \V}\E\left[\left(\Delta c_f^{(r)}\right)^\alpha\Big| \pareto = \unitV{v}\right] = C^{(r)}.
    \end{split}
\end{align*}
Note that $C^{(r)}>0$, because $K>0$ and every disruption cascade necessarily increases the flow cost at every stage of the cascade, implying that $\Delta c_f^{(r)}>0$. Hence, Equation~\eqref{eq:mainThm} holds.

To prove Equation~\eqref{eq:DeltaEnd}, we use the fact that an arbitrary cascade has a finite number of stages bounded by $\sum_{e\in \EE} n_e$. Hence, for $r^* = \sum_{e\in \EE} n_e$, we have that $\Delta c_f^{(end)} = \Delta c_f^{(r^*)}$. This means that 
\[\PR{\Delta c_f^{(end)}>y} = \PR{\Delta c_f^{(r^*)}>y}\sim C^{(r^*)}y^{-\alpha}.\]
Moreover for all $r\geq r^*$, $v\in \V$, $d\in \mathcal{D}$, and $u\in [l_m, l_M]$ we have that $Z^{(r)}(v,u,s) = Z^{(r^*)}(v,u,s)$. Hence, 
\[\lim_{r\rightarrow \infty} Z^{(r)}(v,u,s) =Z^{(r^*)}(v,u,s)\]
and 
\[\lim_{r\rightarrow \infty} C^{(r)} = C^{(r^*)}>0.\]
This completes the proof of Theorem \ref{mainTheorem}.

\end{proof}
\subsection{Strict convexity of the flow cost function} \label{appStrictConvex}
In this section, we show that the flow cost function is strictly convex when $\flowLimVec\in \R^{\nee}$ and $\beta>0$. To this end, we first show a more general result, given in the following lemma.
\begin{lemma}
   A function $c$: $\R^{n}_+\rightarrow \R_+$ of the form
   \[c(\bm{x}) = \sum_{i = 1}^n a_i x_i + \frac{b_i}{\beta+1}x_i^{\beta+1},  \]
   for some constants $a_i, b_i, \beta>0$ for all $i = 1,\dots, n$, is strictly convex.\label{LemmaStrictConvex}
\end{lemma}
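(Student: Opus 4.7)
The plan is to verify the definition of strict convexity directly, reducing the multivariate claim to the single-variable fact that $t \mapsto t^{\beta+1}$ is strictly convex on $\R_+$ whenever $\beta>0$.

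First, I would dispose of the single-variable ingredient. The function $g(t) = t^{\beta+1}$ is differentiable on $\R_+$ with $g'(t) = (\beta+1)t^{\beta}$. Since $\beta>0$, the map $t\mapsto t^{\beta}$ is strictly increasing on $\R_+$, so $g'$ is strictly increasing, which is equivalent to $g$ being strictly convex on $\R_+$. The linear terms $t \mapsto a_i t$ are convex (though not strictly so), and any nonnegative linear combination of convex functions is convex. So each summand $a_i x_i + \tfrac{b_i}{\beta+1} x_i^{\beta+1}$ is itself strictly convex in its own variable $x_i$ because the nonlinear piece is strictly convex with a strictly positive coefficient $b_i/(\beta+1)$.

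Next, I would assemble the multivariate conclusion. Fix $\bm{x}, \bm{y}\in \R_+^{n}$ with $\bm{x}\neq \bm{y}$ and $\lambda\in(0,1)$. For every index $i$, the affine contribution gives exact equality $a_i(\lambda x_i + (1-\lambda)y_i) = \lambda a_i x_i + (1-\lambda)a_i y_i$, and by convexity of $t\mapsto t^{\beta+1}$,
\[
\bigl(\lambda x_i + (1-\lambda) y_i\bigr)^{\beta+1} \;\leq\; \lambda x_i^{\beta+1} + (1-\lambda)y_i^{\beta+1},
\]
with strict inequality precisely when $x_i\neq y_i$. Because $\bm{x}\neq \bm{y}$, at least one coordinate $i^{*}$ satisfies $x_{i^{*}}\neq y_{i^{*}}$, contributing a strict inequality (with the strictly positive weight $b_{i^{*}}/(\beta+1)$) while all remaining coordinates contribute at worst equality. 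Summing over $i$ yields $c(\lambda \bm{x} + (1-\lambda)\bm{y}) < \lambda c(\bm{x}) + (1-\lambda) c(\bm{y})$, which is strict convexity.

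The proof is essentially routine; the only mild subtlety is that a Hessian-based argument is awkward because the diagonal Hessian entries $b_i \beta x_i^{\beta-1}$ can vanish at $x_i=0$ (when $\beta>1$) or blow up there (when $\beta<1$), so positive definiteness of $\nabla^{2}c$ on the whole domain $\R_+^{n}$ fails or is ill-defined at the boundary. Working directly from the definition of strict convexity, and separating the affine and strictly convex pieces as above, sidesteps this entirely and is the cleanest route.
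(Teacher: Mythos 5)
Your proof is correct, and it follows the same overall strategy as the paper: decompose $c$ into the coordinate-wise functions $c_i(x_i) = a_i x_i + \tfrac{b_i}{\beta+1}x_i^{\beta+1}$, establish strict convexity of each $c_i$, and then sum. The difference lies in how the single-variable step is justified, and your version is actually the more careful one. The paper computes $\tfrac{d^2}{dx_i^2}c_i(x_i) = b_i\beta x_i^{\beta-1}$ and invokes the second-order criterion, but as you observe this quantity vanishes at $x_i=0$ when $\beta>1$ and is undefined there when $\beta<1$, so the paper's argument strictly speaking only covers the open orthant; your appeal to the strict monotonicity of $g'(t)=(\beta+1)t^{\beta}$ on all of $\R_+$ closes that boundary case cleanly. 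You are also more precise in the summation step: the paper writes a strict inequality for every summand $c_i(\xi x_i+(1-\xi)y_i)$, which is not literally valid when $x_i=y_i$ for some coordinates, whereas you correctly isolate a single index $i^{*}$ with $x_{i^{*}}\neq y_{i^{*}}$ to supply the strictness and let the remaining coordinates contribute only non-strict inequalities. Both refinements are minor but genuine improvements over the printed proof.
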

\begin{proof}
    Let $c_i(x_i):= a_ix_i + \frac{b_i}{\beta + 1} x_i^{\beta + 1}$ and observe that $c(\bm{x}) = \sum_{i = 1}^n c_i(x_i)$.  We show that $c(\bm{x})$ is strictly convex by first showing that $c_i(x_i)$ is strictly convex for all $i\in \{1,\dots, n\}$. From the definition of $c_i$, we obtain that
    \[\frac{d^2}{dx_i^2}c_i(x_i) = b_i\beta x_i^{\beta - 1}.\]
    Since $b_i, \beta>0$ and $x_i\in \R_+$, we find that $\frac{d^2}{dx_i^2}c_i(x_i)>0$ for all $x_i>0$. Hence, from the second-order criterion \cite{boyd2004convex}, we conclude that $c_i(x_i)$ is strictly convex for all $i\in \{1,\dots, n\}$. It remains to show that $c(\bm{x})$ is strictly convex, which we show from the definition. Let $\xi\in [0,1]$ and $\bm{x},\bm{y}\in \R^n_+$ where $\bm{x}\neq\bm{y}$. Then,
\begin{align*}
&c(\xi \bm{x} + (1-\xi)\bm{y}) = \sum_{i = 1}^n c_i(\xi x_i + (1-\xi)y_i)< \sum_{i = 1}^n \xi c_i(x_i) + (1-\xi)c_i(y_i) = \xi c(\bm{x}) + (1-\xi)c(\bm{y}),
\end{align*}
where we applied the definition of strict convexity for functions $c_i$. Hence, by definition, $c(\bm{x})$ is strictly convex as well. 

\end{proof}
Since for any fixed flow capacity vector $\flowLimVec$, the cost function $c_f$ defined in~\eqref{costEquation} satisfies the requirements of Lemma~\ref{LemmaStrictConvex}, we conclude that $c_f$ is strictly convex.  

\subsection{Optimal Flow Problem with linear cost function} \label{appendixSP}
In this section, we discuss the Optimal Flow Problem~\eqref{optFlowProblem} with linear cost function $c_f(\flowM) = \sum_{e\in \EE}d_ef_e$ in more detail. Recall that this linear problem arises as the Optimal Flow Problem when the capacity vector $\flowLimVec = \infty$, i.e.,  $\flowM^*(\bm{X}, \infty)$. Here, we discuss this problem from a different perspective, in particular, we show that this problem yields a flow distribution where any $v-w$ traffic follows only the shortest $v-w$ paths. 

Before we make this statement more precise, we define the notions of $v-w$ path, and the shortest $v-w$ path. 
\begin{definition}[$v - w$ path]
A $v-w$ path is a sequence of edges  $\{e_1 = (v,z_1), e_2 = (z_1,z_2) \dots,e_{l-1} = (z_{l-2}, z_{l-1}), e_l = (z_{l-1}, w)\}$ such that $e_i\in \EE$ for all $i\in \{1,\dots, l\}$. The set of all $v-w$ paths is denoted by $\mathcal{P}(v,w)$. 
\end{definition}
\begin{definition}[Shortest $v-w$ path]
A $v-w$ path $p(v,w)\in \mathcal{P}(v,w)$ is the shortest $v-w$ path if $$c(p(v,w)) = \min_{p\in \mathcal{P}(v,w)} c(p),$$
where $c(p):= \sum_{e\in p} d_e$. Furthermore, the set of all shortest $v-w$ paths is denoted by $\mathcal{SP}(v,w)$. 
\end{definition}
Note that since we assume that $d_e>0$ for all $e\in \EE$, there are finitely many shortest paths between any pair of vertices, i.e., $|\mathcal{SP}(v,w)|<\infty$ for all $v,w\in \V$. 

With these definitions, we can state the first result, which shows that any optimal solution to \eqref{optFlowProblem} with the linear cost function yields flow distribution that follows only shortest paths for every pair of vertices $v,w\in \V$. 
\begin{lemma}\label{SP}
Let $\flowM^*(\bm{X})$ be a solution to Problem~\eqref{optFlowProblem} with cost function $c_f(\flowM) = \sum_{e\in \EE}d_ef_e$ and consider the unique decomposition of $\flowM^*(\bm{X})$ into $v-w$ paths $p(v,w)$ with corresponding flow intensity $f(p)$ on path $p$. Then, for every $v,w\in \V$ and every $p\in \mathcal{P}(v,w),$
\[f(p)>0 \implies p\in \mathcal{SP}(v,w).\]
\end{lemma}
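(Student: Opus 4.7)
\textbf{Proof strategy for Lemma~\ref{SP}.} The plan is to argue by contradiction using a rerouting (local exchange) argument. Suppose some $v-w$ path $p$ appearing in the decomposition of $\flowM^*(\bm{X})$ has $f(p)>0$ but $p\notin\mathcal{SP}(v,w)$. Since $\mathcal{SP}(v,w)$ is non-empty (as $\G$ is connected) and $c(p')<c(p)$ for any $p'\in\mathcal{SP}(v,w)$, I fix one such $p'$ and define a perturbed flow $\widetilde\flowM$ by shifting $\epsilon$ units of the relevant commodity from $p$ to $p'$, for some $0<\epsilon\leq f(p)$. Because both paths share the same origin $v$ and destination $w$, the balance constraint $\incM\widetilde\flowM=\bm{U}$ continues to hold, and choosing $\epsilon\leq f(p)$ preserves non-negativity, so $\widetilde\flowM$ is feasible in~\eqref{optFlowProblem}.

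The cost comparison is then a direct computation: since the linear cost $c_f(\flowM)=\sum_{e\in\EE}d_e f_e$ is additive along paths, rerouting $\epsilon$ units from $p$ to $p'$ changes the cost by exactly $\epsilon(c(p')-c(p))<0$. This contradicts the optimality of $\flowM^*(\bm{X})$, so no such path $p$ can carry positive flow. Consequently every path appearing in the decomposition with $f(p)>0$ must lie in $\mathcal{SP}(v,w)$.

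\textbf{Two technical points to handle carefully.} First, in the multi-commodity setup, the rerouting must be performed within a single commodity so that the commodity-specific flow balance constraints encoded in $\incM\flowM=\bm{U}$ are preserved. Since in our model only commodity $k=v$ originates at vertex $v$, any $v-w$ path carries commodity $v$, and the shift from $p$ to $p'$ can be done entirely within commodity $v$; concretely, one sets $\widetilde f_{e,v} = f^*_{e,v} - \epsilon$ for $e\in p\setminus p'$, $\widetilde f_{e,v} = f^*_{e,v}+\epsilon$ for $e\in p'\setminus p$, and leaves all other entries unchanged. Second, the lemma speaks of a \emph{unique} path decomposition of $\flowM^*(\bm{X})$; this is a mild abuse since in general flow decompositions are non-unique, but the standard path-and-cycle decomposition exists for any feasible non-negative multi-commodity flow and the argument applies to any such decomposition, so the conclusion on which paths may carry positive flow does not depend on which decomposition is picked.

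The main obstacle, if any, is simply being precise about the decomposition and ensuring the rerouting stays within the non-negative orthant; once $\epsilon\in (0,f(p)]$ is fixed the cost computation is immediate. No assumption beyond $d_e>0$ (which ensures $c(p)>0$ and finiteness of $|\mathcal{SP}(v,w)|$) and the linearity of the cost is needed.
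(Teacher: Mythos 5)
Your proposal is correct and follows essentially the same argument as the paper's proof: a contradiction via rerouting flow from a non-shortest path $p$ onto a shortest path $\tilde{p}\in\mathcal{SP}(v,w)$, with the strict cost decrease following from linearity of $c_f$ (the paper simply takes $\epsilon=f(p)$, i.e., reroutes the entire flow on $p$). Your additional remarks on staying within a single commodity and on the non-uniqueness of the path decomposition are sensible refinements of the same idea.
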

\begin{proof}
    We prove this by contradiction. Suppose that for some $v,w\in \V$, there exists $p\in \mathcal{P}(v,w)\setminus\mathcal{SP}(v,w)$ with $f(p)>0$. Next, choose some $\tilde{p}\in \mathcal{SP}(v,w)$ and construct a new flow $\tilde{\flowM}$ by rerouting the flow on path $p$ to path $\tilde{p}$. In particular, we set $\tilde{f}(p) = 0$, $\tilde{f}(\tilde{p}) = f(p) +  f(\tilde{p})$ and $\tilde{f}(r) = f(r)$ for every path $r\in \cup_{(v,w)\in V}\mathcal{P}(v,w)/\{p, \tilde{p}\}$. Note that $\tilde{\flowTot}$ yields a feasible flow distribution. 
    
    Next, since $p\not\in \mathcal{SP}(v,w)$, we know that
    \begin{equation}\label{costPtilde}\sum_{e\in p} d_e > \sum_{e\in \tilde{p}} d_e.\end{equation}
    Hence, by the linearity of the cost function, we find that 
    \[c_f(\tilde{\flowM}) = c_f(\flowM^*(\bm{X})) - \sum_{e\in p}d_e f(p) + \sum_{e\in \tilde{p}} d_e f(p) = c_f(\tilde{\flowM}) + f(p)\left(\sum_{e\in \tilde{p}} d_e - \sum_{e\in p} d_e\right)< c_f(\flowM^*(\bm{X})), \]
    where in the last step we apply \eqref{costPtilde} and the fact that $f(p)>0$. However, this is a contradiction because $\flowM^*(\bm{X})$ is optimal. 
\end{proof}
In view of the shortest path interpretation given in Lemma~\ref{SP}, it becomes apparent that Problem~\eqref{optFlowProblem} with input $(\bm{X}, \infty)$ may have multiple optimal solutions; this is why in our model we assume that $\flowM^*(\bm{X}, \infty)$ is the shortest path flow that assigns equal flow intensity to all shortest $v-w$ paths for all $v,w\in \V$. In particular, for all $v\neq w\in \V$ and all $p\in \mathcal{SP}(v,w)$, $f(p) = u_{v,w}(\bm{X})/|\mathcal{SP}(v,w)|$. Under this assumption, $\flowM^*(\bm{X}, \infty)$ is uniquely defined and we can show that it has an additive property, as stated in the following corollary.
\begin{corollary}\label{additivityCor}
    For any $\bm{X}_1,\bm{X}_2\in \R_+^{\nv}$, the Wardrop UE flow with unlimited capacity can be decomposed as follows
\begin{equation}\flowM^*(\bm{X}_1 + \bm{X}_2, \infty) = \flowM^*(\bm{X}_1, \infty) + \flowM^*(\bm{X}_2, \infty).\label{additivity1}\end{equation}
\end{corollary}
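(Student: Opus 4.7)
The plan is to define the candidate flow $\tilde{\flowM} := \flowM^*(\bm{X}_1, \infty) + \flowM^*(\bm{X}_2, \infty)$ and show that it coincides with $\flowM^*(\bm{X}_1 + \bm{X}_2, \infty)$ by verifying that it is feasible, optimal, and respects the equal-split convention used to single out $\flowM^*(\cdot, \infty)$. The argument rests on two simple observations: the net travel matrix $\bm{U}(\bm{X}) = (\bm{Q} - \bm{I}_{\nv})\,\text{diag}(\bm{X})$ is linear in $\bm{X}$, and the shortest-path sets $\mathcal{SP}(v, w)$ depend only on the edge weights $(d_e)_{e \in \EE}$ and not on $\bm{X}$.

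Feasibility follows immediately from the linearity of $\bm{U}$: $\incM \tilde{\flowM} = \bm{U}(\bm{X}_1) + \bm{U}(\bm{X}_2) = \bm{U}(\bm{X}_1 + \bm{X}_2)$, and $\tilde{\flowM} \geq 0$ as a sum of nonnegative matrices. For optimality, Lemma~\ref{SP} implies that for each commodity $v$ and destination $w$, each of $\flowM^*(\bm{X}_1, \infty)$ and $\flowM^*(\bm{X}_2, \infty)$ routes the commodity-$v$ traffic to $w$ exclusively along paths in $\mathcal{SP}(v, w)$. The sum $\tilde{\flowM}$ therefore also routes flow only along shortest paths, which attains the natural lower bound $\sum_{v, w \in \V} u_{w, v}(\bm{X}_1 + \bm{X}_2)\, \min_{p \in \mathcal{P}(v, w)} c(p)$ on the linear cost $\sum_{e \in \EE} d_e f_e$ for any flow with these commodity demands; hence $\tilde{\flowM}$ is optimal.

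It remains to match the equal-split convention. For each pair $v, w \in \V$ with $v \neq w$, the canonical path decomposition of $\flowM^*(\bm{X}_i, \infty)$ places intensity $u_{w, v}(\bm{X}_i) / |\mathcal{SP}(v, w)|$ on every $p \in \mathcal{SP}(v, w)$. Since $u_{w, v}(\bm{X}) = q_{w,v} X_v$ is linear in $\bm{X}$, summing over $i = 1, 2$ yields intensity $u_{w, v}(\bm{X}_1 + \bm{X}_2) / |\mathcal{SP}(v, w)|$ on each $p \in \mathcal{SP}(v, w)$ in $\tilde{\flowM}$, which is precisely the equal-split prescription for input $\bm{X}_1 + \bm{X}_2$. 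Uniqueness under this convention then gives $\tilde{\flowM} = \flowM^*(\bm{X}_1 + \bm{X}_2, \infty)$. There is no substantial obstacle in this argument; the only care needed is to work with commodity-indexed path decompositions rather than aggregate edge flows when invoking Lemma~\ref{SP}, so that the linearity in $\bm{X}$ is visible path by path.
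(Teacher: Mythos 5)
Your proof is correct and follows essentially the same route as the paper: both arguments reduce to the facts that the shortest-path sets $\mathcal{SP}(v,w)$ do not depend on $\bm{X}$ and that the equal-split path intensities $u_{v,w}(\bm{X})/|\mathcal{SP}(v,w)|$ are linear in $\bm{X}$, so they add. The paper simply computes the edge flows of all three canonical solutions from that formula and observes additivity, whereas you additionally verify feasibility and optimality of the sum before invoking uniqueness under the equal-split convention — a harmless extra check that the paper leaves implicit.
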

\begin{proof}
    Let $f(p, \bm{X})$ denote the flow intensity on path $p\in \cup_{(v,w)\in V}\mathcal{P}(v,w)$ corresponding to $\flowM^*(\bm{X}, \infty)$. By construction, all shortest $v-w$ paths admit the same flow intensity, which means that for any $p\in \mathcal{P}(v,w)$, \[f(p, \bm{X}) = u_{v,w}(\bm{X})/|\mathcal{SP}(v,w)|.\]
    Moreover, we observe that for all $e\in \EE, v\in \V$,\[f^*_{e,v}(\bm{X}, \infty) = \sum_{w\in \V}\sum_{p\in \mathcal{SP}(v,w): e\in p} f(p, \bm{X}).\]
    Hence, we find that
    \begin{align*}
        &f^*_{e,v}(\bm{X}_1 + \bm{X}_2, \infty) = \sum_{w\in \V}\sum_{p\in \mathcal{SP}(v,w): e\in p} f(p, \bm{X}_1 + \bm{X}_2)\\
        &\hspace{0.7cm}=\sum_{w\in \V}\sum_{p\in \mathcal{SP}(v,w): e\in p} u_{v,w}(\bm{X}_1 + \bm{X}_2)/|\mathcal{SP}(v,w)|\\
        &\hspace{0.7cm}=\sum_{w\in \V}\sum_{p\in \mathcal{SP}(v,w): e\in p} u_{v,w}(\bm{X}_1)/\mathcal{SP}(v,w) + u_{v,w}(\bm{X}_2)/|\mathcal{SP}(v,w)|\\
        &\hspace{0.7cm}= f^*_{e,v}(\bm{X}_1,\infty) + f^*_{e,v}(\bm{X}_2,\infty), 
    \end{align*}
where we use the fact that $\bm{U}(\bm{X})$ is a linear function of $\bm{X}$. Hence, Equation~\eqref{additivity1} holds. 
\end{proof} 
\end{document}